\newcites{x}{References used in Appendix}
\definecolor{kgnote}{rgb}{1.0000,0.0000,0.0000}
\newtheorem{observation}{Observation}
\newtheorem{fact}{Fact}
\newcommand{\Occ}{\mathit{Occ}}
\newcommand{\vOcc}{\mathit{vOcc}}
\newcommand{\derive}{\mathit{val}}
\newlength\savedwidth
\newcommand{\rev}[1]{#1^{R}}
\newcommand{\LCPQ}{\mathsf{LCP}}
\newcommand{\LCSQ}{\mathsf{LCS}}
\newcommand{\LCEQ}{\mathsf{LCE}}
\newcommand{\shrink}[2]{\mathit{Shrink}_{#1}^{#2}}
\newcommand{\xshrink}[2]{\mathit{XShrink}_{#1}^{#2}}
\newcommand{\pow}[2]{\mathit{Pow}_{#1}^{#2}}
\newcommand{\xpow}[2]{\mathit{XPow}_{#1}^{#2}}
\newcommand{\uniq}[1]{\mathit{Uniq}(#1)}
\newcommand{\id}[1]{\mathit{id}(#1)}
\newcommand{\encblock}[1]{\mathit{Eblock}(#1)}
\newcommand{\encblockd}[1]{\mathit{Eblock}_{d}(#1)}
\newcommand{\encpow}[1]{\mathit{Epow}(#1)}
\newcommand{\deltaLR}[1]{\Delta_{#1}}
\newcommand{\sig}[1]{\mathit{Sig}(#1)}
\newcommand{\val}[1]{\mathit{val}(#1)}
\newcommand{\valp}[1]{\mathit{val}^{+}(#1)}
\newcommand{\lcp}[2]{\mathsf{LCP}(#1,#2)}
\newcommand{\lcs}[2]{\mathsf{LCS}(#1,#2)}
\title{Dynamic index, LZ factorization, and LCE queries in compressed space}
\author{
Takaaki Nishimoto\inst{1}
\and
Tomohiro I\inst{2}
\and
Shunsuke Inenaga\inst{1}
\and 
Hideo Bannai\inst{1}
\and \\
Masayuki Takeda\inst{1}
}
 \institute{
   Department of Informatics, Kyushu University, Japan \\
   \email{\{takaaki.nishimoto, inenaga, bannai, takeda\}@inf.kyushu-u.ac.jp}
   \and
   Kyushu Institute of Technology, Japan \\
   \email {tomohiro@ai.kyutech.ac.jp}
 }
\date{}
\begin{document}

\maketitle

\begin{abstract}
In this paper, we present the following results:
(1)
We propose a new \emph{dynamic compressed index} of $O(w)$ space, that
supports searching for a pattern $P$ in the current text in
$O(|P| f(M,w) + \log w \log |P| \log^* M (\log N + \log |P| \log^* M) + \mathit{occ} \log N)$ time and
insertion/deletion of a substring of length $y$ in
$O((y+ \log N\log^* M)\log w \log N \log^* M)$ time, where
$N$ is the length of the current text, $M$ is the maximum length of the dynamic text,
$z$ is the size of the Lempel-Ziv77 (LZ77) factorization of the
current text, $f(a,b) = O(\min \{ \frac{\log\log a \log\log b}{\log\log\log a}, \sqrt{\frac{\log b}{\log\log b}} \})$ 
and $w = O(z \log N \log^*M)$.
(2) We propose a new space-efficient LZ77 factorization algorithm for
a given text of length $N$, which runs in $O(N f(N,w') + z \log w'
\log^3 N (\log^* N)^2)$ time with $O(w')$ working space, where $w' =O(z \log N \log^* N)$.
(3) We propose a data structure of $O(w)$ space which supports longest
common extension (LCE) queries on the text in $O(\log N + \log \ell \log^* N)$
time, where $\ell$ is the output LCE length.
%
On top of the above contributions, we show several applications of our
data structures which improve previous best known results
on grammar-compressed string processing.
\end{abstract}

\section{Introduction}

\subsection{Dynamic compressed index}

In this paper, we consider the \emph{dynamic compressed text indexing problem}
of maintaining a compressed index for a text string that can be modified.
Although there exits several dynamic \emph{non-compressed} text indexes
(see e.g. ~\cite{DBLP:conf/focs/SahinalpV96,DBLP:conf/soda/AlstrupBR00} for recent work), there has been little work for the compressed variants.
Hon et al.~\cite{DBLP:conf/dcc/HonLSSY04} proposed 
the first dynamic compressed index of
$O(\frac{1}{\epsilon}(NH_0+N))$ bits of
space which supports searching of $P$ in $O(|P| \log^{2}
N(\log^{\epsilon} N + \log |\Sigma|) + \mathit{occ} \log^{1+\epsilon} N)$ time
and insertion/deletion of a substring of length $y$ in
$O((y+\sqrt{N})\log^{2+\epsilon} N)$ amortized time, where $0 <
\epsilon \leq 1$ and $H_0 \leq \log |\Sigma|$ denotes the zeroth order empirical entropy of the text of length $N$~\cite{DBLP:conf/dcc/HonLSSY04}.
Salson et al.~\cite{DBLP:journals/jda/SalsonLLM10} also proposed a dynamic compressed index, called \emph{dynamic FM-Index}. Although their approach works well in practice,
updates require $O(N \log N)$ time in the worst case.
To our knowledge, these are the only existing dynamic compressed indexes to date.

In this paper, we propose a new dynamic compressed index, as follows:
\begin{theorem}\label{theo:dynamic_index}
Let $M$ be the maximum length of the dynamic text to index,
$N$ the length of the current text $T$,
and $z$ the number of factors in the Lempel-Ziv 77 factorization of $T$
without self-references.
Then, there exist a dynamic index of $O(w)$ space 
which supports searching of a pattern $P$ 
in 
$O(|P| f_{\mathcal{A}} + \log w \log |P| \log^* M (\log N + \log |P| \log^* M) + \mathit{occ} \log N)$
time and
insertion/deletion of a substring of length $y$ in amortized $O((y+ \log N
\log^* M)\log w \log N \log^* M)$ time, where $w = O(z \log N \log^* M)$ and 
$f_{\mathcal{A}} = O(\min \{ \frac{\log\log M \log\log w}{\log\log\log M}, \sqrt{\frac{\log w}{\log\log w}} \})$.
\end{theorem}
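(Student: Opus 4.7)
The plan is to represent the current text $T$ by a \emph{signature encoding} $\mathcal{G}$, i.e., a locally consistent context-free grammar deriving only $T$ whose derivation hierarchy has height $O(\log N)$ and total size $O(w) = O(z\log N\log^* M)$. Local consistency here means that two equal occurrences of a substring of $T$ are parsed identically except for fringes of length $O(\log^* M)$ at each of the $O(\log N)$ levels; this is the property that bounds the number of distinct non-terminals by $O(z\log N\log^* M)$ in terms of the LZ77 measure $z$. I would store $\mathcal{G}$ together with (i) a dynamic predecessor dictionary over the signatures that answers lookups in time $f_{\mathcal{A}}$ and supports insertions/deletions in $O(\log w)$ time, (ii) a balanced navigation tree per non-terminal that supports splits and concatenations of derivations in $O(\log N)$ time, and (iii) a compacted trie on the derivations of the non-terminals, equipped with a two-dimensional range-reporting structure over the occurrences of each non-terminal, used during pattern matching.

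For updates, to insert or delete a substring of length $y$ I would splice the signature encoding at the edit boundary and re-parse level by level. By local consistency, on the bottom level only $O(y+\log N\log^* M)$ signatures are touched, and on each of the $O(\log N)$ higher levels only $O(\log N\log^* M)$ signatures change. Each signature access costs one predecessor query plus at most $O(\log w)$ structural work to update the navigation tree and the occurrence lists; summing across levels amortizes to the stated bound $O((y+\log N\log^* M)\log w\log N\log^* M)$.

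For queries, given a pattern $P$ I would first compute its signature encoding in $O(|P|\,f_{\mathcal{A}})$ time by scanning $P$ once and looking up signatures level by level through the predecessor dictionary; the $O(\log^* M)$ fringe per level contributes only constant overhead after amortization over the $|P|$ characters. Every occurrence of $P$ in $T$ corresponds to one of $O(\log |P|)$ canonical cut points inside $P$, read off from its signature encoding. For each cut point I would locate the unique highest non-terminal of $\mathcal{G}$ whose derivation straddles the cut, then use the two-dimensional range-report on its occurrence list to enumerate matches, at cost $O(\log w\log^* M)$ per cut-point lookup plus $O(\log N)$ per reported occurrence, matching the stated query time.

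The hardest part will be the amortized analysis of the updates: a splice at an edit boundary can in principle cascade all the way up the hierarchy, and one must prove that the fringe introduced at each level stays bounded by $O(\log^* M)$ and that the grammar does not asymptotically grow over many edits. I plan to argue this via a potential function on the total number of signatures, together with the observation that each re-parse at a level either shrinks the affected fringe or advances it by a constant, so that the cost telescopes across the $O(\log N)$ levels and yields the claimed $O(\log N\log^* M)$ per-level overhead while preserving $w = O(z\log N\log^* M)$.
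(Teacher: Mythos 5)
Your high-level architecture (signature encoding of height $O(\log N)$ and size $O(w)$, an LCE/navigation facility over it, a two-dimensional range-reporting structure indexing ``primary occurrences'' straddling rule boundaries) matches the paper, but the quantitative accounting has gaps that make the stated bounds unreachable.

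First, the query side. You claim there are $O(\log|P|)$ canonical cut points; the correct bound, and the one the paper actually uses, is $O(\log|P|\log^* M)$, because the common sequence $\uniq{P}$ has run-length-encoded length $\Theta(\log|P|\log^* M)$ and in principle every boundary between consecutive signatures in it is a candidate split. The non-trivial refinement (Lemma~\ref{lem:pattern_occurrence_lemma1}) is that only the boundaries where the two adjacent signatures in the common sequence \emph{differ} need a range query -- for equal adjacent signatures the parent would be a run rule and a primary occurrence there forces $P$ to be a unary run, contradicting $|\pow{0}{P}|>1$. Without this observation you would in general have $\Omega(|P|)$ split positions. Second, the cost per range query is not $O(\log w\log^* M)$: you must locate the four corners of the rectangle by binary search over the sorted lists $\mathcal{X}=\{X.{\rm left}^R\}$ and $\mathcal{Y}=\{X.{\rm right}\}$, and each comparison in that binary search is an LCE query taking $O(\log N+\log|P|\log^* M)$ time (Lemma~\ref{lem:sub_operation_lemma}), giving $O(\log w(\log N+\log|P|\log^* M))$ per cut point; multiplying by $O(\log|P|\log^* M)$ cut points gives the middle term of the statement, which your accounting misses by a factor of roughly $\log N/\log^* M$.

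The update side has a parallel gap. The number of signatures created or removed by an $\mathit{INSERT}$/$\mathit{DELETE}$ is $O(y+\log N\log^* M)$ \emph{total}, not $O(\log N\log^* M)$ per level: Lemma~\ref{lem:ancestors} shows the affected fringe contributes $O(\log^* M)$ per height, and the sizes telescope. More importantly, you charge only $O(\log w)$ ``structural work'' per touched signature, but inserting a point into the range-reporting structure requires first locating its rank in $\mathcal{X}$ and $\mathcal{Y}$, again by LCE-driven binary search, which costs $O(\log w\log N\log^* M)$ per signature; this is the dominating cost and is exactly the extra $\log N\log^* M$ factor in the stated update bound. Finally, your plan to prove $w=O(z\log N\log^* M)$ is preserved via a bespoke potential argument is misplaced: the paper keeps the grammar admissible by eagerly deleting any signature whose parent set becomes empty after the splice, so $w$ is bounded non-amortizedly by Lemma~\ref{lem:upperbound_signature}; the only amortization in the update bound is inherited from the dynamic orthogonal range-reporting structure (Lemma~\ref{lem:RangeQuery}), not from re-parsing.
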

Since $z \geq \log N$,
$\log w = \max\{\log z, \log(\log^* M)\}$.
Hence, our index is able to find pattern occurrences faster than the index of Hon et al.
when the $|P|$ term is dominating in the pattern search times.
Also, our index allows faster substring insertion/deletion on the text
when the $\sqrt{N}$ term is dominating.

\subsubsection{Related work.}\label{sec:related_work}
  Our dynamic compressed index uses Mehlhorn et al.'s
  \emph{locally consistent parsing} and \emph{signature encodings}
  of strings~\cite{DBLP:journals/algorithmica/MehlhornSU97},
  originally proposed for efficient equality testing of dynamic strings.
  Alstrup et al.~\cite{DBLP:conf/soda/AlstrupBR00} showed how to improve
  the construction time of Mehlhorn et al.'s data structure
  (details can be found in the technical report~\cite{LongAlstrup}).
  Our data structure uses Alstrup et al.'s
  fast string concatenation/split algorithms and 
  linear-time computation of locally consistent parsing,
  but has little else in common than those.
  In particular, Alstrup et al.'s dynamic pattern matching
  algorithm~\cite{DBLP:conf/soda/AlstrupBR00,LongAlstrup}
  requires to maintain specific locations called \emph{anchors}
  over the parse trees of the signature encodings,
  but our index does not use anchors.

Our index has close relationship to the ESP-indices~\cite{DBLP:conf/wea/TakabatakeTS14,TakabatakeTS15},
but there are two significant differences between ours and ESP-indices:
The first difference is that the ESP-index~\cite{DBLP:conf/wea/TakabatakeTS14}
is static and its online variant~\cite{TakabatakeTS15} allows only for
appending new characters to the end of the text,
while our index is fully dynamic allowing for insertion and deletion
of arbitrary substrings at arbitrary positions.
The second difference is that the pattern search time of the ESP-index 
is proportional to the number $\mathit{occ}_c$ of occurrences of the so-called ``core'' 
of a query pattern $P$, which corresponds to a maximal subtree of the 
ESP derivation tree of a query pattern $P$.
If $\mathit{occ}$ is the number of occurrences of $P$ in the text,
then it always holds that $\mathit{occ}_c \geq \mathit{occ}$,
  and in general $\mathit{occ}_c$ cannot be upper bounded by
  any function of $\mathit{occ}$.
In contrast, as can be seen in Theorem~\ref{theo:dynamic_index},
the pattern search time of our index is proportional to the
number $\mathit{occ}$ of occurrences of a query pattern $P$.
This became possible due to our discovery of 
a new property of the signature encoding~\cite{LongAlstrup}
(stated in Lemma \ref{lem:pattern_occurrence_lemma1}).
In relation to our problem, 
there exists the library management problem of maintaining 
a text collection (a set of text strings) allowing for insertion/deletion of texts 
(see \cite{DBLP:journals/corr/MunroNV15} for recent work).
While in our problem a single text is edited by insertion/deletion of substrings,
in the library management problem a text can be inserted to or deleted from the collection.
Hence, algorithms for the library management problem cannot be directly
applied to our problem.

\subsection{Applications and extensions}

\subsubsection{Computing LZ77 factorization in compressed space.}
  As an application to our dynamic compressed index,
  we present a new LZ77 factorization algorithm for a string $T$ of length $N$,
  running in $O(Nf_\mathcal{A} + z \log w \log ^3 N (\log^* N)^2)$
  time and $O(w)$ working space, where $f_{\mathcal{A}} = O(\min \{ \frac{\log\log N \log\log w}{\log\log\log N}, \sqrt{\frac{\log w}{\log\log w}} \})$. 
  Goto et al.~\cite{DBLP:journals/corr/abs-1107-2729} showed how,
  given the grammar-like representation for string $T$ generated
  by the LCA algorithm~\cite{DBLP:journals/ieicet/SakamotoMKS09},
  to compute the LZ77 factorization of $T$
  in $O(z \log^2 m \log^3 N + m \log m \log^3 N)$ time and $O(m \log^2 m)$ space, 
  where $m$ is the size of the given representation.
  Sakamoto et al.~\cite{DBLP:journals/ieicet/SakamotoMKS09} claimed that
  $m = O(z \log N \log^* N)$, however, it seems that in this bound they do not consider 
  the production rules to represent maximal runs of non-terminals in the derivation tree.
  The bound we were able to obtain with the best of our knowledge and 
  understanding is $m = O(z \log^2 N \log^* N)$,
  and hence our algorithm seems to use less space than the algorithm of Goto et al.~\cite{DBLP:journals/corr/abs-1107-2729}.
  Recently, Fischer et al.~\cite{FGGK15} showed
  a Monte-Carlo randomized algorithms to compute  
  an approximation of the LZ77 factorization
  with at most $2z$ factors in $O(N \log N)$ time,
  and another approximation with at most $(i+\epsilon)z$ factors
  in $O(N \log^2 N)$ time for any constant $\epsilon > 0$,
  using $O(z)$ space each.
  Another line of research is a recent result by
  Policriti and Prezza~\cite{PolicritiP15} which uses
  $N H_0 + o(N \log |\Sigma|) + O(|\Sigma| \log N)$ bits of space
  and computes the LZ77 factorization in $O(N \log N)$ time.

\subsubsection{Longest common extension queries in compressed space.}
Furthermore, we consider
the \emph{longest common extension} (LCE) problems on: 
an uncompressed string $T$ of length $N$; 
a grammar-compressed string $T$ represented by an straight-line program (SLP) of size $n$,
or an LZ77-compressed string $T$ with $z$ factors.
The best known deterministic LCE data structure on SLPs is
due to I et al.~\cite{IMSIBTNS15},
which supports LCE queries in $O(h \log N)$ time each,
occupies $O(n^2)$ space, and can be built in $O(h n^2)$ time,
where $h$ is the height of the derivation tree of a given SLP.
Bille et al.~\cite{bille13:_finger_compr_strin} showed 
a Monte Carlo randomized data structure built on a given SLP of size $n$
which supports LCE queries in $O(\log N \log \ell)$ time each,
where $\ell$ is the output of the LCE query and $N$ is the length of the uncompressed text.
Their data structure requires only $O(n)$ space,
but requires $O(N)$ time to construct.
Very recently, Bille et al.~\cite{BilleCCG15} showed
a faster Monte Carlo randomized data structure of $O(n)$ space
which supports LCE queries in $O(\log N + \log^2 \ell)$ time each.
The preprocessing time of this new data structure is not given in~\cite{BilleCCG15}.

In this paper, we present a new, deterministic 
LCE data structure using compressed space, namely $O(w)$ space,
supporting LCE queries in $O(\log N + \log \ell \log^* N)$ time each.
We show how to construct this data structure in 
$O(N \log w)$ time given an uncompressed string of length $N$,
$O(n \log\log n \log N \log^* N)$ time given an SLP of size $n$,
and $O(z \log w \log N \log^* N)$ time given the LZ77 factorization of size $z$.
We remark that our new LCE data structure allows for
fastest deterministic LCE queries on SLPs, 
and even permits faster LCE queries than 
the randomized data structure of Bille et al.~\cite{BilleCCG15} 
when $\log^* N = o(\log \ell )$ which in many cases is true.

All proofs omitted due to lack of space can be found in the appendices.

\section{Preliminaries} \label{sec:preliminary}

\subsection{Strings}

Let $\Sigma$ be an ordered alphabet and 
$\$$ be the lexicographically largest character in $\Sigma$.
An element of $\Sigma^*$ is called a string.
For string $w = xyz$,
$x$ is called a prefix,
$y$ is called a substring,
and $z$ is called a suffix of $w$, respectively.
The length of string $w$ is denoted by $|w|$.
The empty string $\varepsilon$ is a string of length 0,
that is, $|\varepsilon| = 0$.
Let $\Sigma^+ = \Sigma^* - \{\varepsilon\}$.
For any $1 \leq i \leq |w|$, $w[i]$ denotes the $i$-th character of $w$.
For any $1 \leq i \leq j \leq |w|$,
$w[i..j]$ denotes the substring of $w$ that begins at position $i$
and ends at position $j$.
Let $w[i..] = w[i..|w|]$ and $w[..i] = w[1..i]$ for any $1 \leq i \leq |w|$.
For any string $w$, let $\rev{w}$ denote the reversed string of $w$,
that is, $\rev{w} = w[|w|] \cdots w[2]w[1]$. 
For any strings $w$ and $u$, 
let $\lcp{w}{u}$ (resp. $\lcs{w}{u}$) denote the length of 
the longest common prefix (resp. suffix) of $w$ and $u$.
Given two strings $s_1, s_2$ and two integers $i, j$, let $\LCEQ(s_1, s_2, i, j)$ denote a query which returns $\lcp{s_1[i..|s_1|]}{s_2[j..|s_2|]}$.


For any strings $p$ and $s$, let $\Occ(p,s)$
denote all occurrence positions of $p$ in $s$,
namely, $\Occ(p, s) = \{i \mid p = s[i..i+|p|-1], 1 \leq i \leq |s|-|p|+1\}$.


In this paper, we deal with a dynamic text, namely,
we allow for insertion/deletion of a substring to/from an arbitrary position of the text.
Let $M$ be the maximum length of the dynamic text to index.
Our model of computation is the unit-cost word RAM with machine word 
size of $\log_2 M$ bits,
and space complexities will be evaluated by the number of machine words.
Bit-oriented evaluation of space complexities can be obtained with 
a $\log_2 M$ multiplicative factor.

\subsubsection{Lempel-Ziv 77 factorization.}

We will use the \emph{Lempel-Ziv 77} factorization~\cite{LZ77} of a string
to bound the running time and the size of our data structure on the string.
It is a greedy factorization which scans
the string from left to right, and
recursively takes as a factor
the longest prefix of the remaining suffix with a previous occurrence.
Formally, it is defined as follows.
\begin{definition}[Lempel-Ziv77 Factorization~\cite{LZ77}]
The Lempel-Ziv77 (LZ77) factorization of a string $s$ without self-references is 
a sequence $f_1, \ldots, f_z$ of non-empty substrings of $s$ 
such that $s = f_1 \cdots f_z$,
$f_1 = s[1]$, 
and for $1 < i \leq z$, if the character $s[|f_1..f_{i-1}|+1]$ does not occur in $s[|f_1..f_{i-1}|]$, then $f_i = s[|f_1..f_{i-1}|+1]$, otherwise $f_i$ is the longest prefix of $f_i \cdots f_z$ which occurs in $f_1 \cdots f_{i-1}$.
\end{definition}
The size of the LZ77 factorization $f_1, \ldots, f_z$ of string $s$
is the number $z$ of factors in the factorization.

A variant of LZ77 factorization which allows for self-overlapping reference
to a previous occurrence is formally defined as follows.
\begin{definition}[Lempel-Ziv77 Factorization with self-reference~\cite{LZ77}]
The Lempel-Ziv77 (LZ77) factorization of a string $s$ with self-references is
a sequence $f_1, \ldots, f_k$ of non-empty substrings of $s$ 
such that $s = f_1 \cdots f_k$,
$f_1 = s[1]$, 
and for $1 < i \leq k$, if the character $s[|f_1..f_{i-1}|+1]$ does not occur in $s[|f_1..f_{i-1}|]$, then $f_i = s[|f_1..f_{i-1}|+1]$, otherwise $f_i$ is the longest prefix of $f_i \cdots f_k$ which occurs at some position $p$, where $1 \leq p \leq |f_1 \cdots f_{i-1}|$.
\end{definition}
We will show that using our data structure,
the LZ77 with self-reference can be computed efficiently in compressed space.

\subsubsection{Locally consistent parsing.}

Let $p$ be a string of length $n$ over an integer alphabet of size $W$
where any adjacent elements are different, i.e., $p[i] \neq p[i+1]$ for
all $1 \leq i < n$.
A locally consistent parsing~\cite{DBLP:journals/algorithmica/MehlhornSU97}
of $p$ is a parsing (or factorization) $q_1, \ldots, q_j$ of $p$ such that
$p = q_1 \cdots q_j$, $2 \leq |q_h| \leq 4$ for any $1 \leq h \leq j$,
and the boundary between $q_{h-1}$ and $q_{h}$ is ``determined'' by
$p[|q_1 \cdots q_{h-1}| + 1 - \Delta_L..|q_1 \cdots q_{h-1}| + 1 + \Delta_R]$,
where $\Delta_L = \log^*W + 6$ and $\Delta_R = 4$.
Clearly, $j \leq n/2$.
By ``determined'' above, we mean that
if a position $i$ of an integer string $p$
and a position $k$ of another integer string $s$ share the
same left context of length at least $\Delta_L$ and
the same right context of length at least $\Delta_R$,
then there is a boundary of the locally consistent parsing of $p$ 
between the positions $i-1$ and $i$ iff there is a boundary of the locally consistent
parsing of $s$ between the positions $k-1$ and $k$.
A formal definition of locally consistent parsing,
and its linear-time computation algorithm, is explained in the following lemma.

\begin{lemma}[Locally consistent parsing~\cite{DBLP:journals/algorithmica/MehlhornSU97,LongAlstrup}]\label{lem:CoinTossing}
  Let $W$ be a non-negative integer and
  let $p$ be an integer sequence of length $n$,
  called a \emph{$W$-colored sequence},
  where $p[i] \neq p[i+1]$ for any $1 \leq i < n$ and $0 \leq p[j] \leq W$ for any $1 \leq j \leq n$.
  For every $W$ there exists a function $f : [-1..W]^{\log^* W + 11} \rightarrow \{0,1\}$ 
  such that for every $W$-colored sequence $p$, 
  the bit sequence $d$ defined by 
  $d[i] = f(\tilde{p}[i-\deltaLR{L}], \ldots, \tilde{p}[i+\deltaLR{R}])$ for $1 \leq i \leq n$, satisfies:
\begin{itemize}
 \item $d[1]=1$,
 \item $d[i] + d[i+1] \leq 1$ for $1 \leq i < n$, and
 \item $d[i] + d[i+1] + d[i+2] + d[i+3] \geq 1$ for any $1 \leq i < n-3$,
\end{itemize}
where $\deltaLR{L} = \log ^*W + 6$, $\deltaLR{R} = 4$, and $\tilde{p}[j] = p[j]$ for all $1 \leq j \leq n$, $\tilde{p}[j] = -1$ otherwise. 
Furthermore, $d$ can be computed in $O(n)$ time using a precomputed table of size $o(\log W)$. 
Also, we can compute this table in $o(\log W)$ time.
\end{lemma}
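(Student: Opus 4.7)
The plan is to build $f$ via the classical alphabet-reduction (``coin-tossing'') technique of Mehlhorn, Sundar and Uhrig, and then read off the boundary bits from a constant-size local rule applied to the final, constant-alphabet sequence. One round of the reduction maps a $W'$-colored sequence $p$ to a $W''$-colored sequence $p'$ with $W'' = O(\log W')$, in such a way that $p'[i]$ depends only on the pair $(p[i-1], p[i])$ and adjacent elements of $p'$ still differ; iterating $\log^* W$ times brings the alphabet down to a constant, and one more constant-size step fixes the boundary bits.

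For one round I would use the standard bit-rule: compare $p[i-1]$ and $p[i]$, let $b$ be the least significant bit in which they differ, and set $p'[i] = 2b + p[i][b]$. Since $p[i] \neq p[i-1]$ such a $b$ always exists, and either the bit index or the bit value must change between $p'[i]$ and $p'[i+1]$, so adjacent elements of $p'$ still differ; the new alphabet has size $O(\log W')$. After $\log^* W$ rounds the alphabet is of constant size, and a further constant number of rounds over this constant alphabet produces a sequence $p^{*}$ over $\{0,1,2\}$ with no equal adjacent pair. I would then set $d[i] = 1$ iff $p^{*}[i]$ is a strict local maximum, together with a greedy fill to guarantee a $1$ in every window of length $4$, and force $d[1] = 1$ by exploiting the left $-1$ padding of $\tilde{p}$.

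The window bound is obtained by tracking dependence through the rounds. Each round reads only the two-element window $p[i-1], p[i]$, so after $k$ rounds $p^{(k)}[i]$ is a function of $p[i-k], \ldots, p[i]$; thus after $\log^* W + O(1)$ rounds the dependence extends only $\log^* W + O(1)$ positions to the left and does not extend to the right. The final boundary rule looks an additional constant number of positions to either side to detect local maxima and to run the greedy fill. With $\deltaLR{L} = \log^* W + 6$ and $\deltaLR{R} = 4$ the composition fits in a window of length $\log^* W + 11$, which is the desired $f$. The three properties of $d$ now follow: $d[1]=1$ by construction; $d[i]+d[i+1]\le 1$ because a strict local maximum cannot sit next to another strict local maximum in a sequence whose neighbours are all distinct; and the greedy fill ensures $d[i]+d[i+1]+d[i+2]+d[i+3]\ge 1$.

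The main technical obstacle is attaining $O(n)$ total time with only an $o(\log W)$-size precomputed table, rather than the naive $O(n\log^{*}W)$ bound. I plan to exploit word-RAM packing: after the first round the alphabet fits in $O(\log\log W)$ bits per symbol, so subsequent rounds can be applied to $\Theta(\log W / \log\log W)$ packed symbols in parallel using a small lookup table that need only store the reduction rules for constant-size alphabets together with a primitive for locating the lowest differing bit in two $O(\log W)$-bit integers. The cost of the rounds then forms a geometric series dominated by the first linear pass, the table occupies $o(\log W)$ machine words and is constructible in $o(\log W)$ time, and the full bookkeeping follows the argument of Alstrup et al.~\cite{LongAlstrup}.
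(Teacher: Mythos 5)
Your construction of $f$ — the Cole--Vishkin/Mehlhorn alphabet reduction $p \mapsto p'$ driven by the lowest differing bit, iterated $\log^* W + O(1)$ times, followed by a constant-radius rule (strict local maxima plus greedy fill) for the boundary bits, and dependence-tracking to bound the window by $\Delta_L + \Delta_R + 1 = \log^* W + 11$ — is the same as the paper's, which invokes Mehlhorn et al.'s $f'$ and $f''$ to exactly this effect. So the existence of $f$, the three combinatorial properties of $d$, and the window size are all handled essentially as in the paper.

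Where you diverge, and where there is a genuine gap, is the claim that $d$ can be computed in $O(n)$ total time with a precomputed table of size $o(\log W)$. You propose word-RAM packing: after the first reduction round symbols shrink to $O(\log\log W)$ bits, so you would pack $\Theta(\log W / \log\log W)$ of them into a word and run the later rounds bit-parallelly, the cost decaying geometrically. The difficulty is that your lookup table, as described, ``need only store the reduction rules for constant-size alphabets''; but rounds $2$ and $3$ still operate on alphabets of size $\Theta(\log W)$ and $\Theta(\log\log W)$, and any table indexed by a full packed word of $\Theta(\log W)$ bits has $W^{\Theta(1)}$ entries, vastly exceeding $o(\log W)$. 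The alternative — evaluating per-slot lowest-differing-bit with pure arithmetic bit tricks, without a table — is nontrivial and you do not supply it. Alstrup et al.'s actual argument, which the paper reproduces, avoids packing entirely: perform exactly three reduction rounds (each a plain $O(n)$ scan) to obtain a sequence over an alphabet of size $O(\log\log\log W)$; observe that a window of length $\log^* W + 11$ over such a sequence has only $2^{(\log^* W + 11)\log\log\log W} = o(\log W)$ possible contents; precompute a single table, of that size, mapping each window content to the final output bit (thereby ``skipping'' the remaining $\log^* W - 1$ rounds and the local-maximum rule in one lookup); and make one more $O(n)$ pass doing one table lookup per position. This yields $O(n)$ time and an $o(\log W)$ table directly, without any packed parallelism. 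You should replace the packing argument by this three-rounds-then-one-lookup scheme, or else fill in the missing per-slot bit-parallel primitives and show that they, too, use only $o(\log W)$ precomputation.
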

\begin{proof}
Here we give only an intuitive description of a proof of Lemma~\ref{lem:CoinTossing}.
More detailed proofs can be found at~\cite{DBLP:journals/algorithmica/MehlhornSU97}
and~\cite{LongAlstrup}.

Mehlhorn et al.~\cite{DBLP:journals/algorithmica/MehlhornSU97} showed that there exists a function $f'$  
which returns a $(\log W)$-colored sequence $p'$ for a given $W$-colored sequence $p$ in $O(|p|)$ time, 
where $p'[i]$ is determined only by $p[i-1]$ and $p[i]$ for $1 \leq i \leq |p|$. 
Let $p^{\langle k \rangle}$ denote the outputs after applying $f'$ to $p$ by $k$ times.
They also showed that there exists a function $f''$
which returns a bit sequence $d$ satisfying the conditions of Lemma~\ref{lem:CoinTossing} 
for a $6$-colored sequence $p$ in $O(|p|)$ time, 
where $d[i]$ is determined only by $p[i-3..i+3]$ for $1 \leq i \leq |p|$. 
Hence we can compute $d$ for a $W$-colored sequence $p$ 
in $O(|p| \log^* W)$ time by applying $f''$ to $p^{\langle \log^* W + 2 \rangle}$ after computing $p^{\langle \log^* W + 2 \rangle}$. 
Furthermore, Alstrup et al.~\cite{LongAlstrup} showed
that $d$ can be computed in $O(|p|)$ time using a precomputed table of size $o(\log W)$.
The idea is that $p^{\langle 3 \rangle}$ is a $\log\log\log W$-colored sequence and the number of 
all combinations of a $\log\log\log W$-colored sequence of length $\log^* W + 11$ is $2^{(\log^* W + 11)\log\log\log W} = o(\log W)$. 
Hence we can compute $d$ for a $W$-colored sequence in linear time using a precomputed table of size $o(\log W)$.
\qed
\end{proof}

Given a bit sequence $d$ of Lemma~\ref{lem:CoinTossing},
let $\mathit{Eblock}_d(p)$ be the function that 
decomposes an integer sequence $p$ into a sequence
$q_1, \ldots, q_j$ of substrings called \emph{blocks} of $p$,
such that $p = q_1 \cdots q_j$ and 
$q_i$ is in the decomposition iff $d[|q_1 \cdots q_{i-1}|+1] = 1$
for any $1 \leq i \leq j$.
We omit $d$ and write $\encblockd{p}$ when it is clear from the context, 
and we use implicitly the bit sequence created by Lemma~\ref{lem:CoinTossing} as $d$. 
Let $|\encblock{p}| = j$ and let $\encblock{s}[i] = q_i$.
For a string $s$, 
let $\encpow{s}$ be the function which 
groups each maximal run of same characters $a$ as $a^r$,
where $r$ is the length of the run. 
$\encpow{s}$ can be computed in $O(|s|)$ time.
Let $|\encpow{s}|$ denote the number of maximal runs of same characters in $s$ and let $\encpow{s}[i]$ denote $i$-th maximal run in $s$.
\begin{example}[$\encblockd{p}$ and $\encpow{s}$] \label{exmp:Encblock}
Let $\log^* W = 2$, and then $\deltaLR{L} = 8, \deltaLR{R} = 4$.\\
If $p = 1,2,3,2,5,7,6,4,3,4,3,4,1,2,3,4,5$ and $d = 1,0,0,1,0,1,0,0,1,0,0,0,1,0,1,0,0$,
then $\encblockd{p} = (1,2,3),(2,5),(7,6,4),(3,4,3,4),(1,2),(3,4,5)$, $|\encblockd{p}| = 6$ and $\encblockd{p}[2] = (2, 5)$. 
For string $s = aabbbbbabb$,
$\encpow{s} = a^2b^5a^1b^2$ and
$|\encpow{s}| = 4$ and $\encpow{s}[2] = b^5$.
\end{example}

\subsection{Context free grammars as compressed representation of strings}

\subsubsection{Admissible context free grammars.}
An \emph{admissible context free grammar} (ACFG)~\cite{KiefferY00}
is a CFG which generates only a single string.
More formally, 
an ACFG that generates a single string $T$ is
a quadruple $\mathcal{G} = (\Sigma, \mathcal{V}, \mathcal{D}, S)$, 
such that
\begin{itemize}
 \item $\Sigma$ is an ordered alphabet of terminal characters,
 \item $\mathcal{V} = \{ e_1, \ldots, e_{k} \}$ is a set of positive integers
       with $e_1 < \cdots < e_k$, called \emph{variables},
 \item $\mathcal{D} = \{e_i \rightarrow \mathit{xexpr}_i\}_{i = 1}^{k}$ is
       a set of \emph{deterministic productions} (or \emph{assignments})
       i.e., for each variable $e \in \mathcal{V}$ there is exactly one production in $\mathcal{D}$
       whose lefthand side is $e$,
 \item each $e_i \in \mathcal{V} \setminus \{S\}$ appears at least once in the righthand side of some production $e_j \rightarrow \mathit{xexpr}_j$ with $e_i < e_j$, and
 \item $S \in \mathcal{V}$ is the start symbol which derives the string $T$.
\end{itemize}
Sometimes we handle a variable sequence as a kind of string. 
For example, for any variable sequence $y = e_{i_1} \cdots e_{i_d} \in \mathcal{V}^{+}$, 
let $|y| = d$ and $y[c] = e_{i_c}$ for $1 \leq c \leq d$.
Let $\mathit{val}: \mathcal{V} \rightarrow \Sigma^+$ be the function
which returns the string derived by an input variable.
If $s = \val{e}$ for $e \in \mathcal{V}$,
then we say that the variable $e$ \emph{represents} string $s$.
For any variable sequence $y \in \mathcal{V}^{+}$,
let $\valp{y} = \val{y[1]} \cdots \val{y[|y|]}$.

%

For two variables $e_1, e_2 \in \mathcal{V}$, we say that $e_1$ occurs at position $c$ in $e_2$ if there is a node labeled with $e_1$ in the derivation tree of $e_2$
and the leftmost leaf of the subtree rooted at that node labeled with $e_1$
is the $c$-th leaf in the derivation tree of $e_2$. 
Furthermore, for variable sequence $y \in \mathcal{V}^{+}$, 
we say that $y$ occurs at position $c$ in $e$ if $y[i]$ occurs at position $c+|\valp{y[..i-1]}|$ in $e$ for $1 \leq i \leq |y|$. 
We define the function $\vOcc(e_1,e_2)$ which returns all positions of $e_1$ in the derivation tree of $e_2$. 

\subsubsection{Straight-line programs.}
A \emph{straight-line program} (\emph{SLP}) is an ACFG in the Chomsky normal from.
Formally, SLP $\mathcal{S}$ of size $n$ is an ACFG
$\mathcal{G} = (\Sigma, \mathcal{V}, \mathcal{D}, X_n)$, 
where $\mathcal{V} = \{ X_{1}, \cdots, X_n \}$, $\val{X_n} = T$, 
$\mathcal{D} = \{X_i \rightarrow \mathit{expr}_i\}_{i = 1}^{n}$
with each $\mathit{expr}_i$ being either of form
$X_\ell X_r~(1 \leq \ell, r < i)$,
or a single character $a \in \Sigma$.
The size of the SLP $\mathcal{G}$ is the number $n$ of productions in $\mathcal{D}$.
In the extreme cases the length $N$ of the string $T$ can be as large as $2^{n-1}$,
however, it is always the case that $n \geq \log_2 N$.
For any variable $X_i$ with $X_i \rightarrow X_{\ell}X_{r} \in \mathcal{D}$, 
let $X_i.{\rm left} = \val{X_{\ell}}$ and $X_i.{\rm right} = \val{X_{r}}$,
which are called 
the \emph{left string} and the \emph{right string} of $X_i$, respectively.
%
\begin{example}[SLP]\label{ex:SLP}
  Let $\mathcal{S} = (\Sigma, \mathcal{V}, \mathcal{D}, S)$ be the SLP
  s.t.
$\Sigma = \{A, B, C \}$, $\mathcal{V} = \{ X_1, \cdots , X_{11} \}$, 
$\mathcal{D} = \{ 
X_{1} \rightarrow A, X_{2} \rightarrow B, X_{3} \rightarrow C, 
X_4 \rightarrow X_{3}X_{1}, X_5 \rightarrow X_{4}X_{2}, 
X_6 \rightarrow X_{5}X_{5}, X_7 \rightarrow X_{2}X_{3}, 
X_8 \rightarrow X_{1}X_{2}, X_9 \rightarrow X_{7}X_{8}, 
X_{10} \rightarrow X_{6}X_{9}, X_{11} \rightarrow X_{10}X_{6}
\}$, $S = X_{11}$, 
the derivation tree of $S$ represents $CABCABBCABCABCAB$.
\end{example}

\subsubsection{Run-length ACFGs.}

We define \emph{run-length ACFGs} as an extension to ACFGs,
which allow \emph{run-length encodings} in the righthand sides of productions.
Formally, a run-length ACFG 
is $\mathcal{G} = (\Sigma, \mathcal{V}, \mathcal{D}, S)$, 
where $\mathcal{D} = \{e_i \rightarrow \mathit{xexpr}_i\}_{i = 1}^{w}$,
$\val{S} = T$ and
each $\mathit{xexpr}_i$ is in one of the following forms:
\begin{eqnarray}
 \mathit{xexpr}_i &=&
  \begin{cases}
   a \in \Sigma, \\
   e_{\ell}e_r \in \mathcal{V}^+ & (e_{\ell},e_r < e_i), \\
   \hat{e}^{d} \in \mathcal{V} \times \mathcal{N} & (\hat{e} < e_i, \mbox{ and } d > 1). \\
  \end{cases} \nonumber
\end{eqnarray}
Hence $\mathit{xexpr}_i \in \Sigma \cup \mathcal{V}^{+} \cup (\mathcal{V} \times \mathcal{N})$.
The \emph{size} of the run-length ACFG $\mathcal{G}$ is the number $w$ of productions in $\mathcal{D}$.

Let $\mathit{Sig}_{\mathcal{G}}:\Sigma \cup \mathcal{V}^{+} \cup (\mathcal{V} \times \mathcal{N}) \rightarrow \mathcal{V}$ be the function such that
 \begin{eqnarray}
 \mathit{Sig}_{\mathcal{G}}(\mathit{x}) &=&
  \begin{cases}
   e & \mbox{if } (e \rightarrow \mathit{x}) \in \mathcal{D}, \\
   \mathit{Sig}_{\mathcal{G}}( \mathit{Sig}_{\mathcal{G}}(x[1..|x|-1]) x[|x|])
   & \mbox{if } x[i] \in \mathcal{V} \mbox{ for } 1 \leq i \leq |x|, 2 < |x| \leq 4, \\
   \mbox{undefined } & \mbox{otherwise.}
  \end{cases} \nonumber
  \end{eqnarray}
 Namely, the function $\mathit{Sig}_{\mathcal{G}}$ returns, if any,
 the lefthand side of the corresponding production for
 a given element in $\Sigma \cup \mathcal{V}^{+} \cup (\mathcal{V} \times \mathcal{N})$
 of length 3 or 4, by recursively applying the $\mathit{Sig}$ function from left to right. 
Let $\mathit{Assgn}_{\mathcal{G}}$ be the function such that
$\mathit{Assgn}_{\mathcal{G}}(e_i) = \mathit{xexpr_i}$ iff $e_i \rightarrow \mathit{xexpr_i} \in \mathcal{D}$.
When clear from the context, 
we write $\mathit{Sig}_{\mathcal{G}}$ and $\mathit{Assgn}_{\mathcal{G}}$
as $\mathit{Sig}$ and $\mathit{Assgn}$, respectively.
  For any $p \in (\Sigma \cup \mathcal{V}^{+} \cup (\mathcal{V} \times \mathcal{N}))^*$,
  let $\mathit{Sig}^{+}(p) = \sig{p[1]} \cdots \sig{p[|p|]}$.
We define the left and right strings for any variable
$e_i \rightarrow e_{\ell}e_r \in \mathcal{D}$ in a similar way to SLPs. 
Furthermore, for any $e_i \rightarrow \hat{e}^{k} \in \mathcal{D}$,
let $e_i.{\rm left} = \val{\hat{e}}$ and $e_i.{\rm right} = \val{\hat{e}}^{k-1}$.

In this paper, we consider a DAG of size $w$ 
that is a compact representation of the derivation trees of variables in
a run-length ACFG $\mathcal{G}$,
where each node represents a variable in $\mathcal{V}$
and out-going edges represent the assignments in $\mathcal{D}$.
For example, if there exists an assignment $e_i \rightarrow e_{\ell}e_r \in \mathcal{D}$,
then there exist two out-going edges from $e_i$ to its ordered children
$e_{\ell}$ and $e_r$.
In addition, $e_{\ell}$ and $e_r$ have reversed edges to their parent $e_i$.
For any $e \in \mathcal{V}$,
let $\mathit{parents}(e)$ be the set of 
variables which have out-going edge to $e$ in the DAG of $\mathcal{G}$. 
If a node is labeled by $e$, then the node is associated with $|\val{e}|$.

\begin{example}[Run-length ACFG]\label{ex:tree}
Let $\mathcal{G} = (\Sigma, \mathcal{V}, \mathcal{D}, S)$ be a run-length ACFG, 
where $\Sigma = \{A, B, C \}$, $\mathcal{V} = \{1, \ldots , 15 \}$, $\mathcal{D} = \{ 
1 \rightarrow A, 2 \rightarrow B, 3 \rightarrow C,
4 \rightarrow 3^4, 5 \rightarrow 1^1, 6 \rightarrow 2^1, 7 \rightarrow 3^1, 
8 \rightarrow (7,5),  9 \rightarrow (8,6), 10 \rightarrow (5,6), 11 \rightarrow (10,4), 
12 \rightarrow 9^2, 13 \rightarrow 10^7, 14 \rightarrow 11^1, 15 \rightarrow (12,13), 16 \rightarrow (15,14), 
17 \rightarrow 16^1
\}$, and $S = 17$.
The derivation tree of the start symbol $S$ represents a single string 
$T = CABCABABABABABABABABABCCCC$. 
Here, $4.{\rm left} = \val{3}$, $4.{\rm right} = \valp{3^3}$, 
$\sig{(7,5)} = 8$, $\sig{(7,5,6)} = 9$, $\sig{(6,5)} = \rm{undefined}$, 
$\mathit{parents}(5) = \{8,10 \}$ and $\mathit{vOcc}(9,17) = \{1,4 \}$.
See also Fig.~\ref{fig:SignatureTree} in Section~\ref{sec:Framework}
which illustrates the derivation tree of the start symbol $S$
and the DAG for $\mathcal{G}$.
\end{example}


\subsubsection{Dynamization and data structure of run-length ACFG.}

In this paper, we consider a compressed representation and compressed index
of a dynamic text based on run-length ACFGs.
Hence, upon edits on the text,
the run-length ACFG representing the text needs to be modified as well.
To this end, we consider \emph{dynamic run-length ACFGs},
which allow for insertion of new assignments to $\mathcal{D}$,
and allow for deletion of assignments $e \rightarrow \mathit{xexpr}$
from $\mathcal{D}$ only if $|\mathit{parents}(e)| = 0$.
We remark that the grammar under modification may temporarily represents
more than one text,  
however, this will be readily fixed as soon as
we insert a new start symbol of the grammar
representing the edited text.

Next, we consider an abstract data structure 
$\mathcal{H}(f_{\mathcal{A}},f'_{\mathcal{A}})$ to maintain
a dynamic run-length ACFG 
$\mathcal{G} = (\Sigma, \mathcal{V}, \mathcal{D} , S)$ of size $w$. 
$\mathcal{H}$ consists of two components $\mathcal{A}$ and $\mathcal{B}$. 
The first component $\mathcal{A}$ is an abstract
data structure of $O(f'_{\mathcal{A}})$ size which
is able to add/remove an assignment to/from $\mathcal{D}$
in $O(f_{\mathcal{A}})$ time. 
This data structure is also able to compute $\sig{xexpr}$ in $O(f_{\mathcal{A}})$ time. 
For example, using a balanced binary search tree for $\mathcal{D}$,
we achieve deterministic $f_{\mathcal{A}} = O(\log w)$ time and 
$f'_{\mathcal{A}} = O(w)$ space.
Note that using the best known deterministic predecessor/successor
data structure for a dynamic set of integers~\cite{DBLP:journals/jcss/BeameF02},
we achieve deterministic
$f_{\mathcal{A}} = O\left(\min \left\{ \frac{\log \log M \log \log w}{\log \log \log M}, \sqrt{\frac{\log w}{\log\log w}} \right\} \right)$ time 
and $f'_{\mathcal{A}} = O(w)$ space, where $M$ is the maximum length of the dynamic text\footnote{
  Alstrup et al.~\cite{LongAlstrup} used hashing to maintain $\mathcal{A}$ and
  obtained a randomized $\mathcal{H}(1, w)$ signature dictionary. However,
  since we are interested in the worst case time complexities, we use
  balanced binary search trees or the data structure~\cite{DBLP:journals/jcss/BeameF02} in place of hashing.
}.
The second component $\mathcal{B}$ is the DAG of $\mathcal{G}$ introduced in the
previous subsection.
The corresponding nodes and edges of the DAG can be added/deleted
in constant time per addition/deletion of an assignment.
By maintaining $\mathit{parents}(e)$ with a doubly-linked list
for each node $v_e$ representing a variable $e \in \mathcal{V}$,
we obtain the following lemma:
\begin{lemma}\label{lem:basic_operation_time}
Using $\mathcal{H}(f_{\mathcal{A}}, f'_{\mathcal{A}})$ for 
a dynamic run-length ACFG $\mathcal{G} = (\Sigma, \mathcal{V}, \mathcal{D}, S)$
of current size $w$,
$\mathit{Sig}(\mathit{xexpr})$ can be computed in $O(f_{\mathcal{A}})$ time, 
for a given $\mathit{xexpr} \in \Sigma \cup {\mathcal{V}}^{+} \cup (\mathcal{V} \times \mathcal{N})$.
Given a node $v_e$ representing a variable $e$,
$\mathit{Assgn}(e)$ and $|\mathit{val}(e)|$ can be computed in $O(1)$ time, and 
$\mathit{parents}(e)$ can be computed in $O(|\mathit{parents}(e)|)$ time.
We can also update $\mathcal{H}$ in $O(f_{\mathcal{A}})$ time 
when an assignment is added to/removed from $\mathcal{D}$.
\end{lemma}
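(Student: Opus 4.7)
The plan is to verify each of the four operational claims by combining the abstract specifications of the component $\mathcal{A}$ (which handles the $\mathit{Sig}$ dictionary and assignment add/remove in $O(f_{\mathcal{A}})$ time) with simple bookkeeping on the DAG $\mathcal{B}$. Since the lemma has several parts, I would handle them one at a time, keeping in mind that the only nontrivial subtlety is making sure the doubly-linked parent lists can be edited in constant amortized time per affected edge.

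First I would dispose of the $\mathit{Sig}$ claim. For $\mathit{xexpr} \in \Sigma$ or $\mathit{xexpr}$ of the form $e_\ell e_r$ or $\hat e^d$, the value $\mathit{Sig}(\mathit{xexpr})$ is returned directly by $\mathcal{A}$ in $O(f_{\mathcal{A}})$ time. For $\mathit{xexpr} \in \mathcal{V}^{+}$ of length $3$ or $4$, I would just unfold the recursive definition: a length-$3$ sequence requires one nested lookup, a length-$4$ sequence requires two. Since each nested lookup is again an $\mathcal{A}$-query, the total cost stays $O(f_{\mathcal{A}})$. The $\mathit{Assgn}(e)$ and $|\mathit{val}(e)|$ claims are immediate once we observe that the node $v_e$ in $\mathcal{B}$ already stores its (at most two) out-going edges encoding the right-hand side of the assignment as well as the length $|\mathit{val}(e)|$ as a field; both are therefore readable in $O(1)$ time.

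Next I would address $\mathit{parents}(e)$. By assumption each node $v_e$ owns a doubly-linked list of pointers to the edges entering it, so merely traversing that list gives $\mathit{parents}(e)$ in $O(|\mathit{parents}(e)|)$ time. Finally, for the update claim, adding an assignment $e \to \mathit{xexpr}$ requires (i) inserting the corresponding dictionary entry into $\mathcal{A}$ in $O(f_{\mathcal{A}})$ time, (ii) allocating the node $v_e$ with its length field and the constant number of out-going edges to the children named by $\mathit{xexpr}$, and (iii) for each such child $e'$, splicing one new cell into the head of its parent list. Removing an assignment is symmetric: remove the entry from $\mathcal{A}$, then unlink the constant number of out-edges from their parent-lists and discard $v_e$.

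The only real obstacle is ensuring that each splice/unlink in a parent list is genuinely $O(1)$, since looking up the cell to remove by scanning would cost $O(|\mathit{parents}(e')|)$. I would handle this by storing, alongside each out-going edge in $v_e$, a direct pointer to the cell it owns in the corresponding child's parent list; that way both insertions (at the head) and deletions (via the stored pointer) are constant time. With this bookkeeping each add/remove touches $\mathcal{A}$ once and performs $O(1)$ pointer manipulations on $\mathcal{B}$, giving the claimed $O(f_{\mathcal{A}})$ update bound and completing the proof.
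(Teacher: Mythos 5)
Your proof is correct and fills in exactly the details the paper elides: the paper simply asserts this lemma as a consequence of the abstract $\mathcal{A}$/$\mathcal{B}$ specifications together with the one-line remark about maintaining $\mathit{parents}(e)$ as a doubly-linked list, and your explicit treatment of the back-pointer from each out-edge into its cell in the child's parent list is the right way to make the $O(1)$-per-edge deletion rigorous. The only small imprecision is the count of nested $\mathcal{A}$-queries for $\mathit{xexpr}\in\mathcal{V}^{+}$: the recursive definition $\mathit{Sig}(x)=\mathit{Sig}\bigl(\mathit{Sig}(x[1..|x|-1])\,x[|x|]\bigr)$ yields two lookups for length $3$ and three for length $4$ (not one and two), but this has no effect on the $O(f_{\mathcal{A}})$ bound since the length is at most $4$.
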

Note that $\mathit{Assgn}(e)$, $\mathit{Sig}(\mathit{xexpr})$ and 
$\mathit{parents}(e)$ can return not only the signatures but also the
corresponding nodes in the DAG.


\section{Signature encoding}\label{sec:Framework}

In this section, we recall the \emph{signature encoding} 
first proposed by Mehlhorn et al.~\cite{DBLP:journals/algorithmica/MehlhornSU97}. 
The signature encoding of a string $T$ is a run-length ACFG
$\mathcal{G} = (\Sigma, \mathcal{V}, \mathcal{D}, S)$
where the assignments in $\mathcal{D}$ are determined by
recursively applying to $T$
the locally consistent parsing, 
the $\mathit{Encblock}$ function, and the $\mathit{Sig}$ function 
(recall Section~\ref{sec:preliminary}),
until a single integer $S$ is obtained.
More formally, we use the $\mathit{Shrink}$ and $\mathit{Pow}$ functions
in the signature encoding of string $T$ defined below:

\begin{eqnarray*}
 \shrink{t}{T} &=&
  \begin{cases}
   \mathit{Sig}^{+}(T) & \mbox{for } t = 0 \\
   \mathit{Sig}^{+}(\encblock{\pow{t-1}{T}}) &  \mbox{for } 0 < t \leq h, \\
  \end{cases}\\
 \pow{t}{T} &=& \mathit{Sig}^{+}(\encpow{\shrink{t}{T}}) \ \ \ \ \ \ \mbox{for } 0 \leq t \leq h,
\end{eqnarray*}
where $h$ is the minimum integer satisfying $|\pow{h}{T}| = 1$.
  Then, the start symbol of the signature encoding is $S = \pow{h}{T}$,
  and the height of the derivation tree of the signature encoding of $T$ is
  $O(h) = O(\log N)$, where $N = |T|$
  (see also Fig.~\ref{fig:SignatureTree} below).

\begin{example}[Signature encoding]\label{ex:signature_dictionary}
Let $\mathcal{G} = (\Sigma, \mathcal{V}, \mathcal{D}, S)$ be a run-length ACFG of Example~\ref{ex:tree}. 
Assuming $\encblock{\pow{0}{T}} = (7,5,6),(7,5,6),(5,6)^7,(5,6,4)$ and $\encblock{\pow{1}{T}} = (12,13,14)$ hold, 
$\mathcal{G}$ is the signature encoding of $T$ and $\id{T} =  17$.
See Fig.~\ref{fig:SignatureTree} for an illustration of the derivation tree of $\mathcal{G}$ and the corresponding DAG.
\end{example}

\begin{figure}[ht]
\begin{center}
  \includegraphics[scale=0.7]{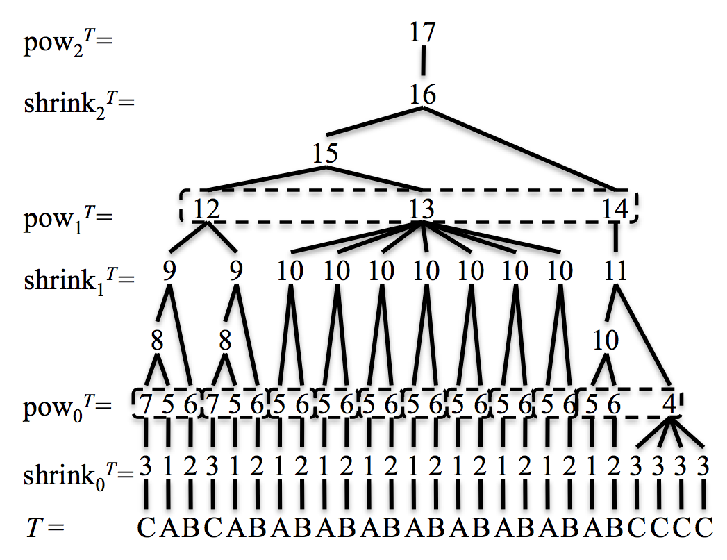}  
  \includegraphics[scale=0.7]{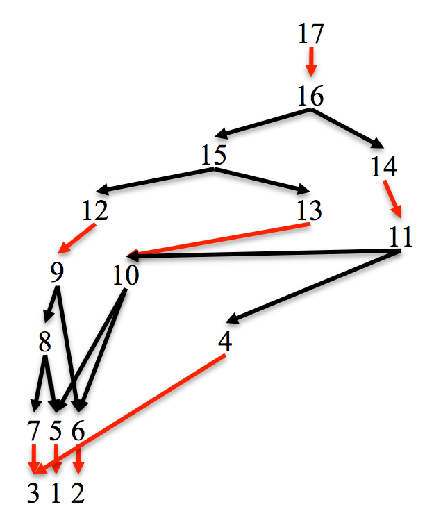}
  \caption{
  The derivation tree of $S$ (left) and the DAG for $\mathcal{G}$ (right) of Example~\ref{ex:tree}. 
  In the DAG, the black and red arrows represent $e \rightarrow e_le_r$ and
  $e \rightarrow \hat{e}^{k}$ respectively.   
  In Example~\ref{ex:signature_dictionary}, $T$ is encoded by signature encoding.
  In the derivation tree of $S$, the dotted boxes represent the blocks created by the $\mathit{Eblock}$ function.
  }
  \label{fig:SignatureTree}
\end{center}
\end{figure}

Each variable of the signature encoding (the run-length ACFG defined this way)
is called a \emph{signature}.
For any string $P \in \Sigma^+$,
let $\id{T} = \pow{h}{T} = S$, i.e.,
the integer $S$ is the signature of $T$.

The signature encoding of a text $T$ can be efficiently maintained
under insertion/deletion of arbitrary substrings to/from $T$.
For this purpose, we use the $\mathcal{H}(f_{\mathcal{A}}, f'_{\mathcal{A}})$
data structure for the signature encoding of a dynamic text.

\subsection{Properties of signature encodings}
Here we describe a number of useful properties of signature encodings.
The ones with references to the literature are known
but we provide their proofs for completeness.
The other ones without references are our new discoveries.

\subsubsection{Substring extraction.}
By the definition of the $\mathit{Eblock}$ function and Lemma~\ref{lem:CoinTossing},
for any $1 \leq t \leq h$,
$|\shrink{t}{T}| \leq |\pow{t-1}{T}|/2$ and $|\pow{t}{T}| \leq |\pow{t-1}{T}|/2$.
Thus $h \leq \log |s|$ and the height of the derivation tree of $e$ is $O(\log |\val{e}|)$ for any signature $e \in \mathcal{V}$.
Since each node of the DAG for a signature encoding
stores the length of the corresponding string, 
we have the following:
\begin{fact}\label{obs:RandomAccess}
  Using the DAG for a signature encoding $\mathcal{G} = (\Sigma, \mathcal{V}, \mathcal{D}, S)$ of size $w$,
  given a signature $e \in \mathcal{V}$ (and its corresponding node in the DAG),
  and two positive integers $i,k$, 
  we can compute $\val{e}[i..i+k-1]$ in $O(\log |\val{e}| + k)$ time. 
\end{fact}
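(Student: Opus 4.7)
The plan is to extract $\val{e}[i..i+k-1]$ by first descending from $v_e$ to the leaf at position $i$ in the derivation tree of $e$ and then performing an in-order walk through the $k$ consecutive leaves, emitting each character as it is visited. The paragraph preceding this Fact states that the derivation tree of any signature has height $O(\log |\val{e}|)$, and Lemma~\ref{lem:basic_operation_time} lets me read each node's assignment and its stored length in $O(1)$ time; these two properties drive the bound.

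For the initial descent, at an internal node $e'$ with $\mathit{Assgn}(e') = e_\ell e_r$ I compare the current offset $j$ against $|\val{e_\ell}|$ and descend into $e_\ell$ with offset $j$ or into $e_r$ with offset $j - |\val{e_\ell}|$. At a run-length node $\mathit{Assgn}(e') = \hat{e}^d$, a single division identifies the copy $c = \lceil j / |\val{\hat{e}}| \rceil$ and the inner offset $j - (c-1)|\val{\hat{e}}|$, so I descend into $\hat{e}$ while remembering $c$. Each step takes $O(1)$ time, and because the descent has length at most the tree height, it finishes in $O(\log |\val{e}|)$ time.

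For the walk phase, after emitting the current leaf I ascend until the current node still has an unvisited sibling to its right: for a binary production this means the current node is a left child, and for a run-length node $\hat{e}^d$ this means the current copy index $c$ satisfies $c < d$, in which case I simply increment $c$. I then descend leftmost into the new subtree. To bound the total cost, consider the virtual expansion of the derivation tree in which each run-length node $\hat{e}^d$ is replaced by $d$ copies of $\hat{e}$'s subtree; the walk visits exactly the edges of the minimal subtree spanning the $k$ consecutive leaves in this virtual tree. That spanning subtree has $k$ leaves, at most $k - 1$ internal nodes in which two or more children are retained, and at most $O(\log |\val{e}|)$ unary internal nodes lying on its left and right boundary paths, so it has $O(k + \log |\val{e}|)$ edges, each of which the walk traverses at most twice.

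The main subtlety is that the $d$ virtual children under a run-length node are not stored explicitly in the DAG. I would handle this by representing the current position inside such a node by a pair $(c, \text{inner position})$, so that moving between consecutive copies is realised by an $O(1)$ arithmetic update of $c$ rather than by following a stored edge. Under this accounting every virtual edge of the spanning subtree costs $O(1)$, and combining the walk cost with the initial descent cost gives the claimed $O(\log |\val{e}| + k)$ bound. \qed
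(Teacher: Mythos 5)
The paper states Fact~\ref{obs:RandomAccess} without proof, relying only on the preceding sentence (each DAG node stores $|\val{e}|$, and the derivation tree has height $O(\log|\val{e}|)$); your proof is the natural elaboration of exactly that argument, and it is correct: descend using stored lengths in $O(\log|\val{e}|)$ time, then run an in-order walk whose cost is bounded by the size of the spanning subtree of the $k$ consecutive leaves, keeping only a copy index $(c, \cdot)$ at run-length ancestors so that moving to the next copy is an $O(1)$ arithmetic step.

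One small imprecision worth flagging in the spanning-subtree count: you partition the internal nodes into ``branching'' nodes ($\leq k-1$ of them) and unary nodes that you place entirely on the two boundary paths ($O(\log|\val{e}|)$). However, the signature-encoding derivation tree has genuine unary nodes \emph{off} the boundary as well, because $\mathit{Pow}$ levels produce nodes $e \to \hat e^{1}$ whenever a signature in $\shrink{t}{T}$ is not repeated adjacently, and such a node can sit strictly inside a fully contained subtree. The conclusion $O(k+\log|\val{e}|)$ is nonetheless right: every fully contained subtree with $m$ leaves has $O(m)$ nodes in total (the $\mathit{Shrink}$ step at least halves the number of nodes at each pass, giving a geometric series, and each interior unary $\mathit{Pow}$ node has a distinct non-unary child to charge it to), so the interior portion contributes $O(k)$ nodes and the boundary paths contribute $O(\log|\val{e}|)$. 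You may want to phrase the count that way rather than asserting all unary internal nodes lie on the boundary.
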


\subsubsection{Space requirement of the signature encoding.}

Recall that we handle dynamic text of length at most $M$.
Then, the maximum value of the signatures is bounded by $3M-1$,
since the derivation tree can contain at most $M$ leaves,
and $2M-1$ internal nodes
(when there are no runs of same signatures at any height
of the derivation tree,
$\mathit{Pow}$ function generates
as many signatures as the $\mathit{Shrink}$ function).
We also remark that 
the input of the $\mathit{Eblock}$ function is a sequence of signatures.
Hence, $\deltaLR{L}$ of Lemma~\ref{lem:CoinTossing} is bounded by $\log^* 3M + 6 = O(\log^* M)$. 
Note that we can bound $M = \Theta(|T|)$ if we do not update $\mathcal{G}$ after we compute $\id{T}$.

Let $N$ be the length of the current text $T$.
The size $w$ of the signature encoding of $T$
is bounded by $3N-1$ by the same reasoning as above.
Also, the following lemma shows that
the signature encoding of $T$ requires only \emph{compressed space}:
\begin{lemma}[\cite{18045}]\label{lem:upperbound_signature}
The size $w$ of the signature encoding of $T$ is 
$O(z \log N \log^* M)$,
where $z$ is the number of factors in the LZ77 factorization without self-reference of $T$. 
\end{lemma}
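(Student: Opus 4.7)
The plan is to bound, level by level, the number of distinct signatures appearing in the derivation tree of the signature encoding of $T$, and to sum these bounds over all levels.

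First, I would show that the height of the derivation tree is $O(\log N)$. By Lemma~\ref{lem:CoinTossing}, every block produced by the locally consistent parsing has size at least $2$, so each $\mathit{Shrink}$ step at least halves the length of the current sequence, while $\mathit{Pow}$ never increases the length. Hence after $h = O(\log N)$ alternations the sequence collapses to the single signature $S$.

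Second, I would bound the number of distinct signatures appearing at level $t$ by $O(z \log^* M)$. The key property of the locally consistent parsing is that a block boundary in $\encblock{\pow{t-1}{T}}$ is determined by a window of $\deltaLR{L} + \deltaLR{R} + 1 = O(\log^* M)$ consecutive entries of $\pow{t-1}{T}$. Consider the LZ77 factorization $T = f_1 \cdots f_z$ without self-references: each factor $f_i$ with $i \geq 2$ has a prior occurrence inside $f_1 \cdots f_{i-1}$. I would prove, by induction on $t$, that any level-$t$ signature arising from a position whose level-$t$ influence window lies strictly inside $f_i$ is identical to the level-$t$ signature at the corresponding position of the earlier occurrence of $f_i$. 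Thus every level-$t$ signature that is genuinely new (relative to those created from $f_1 \cdots f_{i-1}$) must be attributable to a position lying within an $O(\log^* M)$-wide boundary zone of some factor, yielding $O(z \log^* M)$ new signatures per level. Multiplying by the $O(\log N)$ levels yields $w = O(z \log N \log^* M)$.

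The main obstacle will be formalizing the inductive invariant in the presence of the $\mathit{Pow}$ step: a maximal run of equal signatures at level $t$ may straddle the boundary between $f_{i-1}$ and $f_i$, so the run-length encoding at that location can differ from the corresponding location at the earlier occurrence of $f_i$. I plan to charge each such straddling run as $O(1)$ extra "new" signatures at level $t$ to the affected factor boundary; since there are only $z$ boundaries this leaves the per-level count $O(z \log^* M)$ intact. A secondary subtlety is that the physical extent in $T$ covered by the $O(\log^* M)$-entry influence window at level $t$ grows with $t$, but this only widens the absolute boundary zone in $T$ and does not increase the number of new signatures it contributes at level $t$, since each of its $O(\log^* M)$ entries produces at most one new signature at that level.
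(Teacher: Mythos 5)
Your plan is correct and follows essentially the same approach as the paper: both amortize the count of new signatures over the $z$ LZ77 factors, using local consistency of the parsing to argue that signatures arising at positions whose context lies entirely inside a factor are reused from the factor's earlier occurrence, so that each factor contributes only an $O(\log^* M)$-wide boundary fringe of new signatures at each of the $O(\log N)$ levels. The paper packages exactly this counting as Lemmas~\ref{lem:ancestors} and~\ref{lem:concatenate1} (bounding the ancestors of $\uniq{f_1\cdots f_{i-1}}\uniq{f_i}$ by $O(\log N \log^* M)$ per $\mathit{INSERT}(f_i,\cdot)$), so your level-by-level count is a direct unpacking of that argument rather than a genuinely different route; the subtleties you flag (straddling runs in the $\mathit{Pow}$ step, the growing physical extent of the influence window) are precisely what the common-sequence machinery and Definition~\ref{def:xshrink} are built to handle.
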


\begin{proof}
See Appendix~\ref{sec:Proof_HConstructuionTheorem}.
\end{proof}

Hence, we have $w = O(\min\{z \log N \log^* M, N\})$.
In the sequel, we assume $z \log N \log^* M \leq N$
and will simply write $w = O(z \log N \log^* M)$,
since otherwise we can use some uncompressed dynamic text index
in the literature.

\subsubsection{Common sequences to all occurrences of same substrings.}
Here, we recall the most important property of the signature encoding.



Let $\mathcal{G} = (\Sigma, \mathcal{V}, \mathcal{D}, S)$
be the signature encoding of text $T$.
Let $i, j$~($i < j$) be any positions in $T$, and let $P = T[i..j]$.
Let $\mathbb{P}_i$ and $\mathbb{P}_{j}$ be the paths from
the root of the derivation tree of $\mathcal{G}$
to the $i$th and $j$th leaves, respectively.
Then, at each depth $\ell$ of the derivation tree of $\mathcal{G}$,
consider the sequence $s_{i, j, \ell}$ of signatures which lie to the right of
$\mathbb{P}_i$ with offset $\Delta_L + 3$ 
and to the left of $\mathbb{P}_j$ with offset $\Delta_R + 2$.
By the property of locally consistent parsing of Lemma~\ref{lem:CoinTossing},
$s_{i, j, \ell} = s_{i', j', \ell}$ for any occurrences $[i'..j']$ of $P$
in $T$ and for any depth $\ell$.
We call each signature contained in $s_{i, j, \ell}$
a \emph{consistent signature} w.r.t. $P$.

Formally, we define the consistent signatures of $P$ in the derivation tree
of $\mathcal{G}$ by the $\emph{XShrink}$ and $\emph{XPow}$ functions below,
where the prefix of length at least $\Delta_L$ and the suffix of length at least 
$\Delta_R+1$ are ``ignored'' at each depth of recursion:
\begin{definition}\label{def:xshrink}
For a string $P$, let
\begin{eqnarray*}
 \mathit{XShrink}_t^{P} &=&
  \begin{cases}
   \mathit{Sig}^{+}(P) & \mbox{ for } t = 0, \\
   \mathit{Sig}^{+}(\encblockd{\mathit{XPow}_{t-1}^{P}}[|L_{t}^{P}|..|\mathit{XPow}_{t-1}^{P}|-|R_{t}^{P}|]) & \mbox{ for } 0 < t \leq h^{P}, \\
  \end{cases} \\
\mathit{XPow}_t^{P} &=& \mathit{Sig}^{+}(\encpow{\mathit{XShrink}_t^{P}[|\hat{L}_{t}^{P}| + 1..|\mathit{XShrink}_t^{P}| - |\hat{R}_{t}^{P}]}|) \ \mbox{ for } 0 \leq t < h^{P}, 
\end{eqnarray*}
where
\begin{itemize}
  \item $L_{t}^{P}$ is the shortest prefix of $\mathit{XPow}_{t-1}^{P}$ of length at least $\deltaLR{L}$ such that $d[|L_{t}^{P}|+1] = 1$,
  \item $R_{t}^{P}$ is the shortest suffix of $\mathit{XPow}_{t-1}^{P}$ of length at least $\deltaLR{R}+1$ such that $d[|d| - |R_{t}^{P}| + 1] = 1$,
  \item $\hat{L}_{t}^{P}$ is the longest prefix of $\mathit{XShrink}_t^{P}$ such that $|\encpow{\hat{L}_{t}^{P}}|  = 1$,
  \item $\hat{R}_{t}^{P}$ is the longest suffix of $\mathit{XShrink}_t^{P}$ such that $|\encpow{\hat{R}_{t}^{P}}| = 1$, and
  \item and $h^{P}$ is the minimum integer such that 
$|\encpow{\mathit{XShrink}_{h^{P}}^{P}}| \leq \Delta_{L} + \Delta_{R} + 9$. 
\end{itemize}
\end{definition}
Note that $\Delta_{L} \leq |L_{t}^{P}| \leq \Delta_{L} + 3$ and $\Delta_{R}+1 \leq |R_{t}^{P}| \leq \Delta_{R} + 4$ hold by the definition. 
Hence $|\xshrink{t+1}{P}| > 0$ holds if $|\encpow{\xshrink{t}{P}}| > \Delta_{L} + \Delta_{R} + 9$. 
See Fig.~\ref{fig:CommonSequence} for illustrations of
consistent signatures of each occurrence of $P$ in $T$,
which are represented by the gray boxes.
Since at each depth we have ``ignored'' the left and right contexts
of respective length at most $\Delta_L+3$ and $\Delta_R+4$,
the consistent signatures at each depth are determined
only by the consistent signatures at the previous depth (1 level deeper).
This implies that for \emph{any} occurrences of $P$ in $T$,
there are common consistent signatures (gray boxes),
which will simply be called the \emph{common signatures} of $P$.
The next lemma formalizes this argument.


\begin{figure}[h]
\begin{center}
  \includegraphics[scale=0.5]{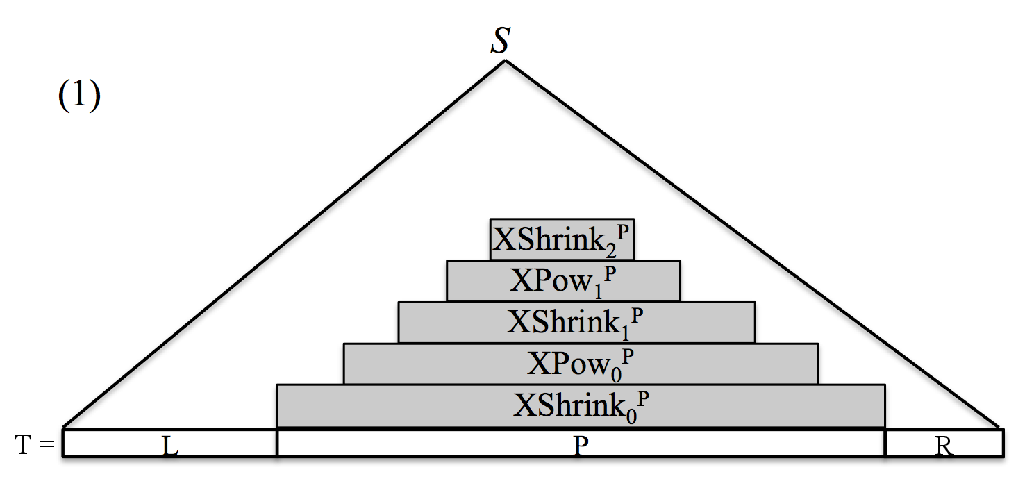}
  \includegraphics[scale=0.6]{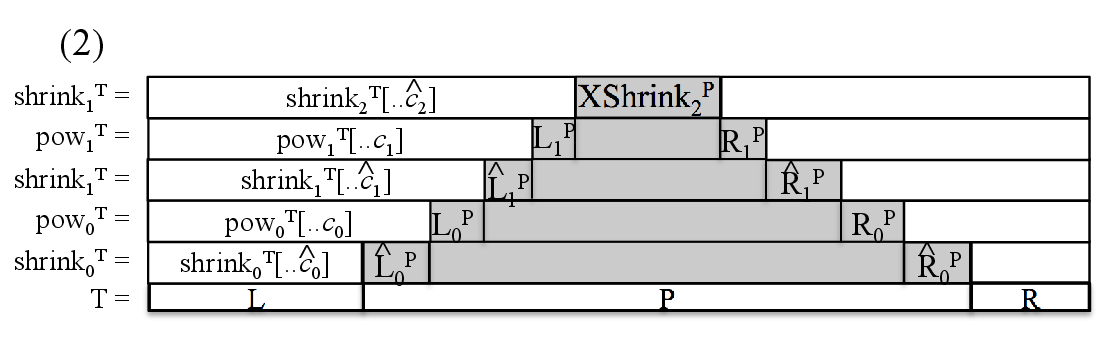}
  \includegraphics[scale=0.5]{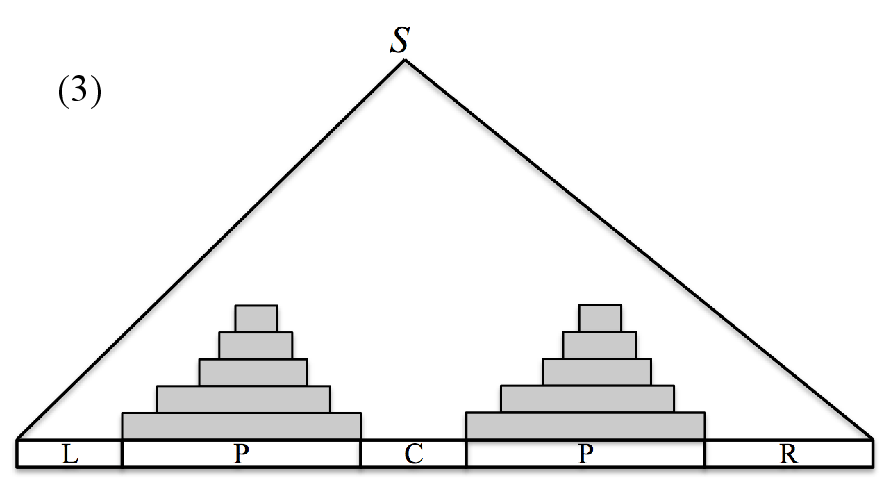}
  \includegraphics[scale=0.6]{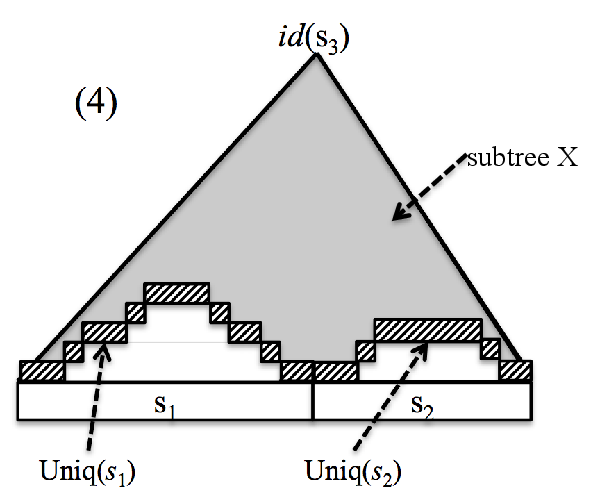}
  \caption{
  Abstract images of consistent signatures of substring $P$ of text $T$,
  on the derivation trees of the signature encoding of $T$.
  Gray rectangles in Figures (1)-(3) represent consistent signature sequences for occurrences of $P$. 
  (1) Each $\mathit{XShrink}_{t}^{P}$ and $\mathit{XPow}_{t}^{P}$ occur on substring $P$ 
  in $\mathit{shrink}_{t}^{T}$ and $\mathit{Pow}_{t}^{T}$, respectively, where $T = LPR$.
  (2) The substring $P$ can be represented by 
  $\hat{L}_{0}^{P}L_{0}^{P}\hat{L}_{1}^{P}L_{1}^{P}\mathit{XShrink}_{2}^{P}R_{1}^{P}\hat{R}_{1}^{P}R_{0}^{P}\hat{R}_{0}^{P}$. 
  (3) There exist common signatures on every substring $P$ in the derivation tree.
  (4) The derivation tree of $\id{s_3}$ and the subtree $X$ in the proof of Lemma~\ref{lem:concatenate1}. 
  } 
  \label{fig:CommonSequence}
\end{center}
\end{figure}

\begin{lemma}[common sequences~\cite{18045}]\label{lem:common_sequence2}
  Let $\mathcal{G} = (\Sigma, \mathcal{V}, \mathcal{D}, S)$ be
  the signature encoding of text $T$ and let $P$ be any string.
  Then there exists a common sequence $v = e_1, \ldots, e_d$ of signatures
  w.r.t. $\mathcal{G}$ which satisfies the following three conditions:
(1) $\valp{v} = P$,
(2) $|\encpow{v}| = O(\log |P| \log^* M)$, and
(3) for any $e \in \mathcal{V}$ and integer $i$ such that $\val{e}[i..i+|P|-1] = P$, $v$ occurs at position $i$ in $e$. 
\end{lemma}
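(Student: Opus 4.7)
The plan is to construct $v$ explicitly by peeling off the ``wings'' of $P$ at every level of the signature encoding and keeping the ``core'' at the deepest level, exactly mirroring the recursion in Definition~\ref{def:xshrink}. More precisely, for each level $t = 0, 1, \ldots, h^P$ I would record the four boundary pieces $\hat{L}_t^P$, $L_t^P$, $R_t^P$, $\hat{R}_t^P$ produced when we go from $\xshrink{t}{P}$ through $\xpow{t}{P}$ to $\xshrink{t+1}{P}$, and at the bottom I would keep $\xshrink{h^P}{P}$. Reading these pieces in left-to-right order of the derivation tree (all left wings outermost-first, then the core, then all right wings in reverse order) gives the signature sequence $v$. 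Property (1), $\valp{v} = P$, then follows by a straightforward downward induction on the depth, since at each level the concatenation of the wings with the $\mathit{Shrink}$/$\mathit{Pow}$-image of the next level's core expands exactly to the current level's sequence, and at level $0$ the sequence is $\sigp{P}$.

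For property (2), I would bound $|\encpow{v}|$ level by level. By the definitions, $|L_t^P| \leq \Delta_L + 3$ and $|R_t^P| \leq \Delta_R + 4$, so each contributes $O(\log^* M)$ signatures, and $\hat{L}_t^P$, $\hat{R}_t^P$ are each a single maximal run and hence contribute at most a constant to $|\encpow{\cdot}|$. Thus each level adds $O(\log^* M)$ maximal runs. The number of levels is $O(\log |P|)$ since $|\xpow{t}{P}|$ at least halves whenever it exceeds the $\Delta_L + \Delta_R + 9$ threshold (by the $d[i] + d[i+1] + d[i+2] + d[i+3] \geq 1$ guarantee in Lemma~\ref{lem:CoinTossing}), and the deepest core contributes only $O(\log^* M)$ runs by the stopping condition. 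Summing gives the claimed $O(\log |P| \log^* M)$ bound.

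The heart of the argument is (3). I would prove by downward induction on $t$ that for any $e \in \mathcal{V}$ and any position $i$ with $\val{e}[i..i+|P|-1] = P$, the derivation tree of $e$ carries, at the nodes whose leaves cover that occurrence, exactly the level-$t$ portion of the constructed $v$. The key tool is Lemma~\ref{lem:CoinTossing}: the bit deciding a block boundary at position $j$ is a function only of the window $[j - \Delta_L .. j + \Delta_R]$. Consequently, after stripping a prefix $L_t^P$ of length at least $\Delta_L$ and a suffix $R_t^P$ of length at least $\Delta_R + 1$ from $\xpow{t-1}{P}$, every block boundary inside the trimmed middle is determined by signatures lying inside the occurrence of $P$ in $\val{e}$, so it agrees with the boundary structure dictated by $P$ alone, regardless of what surrounds $P$ in $\val{e}$. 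A parallel and simpler argument handles the $\mathit{Epow}$ step via $\hat{L}_t^P$ and $\hat{R}_t^P$, which absorb the only runs whose length might depend on signatures outside $P$.

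The chief obstacle is the precise bookkeeping of these offsets across the alternation of $\mathit{Shrink}$ and $\mathit{Pow}$. The constants $\Delta_L + 3$ and $\Delta_R + 4$ appearing implicitly in the bounds for $|L_t^P|, |R_t^P|$, together with the ``single maximal run'' definitions of $\hat{L}_t^P$ and $\hat{R}_t^P$, are tuned so that every window of Lemma~\ref{lem:CoinTossing} that touches a boundary within the core sees only signatures that are themselves determined by content strictly inside the occurrence of $P$, even after $\mathit{Epow}$ collapses runs of a priori unknown length. Verifying rigorously that these offsets suffice to simultaneously absorb the $\mathit{Shrink}$ window of radius $(\Delta_L, \Delta_R)$ and the positional shift introduced by the following $\mathit{Pow}$ step is the delicate combinatorial check underlying the whole construction; once this is established, the induction closes cleanly and the three properties follow together.
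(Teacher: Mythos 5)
Your construction is exactly the paper's $\mathit{Uniq}(P)$: concatenate the boundary pieces $\hat{L}_t^P, L_t^P$ at each level (outermost first), the bottom-level core $\mathit{XShrink}_{h^P}^P$, and then the $R_t^P, \hat{R}_t^P$ pieces in reverse order, and verify the three conditions by the same local-consistency argument using Lemma~\ref{lem:CoinTossing}. One small wording issue: for condition~(3) the induction actually runs \emph{upward} from $t=0$ (where $\mathit{XShrink}_0^P = \sigp{P}$ trivially occurs at position $i$) toward $t = h^P$, not downward, but the substance of your argument — that trimming $L_t^P$ of length at least $\Delta_L$ and $R_t^P$ of length at least $\Delta_R+1$ makes the surviving block boundaries depend only on signatures inside the occurrence, with the $\mathit{Epow}$ step handled by the single-run $\hat{L}_t^P,\hat{R}_t^P$ — is precisely the paper's reasoning.
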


\begin{proof}
We consider the following short sequence $\mathit{Uniq}(P)$ of signatures 
which represents $P$ (see also Fig.~\ref{fig:CommonSequence}):
\[ 
\mathit{Uniq}(P) = \hat{L}_{0}^{P}L_{0}^{P} \cdots 
\hat{L}_{h^{P}-1}^{P}L_{h^{P}-1}^{P}\mathit{XShrink}_{h^{P}}^{P}R_{h^{P}-1}^{P}\hat{R}_{h^{P}-1}^{P} \cdots R_{0}^{P}\hat{R}_{0}^{P},
\] 
where $h^{P}$ is the minimum integer such that 
$|\encpow{\mathit{XShrink}_{h^{P}}^{P}}| \leq \Delta_{L} + \Delta_{R} + 9$. 
We show Lemma~\ref{lem:common_sequence2} using $\mathit{Uniq}(P)$, 
namely, we show $v = \mathit{Uniq}(P)$ satisfies all conditions (1)-(3).  
(1) This follows from Definition~\ref{def:xshrink} (see also Fig.~\ref{fig:CommonSequence}(2)). 
(2) $|\encpow{\mathit{XShrink}_{h^{P}}^{P}}| \leq \Delta_{L} + \Delta_{R} + 9$, 
$h^{P} \leq \log |P|$, 
$|L_{t}^{P}| = \Theta(\Delta_{L})$, 
$|R_{t}^{P}| = \Theta(\Delta_{R})$ and 
$|\encpow{\hat{L}_{t}^{P}}| = |\encpow{\hat{R}_{t}^{P}}| = 1$ for $1 \leq t \leq h^{P}$. 
Hence $|\encpow{\mathit{Uniq}(P)}| = O(\log |P| \log^* M)$. 
(3) 
For simplicity, here we only consider the case where $\val{e} = T$,
since other cases can be shown similarly.
Consider any integer $i$ with $T[i..i+|P|-1] = P$
(see also Fig.~\ref{fig:CommonSequence}(2)). 
Note that for $0 \leq t < h^{P}$,
if $\mathit{XShrink}_{t}^{P}$ occurs in $\shrink{t}{T}$, then 
$\mathit{XPow}_{t}^{P}$ always occurs in $\pow{t}{T}$,
because $\mathit{XPow}_{t}^{P}$ is determined only by $\mathit{XShrink}_{t}^{P}$. 
Similarly, for $0 < t \leq h^{P}$, if $\mathit{XPow}_{t-1}^{P}$ occurs in $\pow{t-1}{T}$, 
then $\mathit{XShrink}_{t}^{P}$ always occurs in $\shrink{t}{T}$.
Since $\mathit{XShrink}_{0}^{P}$ occurs at position $i$ in $\shrink{0}{T}$, 
$\mathit{XShrink}_{t}^{P}$ and $\mathit{XPow}_{t}^{P}$ occur in the derivation tree of $\id{T}$. 
Hence we discuss the positions of $\mathit{XShrink}_{t}^{P}$ and $\mathit{XPow}_{t}^{P}$.
Now, let $\hat{c}_{t}$ + 1 and $c_{t}$ + 1 be the beginning positions of
the corresponding occurrence of $\mathit{XShrink}_{t}^{P}$ in $\shrink{t}{T}$ and 
that of $\mathit{XPow}_{t}^{P}$ in $\pow{t}{T}$, respectively. 
Then $\shrink{t}{T}[..\hat{c}_{t}]$ consists of $\pow{t-1}{T}[..c_{t-1}]$ and $L_{t-1}^{P}$ for $0 < t \leq h^{P}$. 
Also, $\pow{t}{T}[..c_{t}]$ consists of $\shrink{t}{T}[..\hat{c}_{t}]$ and $\hat{L}_{t}^{P}$ for $0 \leq t < h^{P}$. 
This means that $\mathit{Uniq}(P)$ occurs at position $i$ in $\id{T}$. 

Therefore Lemma~\ref{lem:common_sequence2} holds.
\qed
\end{proof}
The sequence $v$ of signatures in Lemma~\ref{lem:common_sequence2} is called
a \emph{common sequence} of $P = \val{e}[i..i+k-1]$ w.r.t. $\mathcal{G}$. 
Lemma~\ref{lem:common_sequence2} implies that any substring $P$ 
of $T$ can be represented by a sequence $p$ of signatures with
$|\encpow{p}| = O(\log |P| \log^* M)$.
The common sequences are conceptually equivalent to
the \emph{cores}~\cite{maruyama13:_esp} which are defined for the
\emph{edit sensitive parsing} of a text,
a kind of locally consistent parsing of the text.


The number of ancestors of nodes corresponding to $\mathit{Uniq}(P)$ is upper bounded by the next lemma.
\begin{lemma}\label{lem:ancestors}
  Let $T$ and $P$ be strings, and let $\mathcal{T}$ be the derivation tree of the signature encoding of $T$.
  Consider an occurrence of $P$ in $T$,
  and the induced subtree $X$ of $\mathcal{T}$
  whose root is the root of $\mathcal{T}$ and whose leaves are 
  the parents of the nodes representing $\uniq{P}$.
  Then $X$ contains $O(\log^* M)$ nodes for every height and
  $O(\log |T| + \log |P| \log^* M)$ nodes in total.
\end{lemma}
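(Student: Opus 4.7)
My plan is to bound $|X_\ell|$, the number of nodes of $X$ at each height $\ell$, and then sum over $\ell$.

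First, I would count the leaves of $X$. By Lemma~\ref{lem:common_sequence2}(2), $|\encpow{\uniq{P}}| = O(\log |P| \log^* M)$. Consecutive equal signatures in $\uniq{P}$ are children of a single Pow-level signature in $\mathcal{T}$, so the number of distinct parents of $\uniq{P}$ nodes---which are exactly the leaves of $X$---is at most $|\encpow{\uniq{P}}| = O(\log |P| \log^* M)$.

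Second, I would prove the per-height bound $|X_\ell| = O(\log^* M)$. The nodes of $X$ at height $\ell$ are strict ancestors of $\uniq{P}$ nodes at heights below $\ell$. The staircase form of $\uniq{P}$ contributes, at each level $t$, at most $|L_{t+1}^P| + |R_{t+1}^P| = O(\log^* M)$ signatures at height $2t+1$, plus the runs $\hat{L}_t^P, \hat{R}_t^P$ at height $2t$ (each represented by a single parent after one $\mathit{Pow}$ step), together with the top piece $\mathit{XShrink}_{h^P}^P$ with $|\encpow{\cdot}| = O(\log^* M)$. Thus only $O(\log^* M)$ new leaves of $X$ appear at each height. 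For the internal nodes of $X$ at height $\ell$, the key observation is that the staircase's range coverage grows exponentially with $t$: since $|\val e| \geq 2^{\mathrm{height}(e)/2}$, the ancestors at height $\ell$ of staircase pieces at heights far below $\ell$ are absorbed by the ancestors of the topmost staircase piece still below height $\ell$, leaving only $O(\log^* M)$ distinct ancestors---essentially the $O(\log^* M)$ height-$\ell$ ancestors of the boundary $L$- and $R$-signatures just below $\ell$, plus $O(1)$ from each of the left and right runs.

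Third, I would strengthen the bound for heights $\ell \geq 2\log|P|$. Using the same lower bound $|\val e| \geq 2^{\mathrm{height}(e)/2}$ (which follows because each $\mathit{Shrink}$ step at least halves the sequence length), a single height-$\ell$ node covers at least $2^{\ell/2} \geq |P|$ leaves, so the range of $P$ in $T$ is covered by $O(1)$ height-$\ell$ nodes; hence $|X_\ell| = O(1)$. Summing, the heights $\ell \leq 2\log|P| = O(\log|P|)$ contribute $O(\log|P| \log^* M)$ in total, and the heights $\ell > 2\log|P|$ up to the tree height $O(\log|T|)$ contribute $O(\log|T|)$ in total, yielding the claimed $O(\log|T| + \log|P|\log^* M)$. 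The per-height bound $O(\log^* M)$ subsumes the $O(1)$ bound for large $\ell$.

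The main obstacle is the tight per-height bound $O(\log^* M)$ in the staircase range: a naive argument that simply counts one ancestor per $\uniq{P}$ node at heights $<\ell$ would give $\Omega(\log|P|)$. The nontrivial step is to argue that the ancestors of lower staircase levels merge into those of higher levels at every height, which relies on the exponential growth of $|\val \cdot|$ with depth and on the locally consistent parsing ensuring that the $\mathit{Eblock}$ boundaries at each level lie within the $O(\log^* M)$-wide buffers $L_t^P, R_t^P$.
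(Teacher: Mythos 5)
Your proof is correct and reaches both bounds, but it takes a somewhat different route than the paper's. The paper's proof is a terse two-step recurrence: (i) by Definition~\ref{def:xshrink}, each height contributes $O(\log^* M)$ new leaves of $X$ (the parents of the $L_t^P, R_t^P, \hat L_t^P, \hat R_t^P$ pieces), and (ii) whenever $\mathit{Shrink}$ is applied to the $x$ nodes of $X$ at some height, they have at most $(x+2)/2$ parents, since blocks have size $\geq 2$ and only the two boundary blocks can leak outside. The fixed point of $x \mapsto (x+2)/2 + O(\log^* M)$ is $O(\log^* M)$, which gives the per-height bound immediately, and summing the same recurrence (the $O(\log^* M)$ additive term vanishes above height $\approx 2\log|P|$, after which the count geometrically decays to $O(1)$ per level) gives the total. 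You instead deduce the per-height and total bounds from the derived inequality $|\val e| \geq 2^{\mathrm{height}(e)/2}$; for the total your split at $\ell = 2\log|P|$ is actually a cleaner and more explicit accounting than the paper's. Where you are somewhat hand-wavy is the per-height bound for internal nodes: the claim that ``ancestors of lower staircase levels merge into those of higher levels'' is correct, but to make it rigorous you should say explicitly that the left (resp.\ right) staircase pieces below height $\ell = 2t^*$, namely $\hat L_0^P L_0^P \cdots \hat L_{t^*-1}^P L_{t^*-1}^P$, span a contiguous text range of length $O(\log^* M \cdot 2^{t^*})$, and since each height-$\ell$ node covers $\geq 2^{t^*}$ positions, only $O(\log^* M)$ height-$\ell$ nodes intersect that range (plus $O(1)$ boundary nodes). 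With that filled in, your argument is a valid alternative to the paper's recurrence; the paper's version is shorter and handles both bounds with one inequality, while yours buys a more transparent picture of where the $\log|T|$ and $\log|P|\log^*M$ terms come from.
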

\begin{proof}
  By Definition~\ref{def:xshrink}, for every height,
  $X$ contains $O(\log^* M)$ nodes that are parents of the nodes representing $\uniq{P}$.
  Lemma~\ref{lem:ancestors} holds because the number of nodes at some height is halved when $\mathit{Shrink}$ is applied.
  More precisely, considering the $x$ nodes of $X$ at some height to which $\mathit{Shrink}$ is applied,
  the number of their parents is at most $(x + 2) / 2$.
\end{proof}

The next Lemma immediately follows from Lemma~\ref{lem:ancestors}, which will be mainly used in the proof
of Lemma~\ref{lem:upperbound_signature} in Appendix and the proof of Lemma~\ref{lem:INSERT_DELETE}.
\begin{lemma}\label{lem:concatenate1}
  Let $s_1, s_2, s_3$ be any strings
  such that $s_3 = s_1s_2$, and let $\mathcal{T}$ be the derivation tree of $\id{s_3}$. 
  Consider the induced subtree $X$ of $\mathcal{T}$
  whose root is the root of $\mathcal{T}$ and whose leaves are 
  the parents of the nodes representing $\mathit{Uniq}(s_1)\mathit{Uniq}(s_2)$
  (see also Fig.~\ref{fig:CommonSequence}(4)).
  Then the size of $X$ is $O(\log |s_3| \log^* M)$.
\end{lemma}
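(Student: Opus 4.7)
The plan is to reduce the claim to two direct applications of Lemma~\ref{lem:ancestors}. Since $s_3 = s_1 s_2$, the string $s_1$ occurs in $s_3 = \val{\id{s_3}}$ at position $1$ and the string $s_2$ occurs in $s_3$ at position $|s_1|+1$, so Lemma~\ref{lem:ancestors} applies to each of these two occurrences on the derivation tree $\mathcal{T}$ of $\id{s_3}$.

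First I would apply Lemma~\ref{lem:ancestors} to the occurrence of $s_1$ at position $1$, obtaining an induced subtree $X_1$ of $\mathcal{T}$ whose leaves are the parents of the nodes representing $\uniq{s_1}$, with $|X_1| = O(\log|s_3| + \log|s_1|\log^* M)$. Then I would apply it again to the occurrence of $s_2$ at position $|s_1|+1$, obtaining an analogous induced subtree $X_2$ with $|X_2| = O(\log|s_3| + \log|s_2|\log^* M)$.

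Next I would argue that $X = X_1 \cup X_2$. By definition, the induced subtree of $\mathcal{T}$ rooted at the root and having a prescribed leaf set $L$ consists of exactly the union of root-to-leaf paths over $l \in L$ (together with the leaves themselves); hence the induced subtree associated with the leaf set corresponding to the concatenation $\uniq{s_1}\uniq{s_2}$ is the union of the two individual induced subtrees. Combining this with the two bounds yields
\[
|X| \leq |X_1| + |X_2| = O(\log|s_3| + (\log|s_1| + \log|s_2|)\log^* M) = O(\log|s_3| \log^* M),
\]
using $|s_1|, |s_2| \leq |s_3|$ in the last step.

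The only mildly delicate point is checking that the leaves of $X$ decompose cleanly: the parents of the nodes representing $\uniq{s_1}$ lie above positions $[1, |s_1|]$ while those of $\uniq{s_2}$ lie above positions $[|s_1|+1, |s_3|]$, so the two leaf sets are disjoint and the union-of-induced-subtrees identity applies without modification. Beyond that, the argument is a direct bookkeeping step on top of Lemma~\ref{lem:ancestors}, so I do not anticipate a genuine obstacle.
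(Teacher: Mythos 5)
Your proof is correct and matches what the paper intends: the paper gives no explicit argument, simply stating that Lemma~\ref{lem:concatenate1} ``immediately follows from Lemma~\ref{lem:ancestors}'', and your two-application decomposition with $X \subseteq X_1 \cup X_2$ is exactly that bookkeeping step. One small aside: the claimed disjointness of the two leaf sets is not strictly needed (and could in fact fail at the $s_1/s_2$ boundary, where nodes of $\uniq{s_1}$ and $\uniq{s_2}$ can share a parent), but the union-of-root-to-leaf-paths identity gives $|X| \leq |X_1| + |X_2|$ regardless of overlap, so the bound is unaffected.
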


The following lemma is about the computation of a common sequence of $P$.
\begin{lemma}\label{lem:ComputeShortCommonSequence}
Using the DAG for a signature encoding $\mathcal{G} = (\Sigma, \mathcal{V}, \mathcal{D}, S)$ of size $w$,  
given a signature $e \in \mathcal{V}$ (and its corresponding node in the DAG)
and two integers $i$ and $k$, 
we can compute $\encpow{\mathit{Uniq}(s[i..i+k-1])}$ in $O(\log |s| + \log k \log^* M)$ time, 
where $s = \val{e}$.
\end{lemma}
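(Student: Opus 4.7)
The plan is to navigate the DAG of the signature encoding top-down and construct $\uniq{P}$ for $P = \val{e}[i..i+k-1]$ level by level, without ever materializing $P$ explicitly. First, I descend from the node $v_e$ toward positions $i$ and $i+k-1$ along two root-to-leaf paths of the derivation tree of $e$; this costs $O(\log|s|)$ time since the derivation tree has height $O(\log|s|)$, and along the way, at each height $t$, I record the two frontier signatures together with $O(\log^* M)$ of their neighboring siblings (each reachable in $O(1)$ time via $\mathit{Assgn}$ and DAG child/parent pointers, using the sibling orderings induced by productions of the form $e \rightarrow e_\ell e_r$ or $e \rightarrow \hat{e}^d$).

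Working bottom-up on these two frontiers, at each height $t$ I recover $L_{t}^{P}$, $R_{t}^{P}$, $\hat{L}_{t}^{P}$, $\hat{R}_{t}^{P}$ and emit the contribution of $\uniq{P}$ at that height. By Definition~\ref{def:xshrink} and Lemma~\ref{lem:CoinTossing}, the locally consistent parsing of $\pow{t-1}{P}$ is determined by only $O(\log^* M)$ characters of local context around each boundary; these characters are precisely what the top-down pass made available, so a single invocation of the parsing function $f$ of Lemma~\ref{lem:CoinTossing} at each boundary yields the required pieces in $O(\log^* M)$ time per level. Applying $\mathit{Sig}$ to the resulting length-$\leq 4$ tuples and to runs of equal signatures then produces the new common signatures at height $t$. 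The recursion terminates at height $h^{P}$ when $|\encpow{\xshrink{h^{P}}{P}}| \leq \Delta_{L} + \Delta_{R} + 9$, and $h^{P} = O(\log k)$.

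The output is emitted directly in the run-length-encoded form $\encpow{\uniq{P}}$: by construction $\hat{L}_{t}^{P}$ and $\hat{R}_{t}^{P}$ are each a single maximal run, while the signatures appearing in $L_{t}^{P}$, $R_{t}^{P}$, and the topmost block $\xshrink{h^{P}}{P}$ contribute only $O(\log^* M)$ run-length entries per level. Summing over the $O(\log k)$ relevant levels gives $O(\log k \log^* M)$ entries and time; adding the $O(\log|s|)$ cost of the initial top-down descent yields the claimed $O(\log|s| + \log k \log^* M)$ bound, which also coincides with the size of the induced subtree $X$ bounded in Lemma~\ref{lem:ancestors}.

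The main obstacle will be specifying precisely how, at each height $t$, to assemble enough of $\pow{t-1}{P}$ near both boundaries to apply the parsing function $f$ correctly: this requires splicing the boundary pieces passed up from height $t-1$ (which lie strictly inside $P$) with the $O(\log^* M)$ signatures accessed along the two frontier paths of the derivation tree of $e$ (which may lie outside $P$), and verifying that the resulting local window always contains the $\Delta_L + \Delta_R + O(1)$ context characters demanded by Lemma~\ref{lem:CoinTossing}. Once this inductive invariant on the frontier windows is maintained, correctness follows from Lemma~\ref{lem:common_sequence2} and the complexity bound from Lemma~\ref{lem:ancestors}.
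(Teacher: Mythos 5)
Your high-level plan (descend two root-to-leaf paths, emit $\uniq{P}$ level by level, bound the work via Lemma~\ref{lem:ancestors}) is structurally the right idea, and Lemma~\ref{lem:common_sequence2} is indeed the key ingredient. But there is a genuine gap in the middle: you propose to \emph{produce} the signatures of $\uniq{P}$ by re-running the parsing function $f$ of Lemma~\ref{lem:CoinTossing} on reconstructed windows of $\pow{t-1}{P}$ and then ``applying $\mathit{Sig}$'' to the resulting blocks and runs. This step is both unnecessary and too expensive. It is unnecessary because Lemma~\ref{lem:common_sequence2}(3) already guarantees that $\uniq{P}$ occurs at position $i$ in $e$, so every signature of $\uniq{P}$ is already the label of an actual node in the derivation tree of $e$, sitting inside the induced subtree $X$ of Lemma~\ref{lem:ancestors}. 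You do not have to re-derive the parsing or look anything up — the block boundaries are exactly the edges of the tree, and you can read the signatures directly as node labels while traversing $X$ top-down. It is too expensive because $\mathit{Sig}$ is a dictionary lookup supported by the component $\mathcal{A}$ of $\mathcal{H}$, costing $O(f_{\mathcal{A}})$ per call (Lemma~\ref{lem:basic_operation_time}); the lemma you are proving grants access only to the DAG, not to $\mathcal{A}$, and in any case issuing $\Theta(\log k \log^* M)$ calls to $\mathit{Sig}$ would give $\Theta(\log k\, \log^* M \cdot f_{\mathcal{A}})$ time, which exceeds the stated $O(\log|s| + \log k \log^* M)$ bound (e.g., $f_{\mathcal{A}} = \Theta(\log w)$ for a balanced BST).

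Once you drop the $f$/$\mathit{Sig}$ re-derivation and simply enumerate the labels of the nodes of $\uniq{P}$ by walking the $O(\log|s| + \log k \log^* M)$-size subtree $X$ starting from $v_e$ (descending $O(\log|s|)$ edges to reach its root, the LCA of the $\uniq{P}$ nodes), you recover the paper's argument; the ``main obstacle'' you flag at the end disappears because the required local context never needs to be materialized.
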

\begin{proof}
  Let $v$ be the common sequence of nodes
  which represents $\mathit{Uniq}(s[i..i+k-1])$
  and occurs at position $i$ in $e$.
  Starting at the given node in the DAG which corresponds to $e$,
  we compute the induced subtree which represents $\mathit{Uniq}(s[i..i+k-1])$,
  rooted at the lowest common ancestor of the nodes in $v$.
  By Lemma~\ref{lem:ancestors},
  the size of this subtree is $O(\log |s| + \log k \log^* M)$.
  We can obtain the root of this subtree in $O(\log |s|)$ time
  from the node representing $e$.
  Hence Lemma~\ref{lem:ComputeShortCommonSequence} holds.
\qed
\end{proof}

The next lemma shows that we can compute $\LCEQ$ efficiently using the signature encoding of the (dynamic) text. 
\begin{lemma}\label{lem:sub_operation_lemma}
Using the DAG for a signature encoding $\mathcal{G} = (\Sigma, \mathcal{V}, \mathcal{D}, S)$ of size $w$, 
we can support queries $\LCEQ(s_1, s_2, i, j)$ and $\LCEQ(s_1^{R}, s_2^{R}, i, j)$ 
in $O(\log |s_1| + \log |s_2| + \log\ell \log^* M)$ time 
for given two signatures $e_1, e_2 \in \mathcal{V}$ and 
two integers $1 \leq i \leq |s_1|$, $1 \leq j \leq |s_2|$, 
where $s_1 = \val{e_1}$, $s_2 = \val{e_2}$ and $\ell$ is the answer to the $\LCEQ$ query. 
\end{lemma}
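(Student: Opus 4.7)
The plan is to combine a one-time descent from the roots with a level-by-level walk up the two derivation trees, leveraging the common-sequence property of Lemma~\ref{lem:common_sequence2} to skip matching chunks en masse at each level.

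First, I would navigate down the DAG from the nodes for $e_{1}$ and $e_{2}$ to the leaves at positions $i$ and $j$, recording the two root-to-leaf paths in $O(\log |s_{1}| + \log |s_{2}|)$ time via Fact~\ref{obs:RandomAccess}. Then I would walk the two trees upward in parallel, maintaining and comparing $\encpow{\uniq{s_{1}[i..i+k-1]}}$ and $\encpow{\uniq{s_{2}[j..j+k-1]}}$ as $k$ doubles. By condition~(3) of Lemma~\ref{lem:common_sequence2}, two occurrences of the same substring produce the \emph{same} sequence of common signatures in $\mathcal{G}$, so checking substring equality reduces to checking signature-equality of the two common sequences.

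A naive application of Lemma~\ref{lem:ComputeShortCommonSequence} per doubling step would cost $O((\log|s_{1}|+\log|s_{2}|)\log\ell + \log^{2}\ell \log^{*}M)$, which is too slow. The critical optimization is that doubling $k$ adds only one new wrapping layer around the existing common sequence, contributing $O(\log^{*}M)$ new run-length blocks on each side (since $|L_{t}^{P}|=\Theta(\log^{*}M)$, $|R_{t}^{P}|=\Theta(1)$, and $|\encpow{\hat{L}_{t}^{P}}|=|\encpow{\hat{R}_{t}^{P}}|=1$, while the inner layers computed at the previous iteration are reused unchanged). These new signatures are reachable from the ancestor just above the current tree cursors and can be extracted in $O(\log^{*}M)$ time per level using the precomputed paths, giving $O(\log\ell\log^{*}M)$ over the entire walk-up.

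When a mismatch is first detected, I would descend into the first differing pair of signatures to pinpoint the exact character of mismatch; since each signature halves its length per level, this descent visits $O(\log\ell)$ further levels and spends $O(\log^{*}M)$ work at each, contributing $O(\log\ell\log^{*}M)$ more. The reverse query $\LCEQ(s_{1}^{R},s_{2}^{R},i,j)$ is handled symmetrically by descending to positions $|s_{1}|-i+1$ and $|s_{2}|-j+1$ and extending the common sequences leftward, using the symmetric $L,\hat{L}$ wrappers of $\uniq{\cdot}$. The main obstacle will be making the $O(\log^{*}M)$-per-level claim fully precise: one must verify that in going from level $t$ to level $t+1$, the new wrapping signatures depend only on $O(\log^{*}M)$ ancestors on the precomputed root-to-leaf paths together with their $\Theta(\log^{*}M)$-sized locally consistent neighborhoods, so that the extension triggers no further $O(\log|s|)$ descent and the incremental comparison stays local.
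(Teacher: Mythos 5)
Your plan shares the paper's core insight---bounding the traversal cost by the ancestors of the common sequence $\uniq{P}$ via Lemmas~\ref{lem:common_sequence2} and~\ref{lem:ancestors}---but the doubling strategy has a gap that the paper's greedy left-to-right matching avoids. The crux of your amortization is the claim that ``doubling $k$ adds only one new wrapping layer around the existing common sequence, \ldots\ while the inner layers computed at the previous iteration are reused unchanged.'' That reuse is sound for the left wrappers $\hat{L}_t^P L_t^P$, which are anchored at the fixed start position $i$ and are indeed stable once the parsing window around the start lies inside $P$. It is \emph{not} sound for the right wrappers $R_t^P\hat{R}_t^P$, which are anchored at the endpoint $i+k-1$: when $k$ doubles, that endpoint jumps, the locally consistent parsing boundaries around $i+2k-1$ bear no structural relation to those around $i+k-1$, and so all $\Theta(h^P)=\Theta(\log k)$ right-wrapper layers must be recomputed from scratch. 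Each recomputation costs $\Theta(\log k\log^* M)$ even ignoring the descent to the new endpoint (Lemma~\ref{lem:ComputeShortCommonSequence}), and the precomputed root-to-leaf path to position $i$ gives no help reaching the nodes near position $i+2k-1$ at the lower levels. Summed over $\Theta(\log\ell)$ doublings this is $\Theta(\log^2\ell\log^* M)$, exceeding the claimed bound.

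The paper sidesteps this by never materializing $\uniq{s_1[i..i+k-1]}$ for intermediate values of $k$: it performs a single greedy left-to-right walk over the two derivation trees, descending on a signature mismatch and advancing past a match. The matched signatures trace out the final $\uniq{P}$ exactly once, and the total number of visited nodes is charged to the induced subtree of ancestors of $\uniq{P}$, which Lemma~\ref{lem:ancestors} bounds by $O(\log|s_1|+\log|s_2|+\log\ell\log^* M)$ across the two trees. Your mismatch-descent step is fine and close in spirit to the paper's descent, but to salvage the walk-up phase you would need a genuinely incremental way to maintain the right wrappers as the endpoint advances; such a mechanism is not provided and, given that the endpoint's neighborhood changes arbitrarily, is unlikely to exist in the form you sketch.
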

\begin{proof}
We focus on $\LCEQ(s_1, s_2, i, j)$ as $\LCEQ(s_1^{R}, s_2^{R}, i, j)$ is supported similarly.

Let $P$ denote the longest common prefix of $s_1$ and $s_2$.
Our algorithm simultaneously traverses two derivation trees rooted at $e_1$ and $e_2$
and computes $P$ by matching the common signatures greedily from left to right.
Since $\uniq{P}$ occurs at position $i$ in $e_1$ and at position $j$ in $e_2$ by Lemma~\ref{lem:common_sequence2},
we can compute $P$ by at least finding the common sequence of nodes which represents $\uniq{P}$,
and hence, we only have to traverse ancestors of such nodes.
By Lemma~\ref{lem:ancestors},
the number of nodes we traverse, which dominates the time complexity, is upper bounded by
$O(\log |s_1| + \log |s_2| + \encpow{uniq{P}}) = O(\log |s_1| + \log |s_2| + \log \ell \log^* M)$.

\qed
\end{proof}

\subsubsection{Construction}
Recall that a signature encoding $\mathcal{G}$ generating a string $T$ is represented and maintained by a data structure $\mathcal{H}$.
We show how to construct an $\mathcal{H}(\log w, w)$ or $\mathcal{H}(f_{\mathcal{A}}, f'_{\mathcal{A}})$ for $\mathcal{G}$.
It can be constructed from various types of inputs, such as 
(1) a plain (uncompressed) string $T$,
(2) the LZ77 factorization of $T$, and
(3) an SLP which represents $T$,
as summarized by the following theorem.
\begin{theorem}\label{theo:HConstructuionTheorem}
\begin{enumerate}
\item Given a string $T$ of length $N$, we can construct $\mathcal{H}(\log w, w)$ for 
the signature encoding of size $w$ which represents $T$ 
in $O(N)$ time and working space, or $\mathcal{H}(f_{\mathcal{A}}, f'_{\mathcal{A}})$ 
in $O(N f_{\mathcal{A}})$ time and $O(f'_{\mathcal{A}}+ w)$ working space. 
\item Given $f_1, \ldots, f_z$ LZ77 factors without self reference of size $z$ representing $T$ of length $N$, 
we can construct $\mathcal{H}(f_{\mathcal{A}}, f'_{\mathcal{A}})$ for the signature encoding of size $w$ which represents $T$ 
in $O(z f_{\mathcal{A}} \log N \log ^* M)$ time and $O(f'_{\mathcal{A}} + w)$ working space.
\item Given an SLP $\mathcal{S} = \{X_i \rightarrow \mathit{expr}_i\}_{i = 1}^{n}$ of size $n$ representing $T$ of length $N$,
 we can construct $\mathcal{H}(\log w, w)$ for the signature encoding of size $w$ which represents $T$ 
 in $O(n \log \log n \log N \log ^* M)$ time and $O(n \log^* M + w)$ working space, or
 $\mathcal{H}(f_{\mathcal{A}}, f'_{\mathcal{A}})$ in $O(n f_{\mathcal{A}} \log N \log ^* M)$ time and $O(f'_{\mathcal{A}} + w)$ working space. 
\end{enumerate} 
\end{theorem}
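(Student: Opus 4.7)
The plan is to prove the three parts in a unified way by reducing every construction to incremental insertion of productions into $\mathcal{H}$, each costing $O(f_\mathcal{A})$ by Lemma~\ref{lem:basic_operation_time}. For Parts (2) and (3) I will additionally use a concatenation primitive: given two variables $e_1, e_2$ already stored in $\mathcal{H}$, return a new variable whose derived string is $\val{e_1}\val{e_2}$. My implementation walks bottom-up along the boundary between $\uniq{\val{e_1}}$ and $\uniq{\val{e_2}}$, re-applies $\mathit{Eblock}$ and $\mathit{Pow}$ on the newly affected nodes, and inserts at most one fresh production per node via $\sig{\cdot}$. By Lemma~\ref{lem:concatenate1} the number of affected nodes is $O(\log N \log^* M)$, yielding a per-concatenation cost of $O(f_\mathcal{A} \log N \log^* M)$.

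For Part (1), the plan is to build the encoding level by level starting from $T$ at level $0$. At each level $t$, I first compute the parsing bit-vector for $\pow{t}{T}$ in $O(|\pow{t}{T}|)$ time via Lemma~\ref{lem:CoinTossing}, then radix-sort the resulting blocks (and the maximal runs in the $\encpow{\cdot}$ step) to assign consistent signatures in time linear in the level size. Because both $|\shrink{t}{T}|$ and $|\pow{t}{T}|$ halve from level to level, the total work telescopes to $O(N)$. To obtain $\mathcal{H}(\log w, w)$ in $O(N)$ time it suffices to bulk-load a balanced BST on all $O(w)$ productions via a single global radix sort at the end; to obtain $\mathcal{H}(f_\mathcal{A}, f'_\mathcal{A})$ I instead maintain $\mathcal{A}$ online during the sweep, paying $O(f_\mathcal{A})$ per $\sig{\cdot}$ lookup for a total of $O(N f_\mathcal{A})$ time and $O(f'_\mathcal{A} + w)$ working space.

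For Parts (2) and (3) the plan is structurally identical: process the input in an order such that whenever a new variable is created, the signature encodings of its constituents are already in $\mathcal{H}$, and then invoke the concatenation primitive. In Part (2), I scan the LZ77 factors left to right; a factor introducing a new character costs $O(f_\mathcal{A})$, while a copying factor $f_i$ is handled by first extracting a common sequence representing $f_i$ from the existing encoding via Lemma~\ref{lem:ComputeShortCommonSequence} (a split-style dual of concatenation with the same asymptotic cost) and then concatenating, summing to $O(z f_\mathcal{A} \log N \log^* M)$ with $O(f'_\mathcal{A} + w)$ space. In Part (3), I process the SLP variables in topological order and spend one concatenation per production $X_i \to X_\ell X_r$ (and $O(f_\mathcal{A})$ per $X_i \to a$), yielding $O(n f_\mathcal{A} \log N \log^* M)$. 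The faster $\mathcal{H}(\log w, w)$ variant in Part~(3) is obtained by performing the $O(n \log N \log^* M)$ $\sig{\cdot}$ lookups against a temporary deterministic dictionary with $O(\log \log n)$-time operations (of size $O(n \log^* M + w)$), and then converting in $O(w)$ time to a balanced BST via a single global radix sort.

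The main technical obstacle will be verifying that the concatenation cost analysis remains tight across the incremental schedule: I must argue that Lemma~\ref{lem:concatenate1} applies with $|s_3|$ bounded by $N$ rather than by some possibly larger intermediate length that appears during the construction, and that each node of the induced subtree is processed exactly once per concatenation so that the $O(f_\mathcal{A})$ cost is not paid twice for the same affected ancestor. A secondary subtlety is ensuring that the split-style extraction in Part~(2) reuses existing signatures whenever possible, so that only the $O(\log N \log^* M)$ ancestors on the boundary between the copied substring and its context are freshly inserted into $\mathcal{H}$.
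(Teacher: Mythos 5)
Your decomposition into an insertion primitive plus a concatenation primitive matches the paper's strategy in spirit, and your Part~(2) and the $\mathcal{H}(f_{\mathcal{A}},f'_{\mathcal{A}})$ case of Part~(3) are essentially what the paper does. But two of the claimed bounds do not follow from the construction you sketch.

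First, in Part~(1) the $\mathcal{H}(f_{\mathcal{A}},f'_{\mathcal{A}})$ bound of $O(f'_{\mathcal{A}}+w)$ working space is not achieved by the level-by-level sweep you describe: that sweep must hold $\shrink{0}{T}$ (or $\pow{0}{T}$) in memory, which already uses $\Theta(N)$ words, whereas $f'_{\mathcal{A}}+w$ can be $o(N)$. Switching to online $\sig{\cdot}$ lookups does not fix this, since the obstruction is the size of the intermediate arrays, not the dictionary. The paper avoids this by abandoning the sweep entirely for this regime: it cuts $T$ into blocks of length $B=\log N\log^* M$ and issues one $\mathit{INSERT}$ per block, so the resident data is only one block plus $\mathcal{H}$ itself, and each $\mathit{INSERT}$ costs $O(f_{\mathcal{A}}(B+\log N\log^* M))$, summing to $O(Nf_{\mathcal{A}})$.

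Second, the faster $\mathcal{H}(\log w,w)$ construction in Part~(3) cannot be obtained by ``performing the $O(n\log N\log^* M)$ $\sig{\cdot}$ lookups against a temporary deterministic dictionary with $O(\log\log n)$-time operations.'' No such deterministic online dictionary is known on the word RAM; the paper in fact cites Beame--Fich precisely because that is the best known deterministic bound, and it is $\omega(\log\log n)$. The paper's actual route is entirely offline: it maintains, level by level, a compressed representation $\Lambda_t$ (resp.\ $\hat\Lambda_t$) of $\pow{t}{X_n}$ (resp.\ $\shrink{t}{X_n}$) of size $O(n\log^* M)$ using the DAG structure of the SLP — storing for each variable $X_i$ only the $O(\log^* M)$ ``middle'' signatures $z^{X_i}_t$ that cannot be recovered from the children — and then at each level assigns fresh signatures to the $O(n\log^* M)$ new blocks or runs by a global sort: bucket sort suffices for the $\mathit{Shrink}$ step, but for the $\mathit{Pow}$ step the run-lengths can be as large as $N$, so the paper uses Han's deterministic $O(x\log\log x)$ integer sort. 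This is where the $\log\log n$ factor really comes from, and it is a batch sort, not an online lookup. Your plan as stated does not reproduce this and does not yield the claimed time bound.

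A third minor point: you correctly flag that Lemma~\ref{lem:concatenate1} must be applied with $|s_3|\le N$, and in fact both in the LZ77 scan (prefix $f_1\cdots f_i$) and in the SLP topological scan (string $\val{X_i}$) the intermediate lengths are bounded by $N$, so that worry resolves in your favour; but it would need to be stated explicitly, since the cost of a concatenation depends on the length of the concatenation's result, not on $N$ a priori.
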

\begin{proof}
See Appendix~\ref{sec:Proof_HConstructuionTheorem}.
\end{proof}
In the static case, the $M$ term of Theorem~\ref{theo:HConstructuionTheorem}
can be replaced with $N$.

\subsubsection{Update}
In Section~\ref{sec:static_section}, we describe our dynamic index using 
$\mathcal{H}$ for a signature encoding $\mathcal{G}$ generating a string $T$.
For this end, we consider the following update operations for $\mathcal{G}$ using $\mathcal{H}$.
\begin{itemize}
 \item $\mathit{INSERT}(Y, i)$: Given a string $Y$ and an integer $i$, update $\mathcal{H}$. 
 Updated $\mathcal{H}$ handles a signature encoding $\mathcal{G}$ generates $T' = T[..i-1]Y[1..|Y|]T[i..]$. 
 \item $\mathit{DELETE}(i, k)$: Given two integers $i,k$, update $\mathcal{H}$. 
 Updated $\mathcal{H}$ handles a signature encoding $\mathcal{G}$ generates $T' = T[..i-1]T[i+k..]$. 
\end{itemize}

During updates, a new assignment $e \rightarrow \mathit{xexpr}$ is appended to $\mathcal{G}$ whenever it is needed,
in this paper, where $e = \max \mathcal{V} + 1$ that has not been used as a signature.
Specifically, we assign new signature to $\mathit{xeptr}$ when $\mathit{Sig}(\mathit{xeptr})$ returns 
undefined for some form $\mathit{xeptr}$ during updates. 
Also, updates may produce a redundant signature whose parents in the DAG are all removed.
To keep $\mathcal{G}$ admissible, we remove such redundant signatures from $\mathcal{G}$ during updates.

\begin{lemma}\label{lem:INSERT_DELETE}
Using $\mathcal{H}(f_{\mathcal{A}}, f'_{\mathcal{A}})$ 
for a signature encoding 
$\mathcal{G} = (\Sigma, \mathcal{V}, \mathcal{D}, S)$ of size $w$ which generates $T$,
we can support $\mathit{INSERT}(i,Y)$ and $\mathit{DELETE}(i,k)$ in $O(f_{\mathcal{A}}(k + \log N \log^* M))$ time, 
where $|Y| = k$.
\end{lemma}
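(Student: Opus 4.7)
The plan is to reduce both $\mathit{INSERT}(i,Y)$ and $\mathit{DELETE}(i,k)$ to two primitive operations on signature encodings, namely $\mathit{SPLIT}$ and $\mathit{CONCAT}$, and then bound the cost of each primitive by $O(f_{\mathcal{A}} \log N \log^* M)$. Concretely, $\mathit{DELETE}(i,k)$ is implemented as two $\mathit{SPLIT}$s (at $i-1$ and $i+k-1$) followed by one $\mathit{CONCAT}$ of the surviving prefix and suffix. For $\mathit{INSERT}(i,Y)$, I would first build the signature encoding of $Y$ from scratch in $O(k f_{\mathcal{A}})$ time using case~(1) of Theorem~\ref{theo:HConstructuionTheorem} (with $M$ inherited from the ambient dynamic text), then $\mathit{SPLIT}$ $T$ at $i-1$, and finally $\mathit{CONCAT}$ the three pieces. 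Both updates then fit into $O(f_{\mathcal{A}}(k + \log N \log^* M))$ time, as claimed.

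To implement $\mathit{CONCAT}$ of two signature encodings representing $s_1$ and $s_2$, the key observation is Lemma~\ref{lem:concatenate1}: in the derivation tree of $\id{s_1 s_2}$, the induced subtree $X$ whose leaves are the parents of the nodes representing $\mathit{Uniq}(s_1)\mathit{Uniq}(s_2)$ has size $O(\log|s_1 s_2|\log^* M)$. Thus only the signatures lying in $X$ need to be (re)computed; every signature belonging to the interior part of $s_1$ or $s_2$ is preserved verbatim, including the DAG nodes and their children pointers. I would process $X$ level by level from the bottom up: at each level I have the current boundary sequence of signatures near the seam, apply locally consistent parsing (using the $\mathit{Eblock}$ function on the right segment of the left side concatenated with the left segment of the right side) and the $\mathit{Pow}$ function on maximal runs, and look up or create each resulting signature via $\mathit{Sig}$ in $O(f_{\mathcal{A}})$ time. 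Because $|X| = O(\log N \log^* M)$, the total cost is $O(f_{\mathcal{A}} \log N \log^* M)$. The new start symbol is the signature returned at the final level.

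$\mathit{SPLIT}$ at position $p$ is symmetric: walk the root-to-leaf path to position $p$ and, at every level, cut the current block/run containing the split point into a left and a right piece; each such cut requires $O(1)$ new $\mathit{Sig}$ lookups (with special care to update the exponent of a $\hat{e}^d$ production if $p$ falls inside a run-length block). The set of nodes touched is again contained in the induced subtree hanging off the split path, which by the same halving argument used in Lemma~\ref{lem:ancestors} has size $O(\log N \log^* M)$. Hence $\mathit{SPLIT}$ also costs $O(f_{\mathcal{A}} \log N \log^* M)$. After the primitives finish, redundant signatures (those whose $\mathit{parents}$ list becomes empty) are removed in time proportional to the number of removed assignments by following the DAG upward from detached nodes; this does not change the asymptotic cost because each such deletion is charged to a creation done during the same update.

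The main obstacle is verifying that $\mathit{SPLIT}$ and $\mathit{CONCAT}$ really touch only the $O(\log N \log^* M)$ nodes predicted by Lemma~\ref{lem:concatenate1} once run-length blocks and the $\Delta_L,\Delta_R$ contexts of locally consistent parsing are taken into account. In particular, splitting inside a $\pow{}{}$ block can change the exponent of an adjacent run and thus perturb the signature of its $\mathit{Encblock}$ neighbor; however, by the locality guarantees of Lemma~\ref{lem:CoinTossing}, such perturbations propagate upward at most within the constant-width corridor around the cut, whose total size per level is $O(\log^* M)$ and whose height is $O(\log N)$. This is exactly the bound that yields the advertised time.
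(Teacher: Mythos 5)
Your proposal reaches the same bound and uses the same core machinery (the seam size bounds of Lemmas~\ref{lem:ancestors} and~\ref{lem:concatenate1}, a level-by-level bottom-up rebuild near the seam, and a final garbage sweep), but it takes a genuinely different route at the top level. You decompose both updates into explicit $\mathit{SPLIT}$/$\mathit{CONCAT}$ primitives, which is closer to the Alstrup et al.\ style mentioned in the related-work discussion. The paper avoids a mutating $\mathit{SPLIT}$ entirely: it extracts, read-only, the common sequences $\mathit{Uniq}(T[..i-1])$ and $\mathit{Uniq}(T[i+k..])$ directly from the existing DAG via Lemma~\ref{lem:ComputeShortCommonSequence} (these sequences already occur in the derivation tree by Lemma~\ref{lem:common_sequence2}), concatenates them (together with $\mathit{Uniq}(Y)$ in the $\mathit{INSERT}$ case), and rebuilds only the $O(\log N\log^* M)$ signatures in the induced subtree above the seam. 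Your version does the same rebuild inside your $\mathit{CONCAT}$, but first manufactures intermediate encodings $\id{T[..i-1]}$, $\id{T[i..i+k-1]}$, and $\id{T[i+k..]}$ that are immediately thrown away. The asymptotics survive, since each $\mathit{SPLIT}$ also only touches $O(\log N\log^* M)$ nodes, but the paper's formulation is leaner and creates strictly less garbage.

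Two points need repair. First, citing Theorem~\ref{theo:HConstructuionTheorem}~(1) to build $\id{Y}$ is circular in the $O(Nf_{\mathcal{A}})$-time regime you invoke, since that variant of the theorem is itself proved by repeated $\mathit{INSERT}$ calls; what you actually need is the more elementary observation (stated in the paper's proof) that $\id{s}$ can be computed naively in $O(f_{\mathcal{A}}|s|)$ time by a single bottom-up pass over $s$. Second, your cleanup justification is wrong for $\mathit{DELETE}$: ``each such deletion is charged to a creation done during the same update'' fails because removing a block of $k$ characters can orphan $\Theta(k)$ signatures while creating only $O(\log N\log^* M)$ new ones, so the credits do not balance. (Also, you would sweep \emph{downward} from a newly parentless node into its children, not ``upward''.) The correct argument is a direct count rather than amortization: by Lemma~\ref{lem:concatenate1}, the set $Z$ of signatures that can become redundant has size $O(k+\log N\log^* M)$, which is already absorbed by the stated running time $O(f_{\mathcal{A}}(k+\log N\log^* M))$.

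Finally, tidy up the small inconsistency in your $\mathit{SPLIT}$ sketch: you first claim $O(1)$ $\mathit{Sig}$ lookups per level, then correctly note in your last paragraph that the perturbation corridor has width $O(\log^* M)$ per level. The latter is the right count; the former should be deleted.
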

\begin{proof}
We support $\mathit{DELETE}(i,k)$ as follows:
(1) Compute a new start variable $S' = \id{T[..i-1]T[i..]}$ 
by recomputing the new signature encoding from $\mathit{Uniq}(T[..i-1])$ and $\mathit{Uniq}(T[i+k..])$.
This can be done in $O(f_{\mathcal{A}}\log N \log^* M)$ time 
by Lemmas~\ref{lem:ComputeShortCommonSequence} and \ref{lem:concatenate1}. 
(2) Remove all redundant signatures $Z$ from $\mathcal{H}(f_{\mathcal{A}}, f'_{\mathcal{A}})$.
Note that if a signature is redundant, then all the signatures along the path from $S$ to it are also redundant.
Hence, we can remove all redundant signatures efficiently by depth-first search starting from $S$,
which takes $O(f_{\mathcal{A}}|Z|)$ time, where $|Z| = O(k + \log N \log^* M)$ by Lemma~\ref{lem:concatenate1}.

Similarly, we can compute $\mathit{INSERT}$ operation in $O(f_{\mathcal{A}}(|Y| + \log N \log^* M))$ time 
by creating $S'$ using $\mathit{Uniq}(T[..i-1])$, $\mathit{Uniq}(Y)$ and $\mathit{Uniq}(T[i+k..])$. 
Note that we can naively compute $\id{s}$ for a given string $s \in \Sigma^+$ in $O(f_{\mathcal{A}} |s|)$ time. 
Therefore Lemma~\ref{lem:INSERT_DELETE} holds.
\qed
\end{proof}

\section{Dynamic Compressed Index}\label{sec:static_section}

In this section, we present our dynamic compressed index based on signature encoding.
As already mentioned in Section~\ref{sec:related_work}, 
our strategy for pattern matching is different from that of Alstrup et al.~\cite{LongAlstrup}.
It is rather similar to the one taken in the static index for SLPs of Claude and Navarro~\cite{claudear:_self_index_gramm_based_compr}.
Besides applying their idea to run-length ACFGs, we show how to speed up pattern matching by utilizing the properties of signature encodings.

The rest of this section is organized as follows:
In Section~\ref{sec:SLPIndex}, we briefly review the idea for the SLP index of Claude and Navarro~\cite{claudear:_self_index_gramm_based_compr}.
In Section~\ref{subsec:RunLen}, we extend their idea to run-length ACFGs.
In Section~\ref{subsec:StaticIndexSigDic}, we consider an index on signature encodings and
improve the running time of pattern matching by using the properties of signature encodings.
In Section~\ref{subsec:DynIndexSigEnc}, we show how to dynamize our index.

\subsection{Static Index for SLP}\label{sec:SLPIndex}
We review how the index in~\cite{claudear:_self_index_gramm_based_compr}
for SLP $\mathcal{S}$ generating a string $T$ computes $\Occ(P,T)$ for a given string $P$.
The key observation is that, any occurrence of $P$ in $T$
can be uniquely associated with the lowest node that covers the occurrence of $P$ in the derivation tree.
As the derivation tree is binary, if $|P| > 1$, then the node is labeled with some variable $X \in \mathcal{V}$ such that
$P_1$ is a suffix of $X.{\rm left}$ and $P_2$ is a prefix of $X.{\rm right}$, where $P = P_1P_2$ with $1 \leq |P_1| < |P|$.
Here we call the pair $(X, |X.{\rm left}| - |P_1| + 1)$ a \emph{primary occurrence} of $P$.
Then, we can compute $\Occ(P, T)$ by first computing such a primary occurrence and
enumerating the occurrences of $X$ in the derivation tree.

Formally, we define the \emph{primary occurrences} of $P$ as follows.
\begin{definition}[The set of primary occurrences of $P$]\label{def:primary}
For a string $P$ with $|P| > 1$ and an integer $1 \leq j < |P|$, 
we define $\mathit{pOcc}_{\mathcal{S}}(P, j)$ and $\mathit{pOcc}_{\mathcal{S}}(P)$ as follows:
\begin{eqnarray*}
 \mathit{pOcc}_{\mathcal{S}}(P, j) & = & \{ (X, |X.{\rm left}| - j + 1) \mid X \in \mathcal{V},\\
                                   && \text{$P[..j]$ is a suffix of $X.{\rm left}$, $P[j+1..]$ is a prefix of $X.{\rm right}$}\},\\
 \mathit{pOcc}_{\mathcal{S}}(P) & = & \bigcup_{1 \leq j < |P|} \mathit{pOcc}_{\mathcal{S}}(P, j),\\
\end{eqnarray*}
We call each element of $\mathit{pOcc}_{\mathcal{S}}(P)$ a primary occurrence of $P$.
\end{definition}
The set $\mathit{Occ}(P,T)$ of occurrences of $P$ in $T$ is represented by $\mathit{pOcc}_{\mathcal{S}}(P)$ as follows.
\begin{observation}\label{ob:OccurrenceSLP}
For any string $P$,
\begin{eqnarray*}
 \mathit{Occ}(P,T) &=&
  \begin{cases}
  \{ j+k-1 \mid (X, j) \in \mathit{pOcc}_{\mathcal{S}}(P), k \in \mathit{vOcc}(X, S)\} & \mbox{ if } |P| > 1,\\
  \mathit{vOcc}(X ,S) ((X \rightarrow P) \in \mathcal{D}) & \mbox{ if } |P| = 1.\\
  \end{cases}\\
\end{eqnarray*}
\end{observation}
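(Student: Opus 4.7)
The plan is to handle the two cases separately and, in the non-trivial case, establish the set equality by two inclusions, with the key device being the \emph{lowest covering node} of an occurrence in the derivation tree.

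First, the case $|P| = 1$ is almost immediate. A single character $P$ occurs in $T$ exactly at the positions of leaves of the derivation tree of $S$ that are labeled with the unique variable $X$ having $(X \rightarrow P) \in \mathcal{D}$. By the definition of $\mathit{vOcc}(X, S)$ as the set of leaf positions of $X$ in the derivation tree of $S$, this gives $\mathit{Occ}(P,T) = \mathit{vOcc}(X,S)$ directly.

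For the case $|P| > 1$, I would prove $\supseteq$ first: given $(X,j) \in \mathit{pOcc}_{\mathcal{S}}(P)$ witnessed by a split $P = P_1P_2$ with $1 \leq |P_1| < |P|$ where $P_1$ is a suffix of $X.{\rm left}$ and $P_2$ is a prefix of $X.{\rm right}$, one sees directly from the definition of the left and right strings that $P$ occurs at position $j = |X.{\rm left}| - |P_1| + 1$ within $\val{X}$. For each $k \in \mathit{vOcc}(X,S)$, the subtree rooted at the corresponding occurrence of $X$ contributes the substring $\val{X}$ starting at position $k$ of $T$, so $P$ occurs at position $j+k-1$ in $T$.

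The harder direction is $\subseteq$, and this is where I expect the real work. For any occurrence position $i \in \mathit{Occ}(P,T)$, consider the set of internal nodes of the derivation tree of $S$ whose induced substring covers $T[i..i+|P|-1]$. The root is such a node, and any two such nodes are comparable in the tree order, so there is a unique lowest one; call it $v$ and let $X$ be its label. Because the SLP is in Chomsky normal form the node $v$ has two children $X_\ell, X_r$ with $X \rightarrow X_\ell X_r$, and by minimality of $v$ the occurrence of $P$ cannot be contained entirely in the range of either child. Hence the occurrence straddles the two children, giving a split $P = P_1P_2$ with $P_1$ a nonempty suffix of $X.{\rm left} = \val{X_\ell}$ and $P_2$ a nonempty prefix of $X.{\rm right} = \val{X_r}$. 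Setting $j = |X.{\rm left}| - |P_1| + 1$ yields $(X,j) \in \mathit{pOcc}_{\mathcal{S}}(P)$, and if $k$ denotes the position in $T$ of the leftmost leaf of the subtree rooted at $v$, then $k \in \mathit{vOcc}(X,S)$ and $i = j + k - 1$. The main obstacle is exactly this argument that the minimal covering node must straddle its two children; once that is in place, both inclusions close and the set equality follows. \qed
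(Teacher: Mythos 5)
Your proof is correct and follows essentially the same approach as the paper, which states (in prose preceding the observation) that any occurrence is uniquely associated with the lowest covering node, and that node, being binary, must straddle its two children when $|P|>1$. You simply spell out the two inclusions and the comparability argument for uniqueness of the lowest covering node, which the paper leaves implicit.
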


By Observation~\ref{ob:OccurrenceSLP},
the task is to compute $\mathit{pOcc}_{\mathcal{S}}(P)$ and $\mathit{vOcc}(X,S)$ efficiently.
Note that $\mathit{vOcc}(X,S)$ can be computed in $O(|\mathit{vOcc}(X,S)| h)$ time
by traversing the DAG in a reversed direction (i.e., using $\mathit{parents}(X)$ function recursively) from $X$ to the source,
where $h$ is the height of the derivation tree of $S$.
Hence, in what follows, we focus on how to compute $\mathit{pOcc}_{\mathcal{S}}(P)$ for a string $P$ with $|P| > 1$.
In order to compute $\mathit{pOcc}_{\mathcal{S}}(P,j)$,
we use a data structure to solve the following problem:
\begin{problem}[Two-Dimensional Orthogonal Range Reporting Problem]\label{problem:2D}
Let $\mathcal{X}$ and $\mathcal{Y}$ denote subsets of two ordered sets,
and let $\mathcal{R} \subseteq \mathcal{X} \times \mathcal{Y}$ be a set of points on the two-dimensional plane,
where $|\mathcal{X}|, |\mathcal{Y}| \in O(|\mathcal{R}|)$.
A data structure for this problem supports a query $\mathit{report}_{\mathcal{R}}(x_1,x_2,y_1,y_2)$;
given a rectangle $(x_1,x_2,y_1,y_2)$ with $x_1, x_2 \in \mathcal{X}$ and $y_1, y_2 \in \mathcal{Y}$, 
returns $\{ (x,y) \in \mathcal{R} \mid x_1 \leq x \leq x_2, y_1 \leq y \leq y_2 \}$.
\end{problem}

Data structures for Problem~\ref{problem:2D} are widely studied in computational geometry.
There is even a dynamic variant, which we finally use for our dynamic index in Section~\ref{subsec:DynIndexSigEnc}.
Until then, we just use any static data structure that occupies $O(|\mathcal{R}|)$ space and
supports queries in $O(\hat{q}_{|\mathcal{R}|} + q_{|\mathcal{R}|} \mathit{qocc})$ time with $\hat{q}_{|\mathcal{R}|} = O(\log |\mathcal{R}|)$,
where $\mathit{qocc}$ is the number of points to report.

Now, given an SLP $\mathcal{S}$, we consider a two-dimensional plane defined by 
$\mathcal{X} = \{ X.{\rm left}^{R} \mid X \in \mathcal{V} \}$ and $\mathcal{Y}  = \{ X.{\rm right} \mid X \in \mathcal{V} \}$,
where elements in $\mathcal{X}$ and $\mathcal{Y}$ are sorted by lexicographic order.
Then consider a set of points $\mathcal{R} = \{ (X.{\rm left}^{R}, X.{\rm right}) \mid X \in \mathcal{V} \}$.
For a string $P$ and an integer $1 \leq j < |P|$,
let $y_1^{(P,j)}$ (resp. $y_2^{(P,j)}$) denote the lexicographically smallest (resp. largest) element 
in $\mathcal{Y}$ that has $P[j+1..]$ as a prefix.
If there is no such element, it just returns NIL and we can immediately know that $\mathit{pOcc}_{\mathcal{S}}(P,j) = \emptyset$.
We also define $x_1^{(P,j)}$ and $x_2^{(P,j)}$ in a similar way over $\mathcal{X}$, i.e.,
$x_1^{(P,j)}$ (resp. $x_2^{(P,j)}$) is the lexicographically smallest (resp. largest) element in $\mathcal{X}$ that has $P[..j]^{R}$ as a prefix.
Then, $\mathit{pOcc}_{\mathcal{S}}(P, j)$ can be computed by a query $\mathit{report}_{\mathcal{R}}(x_1^{(P,j)}, x_2^{(P,j)}, y_1^{(P,j)}, y_2^{(P,j)})$.
See also Example~\ref{ex:SLPGrid}.
\begin{example}[SLP]\label{ex:SLPGrid}
Let $\mathcal{S}$ be the SLP of Example~\ref{ex:SLP}. 
Then, 
\begin{eqnarray*}
\mathcal{X} &=& \{ x_1, x_4, x_2, x_8, x_5, x_9, x_6, x_{10}, x_{11}, x_3, x_7 \}, \\
\mathcal{Y} &=& \{ y_1, y_8, y_2, y_7, y_9, y_3, y_4, y_{5}, y_{6}, y_{10}, y_{11} \}, \\
\end{eqnarray*}
where 
$x_i = \val{X_i}^{R}$, $y_i = \val{X_i}$ for any $X_i \in \mathcal{V}$.
Given a pattern $P = BCAB$, then 
$\mathit{pOcc}_{\mathcal{S}}(P,1) = \{ (X_{6},3), (X_{11}, 10 )\}$, 
$\mathit{pOcc}_{\mathcal{S}}(P,2) = \{ (X_{9},1) \}$, 
$\mathit{pOcc}_{\mathcal{S}}(P,3) = \phi$, 
$\mathit{vOcc}_{\mathcal{S}}(X_{6},S) = \{1,11\}$, 
$\mathit{vOcc}_{\mathcal{S}}(X_{9},S) = \{7\}$ and 
$\mathit{vOcc}_{\mathcal{S}}(X_{11},S) = \{1\}$. 
Hence $\mathit{pOcc}_{\mathcal{S}}(P) = \{ (X_{6},3), (X_{11}, 10), (X_{9},1)\}$ and 
$\mathit{Occ}(P,T) = \{3,7,10,13\}$. 
See also Fig.~\ref{fig:grid}. 
\end{example}
\begin{figure}[ht]
\begin{center}
  \includegraphics[scale=0.48]{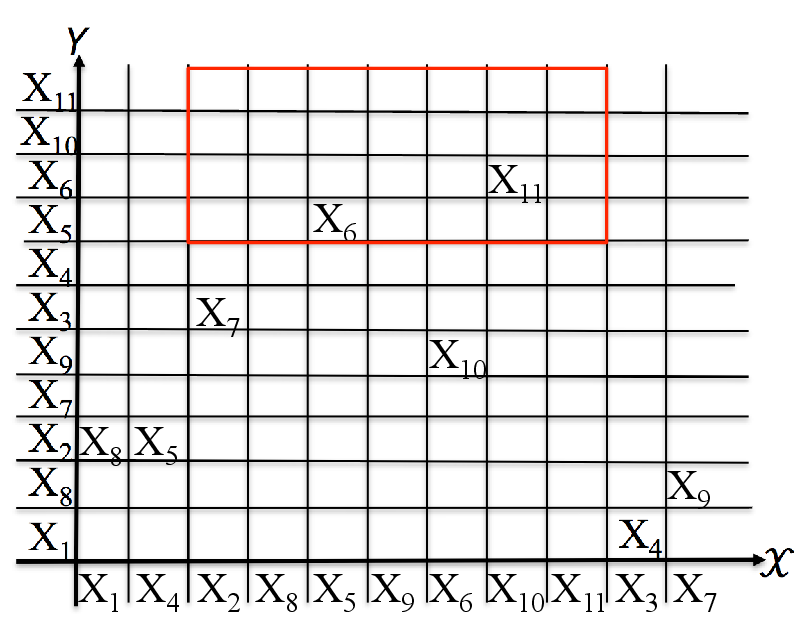}
  \caption{
  The grid represents the relation between $\mathcal{X}$, $\mathcal{Y}$ and $\mathcal{R}$ of Example~\ref{ex:SLPGrid}. 
  The red rectangle is a query rectangle $(x^{(P,1)}_1,x^{(P,1)}_2,y^{(P,1)}_1,y^{(P,1)}_2)$, where 
  $x^{(P,1)}_1 = x_2$, $x^{(P,1)}_2 = x_{11}$, $y^{(P,1)}_1 = y_5$ and $y^{(P,1)}_2 = y_{11}$. 
  Therefore, $\mathit{report}_{\mathcal{R}}(x^{(P,1)}_1,x^{(P,1)}_2,y^{(P,1)}_1,y^{(P,1)}_2) = \{ X_{6}, X_{11} \}$.
  }
  \label{fig:grid}
\end{center}
\end{figure}

We can get the following result:
\begin{lemma}\label{lem:LinearSpaceIndexOfSLP}
For an SLP $\mathcal{S}$ of size $n$,
there exists a data structure of size $O(n)$ that computes, given a string $P$,
$\mathit{pOcc}_{\mathcal{S}}(P)$ in $O(|P| (h + |P|) \log n + q_{n} |\mathit{pOcc}_{\mathcal{S}}(P)|)$ time.
\end{lemma}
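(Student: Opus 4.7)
The plan is to build two sorted indexes over $\mathcal{X}$ and $\mathcal{Y}$ together with an $O(n)$-space two-dimensional orthogonal range reporting structure on $\mathcal{R}$, and then, given a query pattern $P$, iterate over every split $1 \le j < |P|$, locate the rectangle $[x_1^{(P,j)}, x_2^{(P,j)}] \times [y_1^{(P,j)}, y_2^{(P,j)}]$ by binary search, and invoke $\mathit{report}_{\mathcal{R}}$ on it, collecting the returned variables as $\mathit{pOcc}_{\mathcal{S}}(P,j)$. Correctness is already justified by the discussion preceding the lemma, so the work is entirely in the time and space analysis.

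First I would store $\mathcal{X}$ and $\mathcal{Y}$ as two arrays of $n$ variable IDs sorted lexicographically by $X.{\rm left}^R$ and $X.{\rm right}$ respectively; the strings themselves are never materialized and are accessed lazily through the SLP. Together with the range reporting structure on the $n$ points of $\mathcal{R}$, the whole index uses $O(n)$ space as required.

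Next, for each split $1 \le j < |P|$ and each of the four endpoints $x_1^{(P,j)}, x_2^{(P,j)}, y_1^{(P,j)}, y_2^{(P,j)}$, I would run a binary search of depth $O(\log n)$ over the appropriate sorted array. Each comparison amounts to deciding the lexicographic order between a length-$j$ suffix of some $X.{\rm left}$ (equivalently, a prefix of $X.{\rm left}^R$) or a length-$(|P|-j)$ prefix of some $X.{\rm right}$, and the corresponding piece of $P$. Using the standard substring-extraction procedure on an SLP's derivation tree, such a prefix or suffix of length at most $|P|$ can be produced in $O(h+|P|)$ time, and the subsequent character-by-character comparison costs an additional $O(|P|)$, giving $O(h+|P|)$ per comparison. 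Hence each binary search runs in $O((h+|P|)\log n)$ time, and summing over the four endpoints and the $|P|-1$ splits yields the first term $O(|P|(h+|P|)\log n)$.

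The second term comes from the range queries themselves: each of the at most $|P|-1$ calls to $\mathit{report}_{\mathcal{R}}$ costs $O(\hat{q}_n + q_n \cdot \mathit{qocc}_j)$, where $\mathit{qocc}_j = |\mathit{pOcc}_{\mathcal{S}}(P,j)|$. Since $\hat{q}_n = O(\log n)$, the $\hat{q}_n$ contributions sum to $O(|P|\log n)$ and are absorbed by the binary-search bound, while $\sum_j \mathit{qocc}_j = |\mathit{pOcc}_{\mathcal{S}}(P)|$ yields the $O(q_n |\mathit{pOcc}_{\mathcal{S}}(P)|)$ term. The main obstacle is that the keys stored in $\mathcal{X}$ and $\mathcal{Y}$ may have length exponential in $n$, so explicit comparison is impossible; the argument resolves this by observing that during binary search only the first $|P|$ characters of any candidate key are ever relevant, and those can be produced on demand from the SLP in $O(h+|P|)$ time. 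With this comparison subroutine in place, the remainder is a routine application of two-dimensional range reporting.
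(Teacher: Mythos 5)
Your proof is correct and matches the paper's argument essentially step for step: iterate over all $|P|-1$ splits, binary search on $\mathcal{X}$ and $\mathcal{Y}$ with $O(h+|P|)$-time comparisons obtained by lazily expanding the first $O(|P|)$ characters of SLP variables, then issue one range-reporting query per split and sum the output sizes. Nothing of substance differs from the paper's proof; you have simply spelled out the accounting a bit more explicitly.
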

\begin{proof}
For every $1 \leq j < |P|$, we compute $\mathit{pOcc}_{\mathcal{S}}(P, j)$ by 
$\mathit{report}_{\mathcal{R}}(x_1^{(P,j)}, x_2^{(P,j)}, y_1^{(P,j)}, y_2^{(P,j)})$.
We can compute $y_1^{(P,j)}$ and $y_2^{(P,j)}$ in $O((h + |P|) \log n)$ time by binary search on $\mathcal{Y}$,
where each comparison takes $O(h + |P|)$ time for expanding the first $O(|P|)$ characters of variables subjected to comparison.
In a similar way, $x_1^{(P,j)}$ and $x_2^{(P,j)}$ can be computed in $O((h + |P|) \log n)$ time.
Thus, the total time complexity is 
$O(|P| ((h + |P|) \log n + \hat{q}_{n}) + q_{n} |\mathit{pOcc}_{\mathcal{S}}(P)|) = O(|P| (h + |P|) \log n + q_{n} |\mathit{pOcc}_{\mathcal{S}}(P)|)$.
\qed
\end{proof}

 
\subsection{Static Index for Run-length ACFG}\label{subsec:RunLen}
In this subsection, we extend the idea for the SLP index described in Section~\ref{sec:SLPIndex} to run-length ACFGs.
Consider occurrences of string $P$ with $|P| > 1$ in run-length ACFG $\mathcal{G} = (\Sigma, \mathcal{V}, \mathcal{D}, S)$ generating string $T$.
The difference from SLPs is that we have to deal with occurrences of $P$
that are covered by a node labeled with $e \rightarrow \hat{e}^{d}$ but not covered by any single child of the node in the derivation tree.
In such a case, there must exist $P = P_1P_2$ with $1 \leq |P_1| < |P|$ such that
$P_1$ is a suffix of $e.{\rm left} = \valp{\hat{e}}$ and $P_2$ is a prefix of $e.{\rm right} = \valp{\hat{e}^{d-1}}$.
Let $j = |\val{\hat{e}}| - |P_1| + 1$ be a position in $\valp{\hat{e}^{d}}$ where $P$ occurs,
then $P$ also occurs at $j + c |\val{\hat{e}}|$ in $\valp{\hat{e}^{d}}$ for every positive integer
$c$ with $j + c |\val{\hat{e}}| + |P| - 1 \leq |\valp{\hat{e}^{d}}|$.
Remarking that we apply Definition~\ref{def:primary} of primary occurrences to run-length ACFGs as they are,
we formalize our observation to compute $\mathit{Occ}(P,T)$ as follows:
\begin{observation}
For any string $P$ with $|P| > 1$,
$\mathit{Occ}(P,T) = \{ j+k+c-1 \mid (e, i) \in \mathit{pOcc}_{\mathcal{G}}(P), 
c \in \{ 0 \} \cup \mathit{Run}_{\mathcal{G}}(e,j,|P|), k \in \mathit{vOcc}(e,S)\}$, where 
\begin{eqnarray*}
  \mathit{Run}_{\mathcal{G}}(e, j, |P|) & = & 
   \begin{cases}
    \{ c |\val{\hat{e}}| \mid 1 \leq c, j + c |\val{\hat{e}}| + |P| - 1 \leq |\valp{\hat{e}^{d}}| \} & \mbox{if } e \rightarrow \hat{e}^d, \\
	\emptyset & \mbox{otherwise.} \\
   \end{cases}
\end{eqnarray*}
\end{observation}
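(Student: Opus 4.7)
\smallskip
\noindent\textbf{Proof plan.} The plan is to prove the observation by showing the two inclusions of the displayed set equality. For the inclusion $\supseteq$, I would start from an element $j+k+c-1$ of the right-hand side together with the witnesses $(e,j)\in\mathit{pOcc}_{\mathcal{G}}(P)$, $c\in\{0\}\cup\mathit{Run}_{\mathcal{G}}(e,j,|P|)$, and $k\in\mathit{vOcc}(e,S)$. By the definition of a primary occurrence (Definition~\ref{def:primary}) applied to the run-length ACFG, $P$ occurs at position $j$ inside $\val{e}$, straddling the boundary between $e.\mathrm{left}$ and $e.\mathrm{right}$. If $c=0$, we are done at position $j$ in $\val{e}$; otherwise $e\to\hat{e}^{d}$ and $c=c'|\val{\hat{e}}|$ with $1\le c'$ and $j+c+|P|-1\le|\val{e}|$, so the periodicity of $\valp{\hat{e}^d}$ gives a second occurrence of $P$ at position $j+c$ in $\val{e}$. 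Combining this with $k\in\mathit{vOcc}(e,S)$ (so $\val{e}$ appears at position $k$ in $T$) yields an occurrence of $P$ in $T$ at position $j+k+c-1$.

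For the inclusion $\subseteq$, I would take any position $p\in\Occ(P,T)$ and consider the lowest node $v$ of the derivation tree of $S$ whose yield covers $[p,p+|P|-1]$. Since $|P|>1$, $v$ is internal; let $e$ be its label and let $k\in\mathit{vOcc}(e,S)$ be the starting position of the yield of $v$ in $T$. Minimality of $v$ forces $P$ to straddle a boundary between two children of $v$ (otherwise a child would already cover the occurrence). Split into two cases according to the form of the production for $e$. If $e\to e_{\ell}e_{r}$, then writing $P=P_1P_2$ with $P_1$ a non-empty suffix of $e.\mathrm{left}$ and $P_2$ a non-empty prefix of $e.\mathrm{right}$, the pair $(e,|e.\mathrm{left}|-|P_1|+1)$ lies in $\mathit{pOcc}_{\mathcal{G}}(P)$, and we obtain $p=j+k-1$ with $c=0$. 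If $e\to\hat{e}^{d}$, let $t$ be the index such that $P$ straddles the boundary between the $t$-th and $(t{+}1)$-st copy of $\hat{e}$. Setting $j=|\val{\hat{e}}|-|P_1|+1$, we see that $P$ is a suffix of $e.\mathrm{left}$ concatenated with a prefix of $e.\mathrm{right}=\valp{\hat{e}^{d-1}}$, so $(e,j)\in\mathit{pOcc}_{\mathcal{G}}(P)$; the actual offset inside $\val{e}$ is $j+(t-1)|\val{\hat{e}}|$, and taking $c=(t-1)|\val{\hat{e}}|$ yields $c\in\{0\}\cup\mathit{Run}_{\mathcal{G}}(e,j,|P|)$ with $p=j+k+c-1$.

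The only subtle step is the run-length case, specifically verifying that the shift $c=(t-1)|\val{\hat{e}}|$ satisfies the range constraint $j+c|\val{\hat{e}}|+|P|-1\le|\valp{\hat{e}^{d}}|$ that appears in the definition of $\mathit{Run}_{\mathcal{G}}$; this follows because the original occurrence of $P$ fits inside $\val{e}$ by the choice of $v$. I would also note that the representative primary occurrence at the first inter-copy boundary is well-defined precisely because $|P|>1$ forces $P_1$ and $P_2$ to be non-empty, so that the suffix/prefix conditions of Definition~\ref{def:primary} hold against $e.\mathrm{left}=\val{\hat{e}}$ and $e.\mathrm{right}=\valp{\hat{e}^{d-1}}$. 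No harder step is expected; the argument is essentially a case analysis on the production at the lowest covering node in the derivation tree.
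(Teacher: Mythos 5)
Your proof is correct, and it follows the same underlying approach the paper sketches informally in the paragraph preceding the observation: associate each occurrence of $P$ with the lowest node of the derivation tree covering it, split on whether that node's production is of the form $e\to e_{\ell}e_{r}$ or $e\to\hat{e}^{d}$, and in the run-length case normalize the straddled boundary to the one between the first two copies of $\hat{e}$ and recover the actual offset via periodicity. The paper states this as an unproved Observation (with only that informal remark as justification), so your two-inclusion argument just supplies the detail the paper leaves implicit; nothing in it diverges from or conflicts with the paper's intended reasoning.
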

By the above observation, we can get the same result for a run-length ACFG as for an SLP in Lemma~\ref{lem:LinearSpaceIndexOfSLP}.

\subsection{Static Index for Signature Encoding}\label{subsec:StaticIndexSigDic}
We can apply the result of Section~\ref{subsec:RunLen} to signature encodings 
because signature encodings are run-length ACFGs, i.e.,
we can compute $\mathit{Occ}(P,T)$ by querying
$\mathit{report}_{\mathcal{R}}(x_1^{(P,j)}, x_2^{(P,j)},y_1^{(P,j)}, y_2^{(P,j)})$
for ``every'' $1 \leq j < |P|$.
However, the properties of signature encodings allow us to speed up pattern matching
as summarized in the following two ideas:
(1) We can efficiently compute $x_1^{(P,j)}, x_2^{(P,j)}, y_1^{(P,j)}$ and $y_2^{(P,j)}$
using LCE queries in compressed space (Lemma~\ref{lem:ComputePatternRange}).
(2) We can reduce the number of $\mathit{report}_{\mathcal{R}}$ queries from 
$O(|P|)$ to $O(\log |P| \log^* M)$ by using the property of the common sequence of $P$ (Lemma~\ref{lem:pattern_occurrence_lemma1}).

\begin{lemma}\label{lem:ComputePatternRange}
Assume that we have the DAG for a signature encoding $\mathcal{G} = (\Sigma, \mathcal{V}, \mathcal{D}, S)$ 
of size $w$ and $\mathcal{X}$ and $\mathcal{Y}$ of $\mathcal{G}$.
Given a signature $\id{P} \in \mathcal{V}$ for a string $P$ and an integer $j$, 
we can compute $x_1^{(P,j)}, x_2^{(P,j)}, y_1^{(P,j)}$ and $y_2^{(P,j)}$ in $O(\log w (\log N + \log |P| \log^* M))$ time.
\end{lemma}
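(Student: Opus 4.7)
The plan is to compute each of the four boundary values by a binary search over the sorted sequence ($\mathcal{Y}$ for $y_1^{(P,j)}, y_2^{(P,j)}$ and $\mathcal{X}$ for $x_1^{(P,j)}, x_2^{(P,j)}$), implementing every comparison by an LCE query through Lemma~\ref{lem:sub_operation_lemma}. Since $|\mathcal{Y}|, |\mathcal{X}| \le w$, the binary search uses $O(\log w)$ comparisons, so the task reduces to showing that a single lexicographic comparison can be implemented in $O(\log N + \log |P| \log^* M)$ time.

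I would first handle $y_1^{(P,j)}$ and $y_2^{(P,j)}$. Given $\mathit{id}(P) \in \mathcal{V}$, the string $P[j+1..]$ is the suffix of $\val{\mathit{id}(P)}$ starting at position $j+1$. For any candidate $e$ picked up by the binary search, $e.{\rm right}$ is the suffix of $\val{e}$ starting at position $|e.{\rm left}|+1$ (this is uniform for both $e\to e_\ell e_r$ and $e\to \hat{e}^d$ productions; in the latter case $e.{\rm right}=\val{\hat{e}}^{d-1}$, which is still accessible this way through the DAG node for $e$). Therefore one call $\LCEQ(\mathit{id}(P),e,j+1,|e.{\rm left}|+1)$ returns the length $\ell$ of the longest common prefix, which by Lemma~\ref{lem:sub_operation_lemma} costs $O(\log |P| + \log |\val{e}| + \log \ell \log^* M) = O(\log N + \log |P| \log^* M)$, since $\ell \le |P|-j$ and $|\val e|\le N$. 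With $\ell$ in hand, I then finish the lexicographic comparison: if $\ell$ reaches the end of either string, one is a prefix of the other; otherwise, I extract the two mismatching characters $\val{\mathit{id}(P)}[j+\ell+1]$ and $\val{e}[|e.{\rm left}|+\ell+1]$ using Fact~\ref{obs:RandomAccess} in $O(\log N)$ time and compare them. Summing over $O(\log w)$ comparisons yields the claimed bound.

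For $x_1^{(P,j)}$ and $x_2^{(P,j)}$, the setup is symmetric: the elements of $\mathcal{X}$ are the reversed left strings $e.{\rm left}^R$, and we want the lex range of those having $P[..j]^R$ as a prefix, equivalently those $e$ whose left string has $P[..j]$ as a suffix. This becomes the reversed variant of the above comparison, which Lemma~\ref{lem:sub_operation_lemma} also supports within the same time bound by querying $\LCEQ(\mathit{id}(P)^R,e^R,|P|-j+1,|e|-|e.{\rm left}|+1)$ (or the equivalent formulation given there). The rest of the analysis is identical, giving $O(\log w(\log N + \log|P|\log^*M))$ overall.

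The main obstacle I foresee is mostly bookkeeping rather than a deep difficulty: one must verify that the LCE-based comparisons truly sort against the correct strings — in particular, making sure that the run-length productions $e\to\hat e^d$ behave correctly when we treat $e.{\rm right}$ uniformly as ``$\val{e}$ from position $|e.{\rm left}|+1$'', and that each character extraction used to break ties after the LCE does not blow up the per-comparison cost beyond $O(\log N)$. Once these technicalities are checked, the time budget matches exactly the $\log w$ factor from the binary search and the $\log N + \log |P|\log^* M$ factor from Lemma~\ref{lem:sub_operation_lemma}.
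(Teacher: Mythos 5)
Your proof is correct and follows essentially the same route as the paper: binary search over the sorted sets $\mathcal{X}$ and $\mathcal{Y}$ with $O(\log w)$ comparisons, each comparison realized by an $\LCEQ$ query (Lemma~\ref{lem:sub_operation_lemma}) plus constant-many character extractions via Fact~\ref{obs:RandomAccess} to break ties. The paper's proof is just a one-line citation of those two lemmas and the phrase ``by binary search''; you have simply filled in the bookkeeping it leaves implicit.
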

\begin{proof}
By Lemma~\ref{lem:sub_operation_lemma} and Fact~\ref{obs:RandomAccess},
we can compute $x_1^{(P,j)}$ and $x_2^{(P,j)}$ 
on $\mathcal{X}$ by binary search in $O(\log w (\log N + \log |P| \log^* M))$ time.
Similarly, we can compute $y_1^{(P,j)}$ and $y_2^{(P,j)}$ in the same time. 
\qed
\end{proof}

\begin{lemma}\label{lem:pattern_occurrence_lemma1}
  Let $P$ be a string with $|P| > 1$.
  If $|\pow{0}{P}| = 1$, then $\mathit{pOcc}_{\mathcal{G}}(P) = \mathit{pOcc}_{\mathcal{G}}(P, 1)$.
  If $|\pow{0}{P}| > 1$, then $\mathit{pOcc}_{\mathcal{G}}(P) = \bigcup_{j \in \mathcal{P}} \mathit{pOcc}_{\mathcal{G}}(P, j)$, where
  $u$ is the common sequence of $P$ and $\mathcal{P} = \{ |\valp{u[1..i]}| \mid 1 \leq i < |u|, u[i] \neq u[i+1] \}$.
\end{lemma}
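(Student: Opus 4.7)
The inclusion $\supseteq$ is immediate from the definition of $\mathit{pOcc}_{\mathcal{G}}(P)$. For the reverse, I will fix $(e,i)\in\mathit{pOcc}_{\mathcal{G}}(P,j)$ and show that $j=1$ (when $|\pow{0}{P}|=1$) or $j\in\mathcal{P}$ (when $|\pow{0}{P}|>1$), exploiting two properties of signature encodings: (i) every maximal run of adjacent identical signatures at every level is collapsed into a single run-length variable at the next level, and (ii) by Lemma~\ref{lem:common_sequence2} the common sequence $u$ of $P$ occurs at position $i$ in $e$, so the signatures of $u$ realize disjoint subtrees of $e$'s derivation tree that jointly cover the occurrence of $P$.

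For $|\pow{0}{P}|=1$, write $P=a^k$ with $k>1$ and split on the form of $e$. If $e\to e_\ell e_r$, then $\val{e_\ell}$ ends in $a$ and $\val{e_r}$ begins with $a$, so the maximal $a$-run crossing the boundary between $e_\ell$ and $e_r$ in $\val{e}$ is encoded at level $0$ as a single run-length variable $r^*$. A tree-theoretic case split on whether $r^*$ is a descendant or a strict ancestor of $e$, combined with the fact that children of $r^*$ are single-character signatures whose subtrees are leaves, forces $r^*=e$; this contradicts the form of $e$. If instead $e\to\hat{e}^d$ with $|\val{\hat{e}}|=1$, then $|e.{\rm left}|=1$ immediately forces $j=1$; and if $|\val{\hat{e}}|\geq 2$, the same run-length argument applied to the $a$-run created between consecutive copies of $\hat{e}$ rules out $\val{\hat{e}}$ simultaneously starting and ending with $a$, so no such primary occurrence exists.

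For $|\pow{0}{P}|>1$, I first observe that the cut at $e$ must align with some boundary in $u$: each signature of $u$ is realized by a subtree strictly below $e$, hence lies entirely on one side of $e$'s cut (within $e_\ell$ or within $e_r$, or within a single copy of $\hat{e}$); therefore $j=|\valp{u[1..i^*]}|$ for some $1\leq i^*<|u|$. The key step is then to rule out $u[i^*]=u[i^*+1]$. Assuming this equality with common value $v$, the two adjacent copies of $v$ in the common sequence become adjacent children of a run-length variable $\tilde{v}$ arising one level up, and a tree-theoretic argument analogous to Case 1 (now using that the subtree of a single copy of $v$ contains no other copy of $v$) forces $\tilde{v}=e$, so $e\to v^d$. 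Since $\valp{u[1..i^*]}=P[..j]$ must be a suffix of $\val{v}$ and already ends in $\val{v}$, we obtain $i^*=1$; consequently $u[1]$ and $u[2]$ both lie in the leading run $\hat{L}_0^P$ and $v$ is a character signature $s_c$. But then the prefix condition that $P[j+1..]$ is a prefix of $\val{v}^{d-1}=c^{d-1}$ forces $P$ to be a power of $c$, contradicting $|\pow{0}{P}|>1$.

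The main obstacle I anticipate is the tree-theoretic pinpointing invoked in both cases: arguing that the run-length variable $r^*$ (respectively $\tilde{v}$) that must absorb the offending $a$-run (respectively the two adjacent equal signatures) can only coincide with $e$, and not sit strictly above or below it in the derivation tree. This will require combining two structural facts that have to be applied in tandem: any subtree whose leaves lie on both sides of $e$'s cut must be rooted at a common ancestor of both children of $e$, which within $e$'s own subtree can only be $e$ itself; while any candidate strict ancestor of $e$ with the prescribed run-length form cannot have $e$ as a descendant, because each of its children has a subtree that is either a single character (hence too small to contain $\val{e}$) or contains only one copy of $v$ (hence cannot contain both copies demanded by $e$'s subtree).
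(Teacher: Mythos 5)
Your proof is correct and follows essentially the same route as the paper's: split on $|\mathit{Pow}_0^P|=1$ vs.\ $>1$; use Lemma~\ref{lem:common_sequence2} to force any primary cut of $e$ to align with a boundary of the common sequence; and rule out boundaries where $u[i]=u[i+1]$ by observing that the two equal adjacent signatures share a run-length parent $e'$, pinning $e'=e$, and deriving that $P$ would then have to be a single-character power, contradicting $|\mathit{Pow}_0^P|>1$. The paper states the pinning step very tersely (``assume $e=e'$'') and handles the $|\mathit{Pow}_0^P|=1$ case in one sentence; you make the tree-theoretic localization (``$r^*$ / $\tilde v$ cannot be a strict descendant because it has leaves on both sides of $e$'s cut, and cannot be a strict ancestor because its children have single-character subtrees'') and the case split on the form of $e$ explicit, which is a welcome expansion of the same underlying argument rather than a different one.
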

\begin{proof}
If $|\pow{0}{P}| = 1$, then $P = a^{|P|}$ for some character $a \in \Sigma$.
In this case, $P$ must be contained in a node labeled with a signature 
$e \rightarrow \hat{e}^{d}$ such that $\hat{e} \rightarrow a$ and $d \geq |P|$.
Hence, all primary occurrences of $P$ can be found by $\mathit{pOcc}_{\mathcal{G}}(P, 1)$.

If $|\pow{0}{P}| > 1$, we consider the common sequence $u$ of $P$.
Recall that $u$ occurs at position $j$ in $e$ for any $(e,j) \in \mathit{pOcc}(P)$ by Lemma~\ref{lem:common_sequence2}. 
Hence at least $\mathit{pOcc}_{\mathcal{G}}(P) = \bigcup_{i \in \mathcal{P'}} \mathit{pOcc}_{\mathcal{G}}(P,i)$ holds, 
where $\mathcal{P'} = \{|\valp{u[1]}|, \ldots, |\valp{u[..|u|-1]}|\}$.
Moreover, we show that $\mathit{pOcc}_{\mathcal{G}}(P,i) = \emptyset$ for any $i \in \mathcal{P'}$ with $u[i] = u[i+1]$.
Note that $u[i]$ and $u[i+1]$ are encoded into the same signature in the derivation tree of $e$, and
that the parent of two nodes corresponding to $u[i]$ and $u[i+1]$ has a signature $e'$ in the form $e' \rightarrow u[i]^{d}$.
Now assume for the sake of contradiction that $e = e'$.
By the definition of the primary occurrences, $i = 1$ must hold, and hence, $\shrink{0}{P}[1] = u[1] \in \Sigma$.
This means that $P = u[1]^{|P|}$, which contradicts $|\pow{0}{P}| > 1$.
Therefore the statement holds.
\qed
\end{proof}


Using Lemmas~\ref{lem:basic_operation_time},~\ref{lem:ComputeShortCommonSequence},~\ref{lem:ComputePatternRange} and~\ref{lem:pattern_occurrence_lemma1},
we get the following lemma:
\begin{lemma}\label{lem:static_index_lemma}
For a signature encoding $\mathcal{G}$, represented by $\mathcal{H}(f_{\mathcal{A}}, f'_{\mathcal{A}})$,
of size $w$ which generates a text $T$ of length $N$,
there exists a data structure of size $O(w)$ that computes, given a string $P$,
$\mathit{pOcc}_{\mathcal{G}}(P)$ in $O(|P| f_{\mathcal{A}} + \log w \log |P| \log^* M (\log N + \log |P| \log^* M) + q_{w} |\mathit{pOcc}_{\mathcal{S}}(P)|)$ time.
\end{lemma}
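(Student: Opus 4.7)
The plan is to construct, on top of the existing $\mathcal{H}(f_{\mathcal{A}}, f'_{\mathcal{A}})$, the sorted lists $\mathcal{X} = \{ e.{\rm left}^{R} \mid e \in \mathcal{V} \}$ and $\mathcal{Y} = \{ e.{\rm right} \mid e \in \mathcal{V} \}$, together with a static range reporting structure for Problem~\ref{problem:2D} over the point set $\mathcal{R} = \{ (e.{\rm left}^{R}, e.{\rm right}) \mid e \in \mathcal{V} \}$, all of which occupy $O(w)$ space in total. The query algorithm then implements the primary-occurrence recipe of Section~\ref{subsec:RunLen} while using the structural properties of signature encodings to keep the pattern-side costs small.

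The case $|P| \leq 1$ is trivial since $\mathit{pOcc}_{\mathcal{G}}(P) = \emptyset$ by Definition~\ref{def:primary}. For $|P| > 1$, I would first compute $\id{P}$ (together with the signatures of all intermediate sequences arising in its signature encoding) by the naive bottom-up procedure used in the proof of Lemma~\ref{lem:INSERT_DELETE}, which takes $O(|P| f_{\mathcal{A}})$ time; having $\id{P}$ inside $\mathcal{H}$ is what allows the LCE queries of Lemma~\ref{lem:sub_operation_lemma} to be invoked on substrings of $P$. Next, depending on $|\pow{0}{P}|$, I would apply Lemma~\ref{lem:pattern_occurrence_lemma1} to restrict the candidate split set $\mathcal{P}$: when $|\pow{0}{P}| = 1$ take $\mathcal{P} = \{1\}$, and otherwise extract $\encpow{\mathit{Uniq}(P)}$ in $O(\log |P| \log^* M)$ time via Lemma~\ref{lem:ComputeShortCommonSequence} applied to $(\id{P}, 1, |P|)$ and read off the indices at run boundaries directly from it, obtaining $|\mathcal{P}| = O(\log |P| \log^* M)$.

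For each $j \in \mathcal{P}$, Lemma~\ref{lem:ComputePatternRange} supplies the rectangle $(x_1^{(P,j)}, x_2^{(P,j)}, y_1^{(P,j)}, y_2^{(P,j)})$ in $O(\log w (\log N + \log |P| \log^* M))$ time, and then a single $\mathit{report}_{\mathcal{R}}$ query returns the associated primary occurrences in $O(\hat{q}_{w} + q_{w} \cdot \mathit{qocc}_j)$ time, where $\mathit{qocc}_j$ denotes the local output size. By Lemma~\ref{lem:pattern_occurrence_lemma1} the union of these outputs over $j \in \mathcal{P}$ is exactly $\mathit{pOcc}_{\mathcal{G}}(P)$, so summing the costs yields $O(|P| f_{\mathcal{A}} + \log w \log |P| \log^* M (\log N + \log |P| \log^* M) + \log |P| \log^* M \cdot \hat{q}_{w} + q_{w} |\mathit{pOcc}_{\mathcal{G}}(P)|)$, and the $\hat{q}_{w}$ summand is absorbed into the second term since $\hat{q}_{w} = O(\log w)$. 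The main obstacle lies on the correctness side: one must trust Lemma~\ref{lem:pattern_occurrence_lemma1} to guarantee that split positions internal to a maximal run of $\mathit{Uniq}(P)$ contribute no new primary occurrences, since this is what compresses the naive $O(|P|)$ enumeration of splits down to $O(\log |P| \log^* M)$; once that reduction is in hand, everything else is a direct composition of the preceding lemmas.
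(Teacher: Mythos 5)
Your proposal is correct and follows essentially the same route as the paper's proof: compute $\id{P}$ bottom-up in $O(|P|f_{\mathcal{A}})$ time, use Lemma~\ref{lem:pattern_occurrence_lemma1} to restrict the split positions to $\mathcal{P}$ with $|\mathcal{P}| = O(\log |P| \log^* M)$, and for each $j\in\mathcal{P}$ apply Lemma~\ref{lem:ComputePatternRange} followed by a range-reporting query, summing costs exactly as you do. You have merely filled in a few details the paper leaves implicit (the trivial $|P|\le 1$ case, why $\id{P}$ must be inserted into $\mathcal{H}$ before the LCE queries of Lemma~\ref{lem:sub_operation_lemma} can be used, and the explicit invocation of Lemma~\ref{lem:ComputeShortCommonSequence} with parameters $(\id{P},1,|P|)$), so this is the same argument with slightly more care.
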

\begin{proof}
We focus on the case $|\pow{0}{P}| > 1$ as the other case is easier to be solved.
We first compute the common sequence of $P$ in $O(|P|f_{\mathcal{A}})$ time.
Taking $\mathcal{P}$ in Lemma~\ref{lem:pattern_occurrence_lemma1},
we recall that $|\mathcal{P}| = O(\log |P| \log^* M)$ by Lemma~\ref{lem:common_sequence2}.
Then, in light of Lemma~\ref{lem:pattern_occurrence_lemma1}, $\mathit{pOcc}_{\mathcal{G}}(P)$ can be obtained 
by $|\mathcal{P}| = O(\log |P| \log^* M)$ range reporting queries.
For each query, we spend $O(\log w (\log N + \log |P| \log^* M))$ time to compute $x_1^{(P,j)}, x_2^{(P,j)}, y_1^{(P,j)}$ and $y_2^{(P,j)}$ by Lemma~\ref{lem:ComputePatternRange}.
Hence, the total time complexity is $O(|P| f_{\mathcal{A}} + \log |P| \log^* M ( \log w (\log N + \log |P| \log^* M) + \hat{q}_{w}) + q_{w} |\mathit{pOcc}_{\mathcal{S}}(P)|)
= O(|P| f_{\mathcal{A}} + \log w \log |P| \log^* M (\log N + \log |P| \log^* M) + q_{w} |\mathit{pOcc}_{\mathcal{S}}(P)|)$.
\qed
\end{proof}

\subsection{Dynamic Index for Signature Encoding}\label{subsec:DynIndexSigEnc}
In order to dynamize our static index in the previous subsection,
we consider a data structure for ``dynamic'' two-dimensional orthogonal range reporting
that can support the following update operations:
\begin{itemize}
\item $\mathit{insert}_{\mathcal{R}}(p, x_{\mathit{pred}}, y_{\mathit{pred}})$:
given a point $p = (x, y)$,
$x_{\mathit{pred}} = \max \{ x' \in \mathcal{X} \mid x' \leq x \}$ and
$y_{\mathit{pred}} = \max \{ y' \in \mathcal{Y} \mid y' \leq y \}$,
insert $p$ to $\mathcal{R}$ and update $\mathcal{X}$ and $\mathcal{Y}$ accordingly.
\item $\mathit{delete}_{\mathcal{R}}(p)$: given a point 
$p = (x,y) \in \mathcal{R}$, delete $p$ from $\mathcal{R}$ and update $\mathcal{X}$ and $\mathcal{Y}$ accordingly.
\end{itemize}
We use the following data structure:
\begin{lemma}[\cite{DBLP:conf/soda/Blelloch08}]\label{lem:RangeQuery}
There exists a data structure
that supports $\mathit{report}_{\mathcal{R}}(x_1,x_2,y_1,y_2)$ 
in $O(\log |\mathcal{R}| + \mathit{occ}(\log |\mathcal{R}| / \log \log |\mathcal{R}|))$ time, 
and $\mathit{insert}_{\mathcal{R}}(p,i,j)$, 
$\mathit{delete}_{\mathcal{R}}(p)$ in amortized $O(\log |\mathcal{R}|)$ time, 
where $\mathit{occ}$ is the number of the elements to output. 
This structure uses $O(|\mathcal{R}|)$ space.~\footnote{
	The original problem considers a real plane in the paper~\cite{DBLP:conf/soda/Blelloch08}, however, 
	his solution only need to compare any two elements in $\mathcal{R}$ in constant time. 
	Hence his solution can apply to our range reporting problem by maintains $\mathcal{X}$ and $\mathcal{Y}$ 
	using the data structure of order maintenance problem proposed 
	by Dietz and Sleator~\cite{DBLP:conf/stoc/DietzS87}, which enables us to 
	compare any two elements in a list $L$ and insert/delete an element to/from $L$ in constant time.
}
\end{lemma}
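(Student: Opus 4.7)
The plan is to invoke Blelloch's dynamic priority-search-tree-style structure~\cite{DBLP:conf/soda/Blelloch08} as a black box and to adapt it from the real plane, where it was originally analysed, to the abstract ordered sets $\mathcal{X}$ and $\mathcal{Y}$ that appear in our setting. Blelloch's construction already yields, for points in $\mathbb{R}^{2}$, the claimed $O(\log|\mathcal{R}| + \mathit{occ}\cdot \log|\mathcal{R}|/\log\log|\mathcal{R}|)$ report time, amortized $O(\log|\mathcal{R}|)$ insert/delete time, and $O(|\mathcal{R}|)$ space; its analysis touches coordinates only through $O(1)$-time pairwise comparisons and $O(1)$-time neighbour splices.

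First I would isolate the exact interface that Blelloch's algorithm requires from the coordinate domain: comparing two existing coordinates in $O(1)$ time, and (for updates) splicing a new coordinate next to a specified predecessor in $O(1)$ time. Both primitives are delivered by the order-maintenance data structure of Dietz and Sleator~\cite{DBLP:conf/stoc/DietzS87}, which supports both in worst-case $O(1)$ time and in linear space. I would therefore store $\mathcal{X}$ and $\mathcal{Y}$ in two separate order-maintenance lists, and inside the Blelloch structure keep pointers to their list nodes instead of numeric coordinates.

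Next I would rewire the three operations. On $\mathit{insert}_{\mathcal{R}}(p,x_{\mathit{pred}},y_{\mathit{pred}})$, where $p=(x,y)$, I splice $x$ directly after $x_{\mathit{pred}}$ in the list for $\mathcal{X}$ and $y$ directly after $y_{\mathit{pred}}$ in the list for $\mathcal{Y}$, each in $O(1)$ time, and then invoke Blelloch's insertion on $p$ using the resulting list-node handles, for an amortized $O(\log|\mathcal{R}|)$ cost. For $\mathit{delete}_{\mathcal{R}}(p)$ I call Blelloch's deletion and then remove the corresponding list nodes. For $\mathit{report}_{\mathcal{R}}$ every comparison triggered inside Blelloch's traversal is delegated to the order-maintenance comparators, so the query bound carries over verbatim.

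The only delicate point, and where I would spend the bulk of the write-up, is verifying that Blelloch's algorithm really does interact with coordinates exclusively through the two primitives above, with no hidden arithmetic or hashing on coordinate values; otherwise some operation could escape our abstraction and force extra work. Once this invariant is checked directly against the description in~\cite{DBLP:conf/soda/Blelloch08}, the stated time bounds transfer immediately, and the space is still $O(|\mathcal{R}|)$ since the two order-maintenance lists also fit in linear space.
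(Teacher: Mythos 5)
Your proposal matches the paper's justification: the paper also reduces the claim to the observation that Blelloch's structure touches coordinates only through $O(1)$-time comparisons (and neighbour insertions/deletions), and then substitutes the Dietz--Sleator order-maintenance lists for $\mathcal{X}$ and $\mathcal{Y}$ to supply exactly those primitives. Your write-up is merely a more explicit elaboration of the same reduction, including the sanity check that no hidden arithmetic on coordinates is used.
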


Now we are ready to prove Theorem~\ref{theo:dynamic_index}.
\begin{proof}[Proof of Theorem~\ref{theo:dynamic_index}]
Our index consists of $\mathcal{H}(f_{\mathcal{A}}, f'_{\mathcal{A}})$ and a dynamic range reporting data structure $\Lambda$ of Lemma~\ref{lem:RangeQuery}
whose $\mathcal{R}$ is maintained as they are defined in the static version.
We maintain $\mathcal{X}$ and $\mathcal{Y}$ in two ways;
self-balancing binary search trees for binary search,
and Dietz and Sleator's data structures for order maintenance.
Then, primary occurrences of $P$ can be computed as described in Lemma~\ref{lem:static_index_lemma}.
Adding the $O(\mathit{occ} \log N)$ term for computing all pattern occurrences from primary occurrences,
we get the time complexity for pattern matching in the statement.
Concerning the update of our index, 
we described how to update $\mathcal{H}$ after $\mathit{INSERT}$ and $\mathit{DELETE}$ in Lemma~\ref{lem:INSERT_DELETE}.
What remains is to show how to update $\Lambda$ when a signature is inserted into or deleted from $\mathcal{V}$.
When a signature $e$ is deleted from $\mathcal{V}$, 
we first locate $e.{\rm left}^R$ on $\mathcal{X}$ and $e.{\rm right}$ on $\mathcal{Y}$,
and then execute $\mathit{delete}_{\mathcal{R}}(e.{\rm left}^R, e.{\rm right})$.
When a signature $e$ is inserted into $\mathcal{V}$, 
we first locate
$x_{\mathit{pred}} = \max \{ x' \in \mathcal{X} \mid x' \leq e.{\rm left}^R \}$ on $\mathcal{X}$ and
$y_{\mathit{pred}} = \max \{ y' \in \mathcal{Y} \mid y' \leq e.{\rm right} \}$ on $\mathcal{Y}$,
and then execute $\mathit{insert}_{\mathcal{R}}((e.{\rm left}^R, e.{\rm right}), x_{\mathit{pred}}, y_{\mathit{pred}})$.
The locating can be done by binary search on $\mathcal{X}$ and $\mathcal{Y}$
in $O(\log w \log N \log^* M)$ time as Lemma~\ref{lem:ComputePatternRange}.
In a single $\mathit{INSERT}(i,Y)$ or $\mathit{DELETE}(i,y)$ operation, 
$O(y + \log N \log^* M)$ signatures are inserted into or deleted from $\mathcal{V}$, where $|Y| = y$.
Hence we get Theorem~\ref{theo:dynamic_index}.
\qed
\end{proof}

\section{Applications} \label{sec:applications}

In this section, we present a number of applications of 
the data structures of Sections~\ref{sec:Framework} and~\ref{sec:static_section}.
Theorems~\ref{theo:lzfac} and~\ref{theo:changedSLP} are applications to 
text compression.
\begin{theorem}\label{theo:lzfac} 
    Given a string $T$ of length $N$, we can compute
    the LZ77 Factorization of $T$ in $O(N f_{\mathcal{A}} + z \log w \log ^3 N
    (\log^* N)^2)$ time and $O(w + f'_{\mathcal{A}})$ working space 
    using $\mathcal{H}(f_{\mathcal{A}}, f'_{\mathcal{A}})$ for a signature encoding of size $w$ which generates $T$, 
    where $z$ is the
    size of the LZ77 factorization of $T$ and $w = O(z \log N \log^* N)$.
\end{theorem}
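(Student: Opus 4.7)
The plan splits into an $O(Nf_{\mathcal{A}})$ preprocessing phase and a factor-computation phase whose per-factor cost is $O(\log w\log^3 N(\log^* N)^2)$. First I would build $\mathcal{H}(f_{\mathcal{A}},f'_{\mathcal{A}})$ for the signature encoding $\mathcal{G}$ of $T$ using Theorem~\ref{theo:HConstructuionTheorem}(1), then install on top of it the static point set $\mathcal{R}=\{(e.\mathrm{left}^R,e.\mathrm{right})\mid e\in\mathcal{V}\}$ of Section~\ref{subsec:StaticIndexSigDic} with its lexicographic orderings of $\mathcal{X}$ and $\mathcal{Y}$, and precompute, for every variable $e$, the value $m_e=\min\mathit{vOcc}(e,S)$ in $O(w)$ total time by a single bottom-up sweep of the DAG. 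All auxiliary structures are $O(w)$ in size, so the space stays within $O(w+f'_{\mathcal{A}})$.

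Second, I would emit the factors $f_1,\dots,f_z$ greedily left to right. At the start of the current factor let $p$ be its starting position; by exponential-then-binary search on the length $\ell$ I locate the largest $\ell$ for which $P_\ell := T[p..p+\ell-1]$ occurs in $T$ at some position $\le p-\ell$, using $O(\log\ell_i)$ existence tests with $\ell_i=|f_i|$. Each test proceeds by (i) extracting the common sequence $\uniq{P_\ell}$ directly from $\mathcal{G}$ in $O(\log N+\log\ell\log^*N)$ time via Lemma~\ref{lem:ComputeShortCommonSequence} (crucially, $P_\ell$ is a substring of $T$, so it need never be materialised); (ii) by Lemma~\ref{lem:pattern_occurrence_lemma1}, enumerating the $O(\log\ell\log^*N)$ distinct split points of $\uniq{P_\ell}$; (iii) for each split point computing the four corners of the associated rectangle in $O(\log w(\log N+\log\ell\log^*N))$ time by Lemma~\ref{lem:ComputePatternRange}; and (iv) checking whether the resulting leftmost occurrence of $P_\ell$ in $T$ is at position $\le p-\ell$.

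The main obstacle is step (iv): using the plain range-reporting tool of Lemma~\ref{lem:RangeQuery} would expose an $\mathit{occ}\log N$ term per test, which is fatal because a single tested length can have enormously many earlier matches and this cost is paid at every length and every factor. I overcome this by replacing the reporting structure with a weighted two-dimensional range-\emph{minimum} structure: attach to each point corresponding to $e$ the weight $|e.\mathrm{left}|+m_e$. For a fixed split point the primary occurrence $(e,j)$ satisfies $j = |e.\mathrm{left}|-|P_1|+1$, and the leftmost text position contributed by $(e,j)$ (covering both the binary case $e\rightarrow e_\ell e_r$ and the run-length case $e\rightarrow \hat{e}^d$) is $j+m_e-1$. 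Hence minimising $|e.\mathrm{left}|+m_e$ over the rectangle returns, up to a split-dependent additive constant, the leftmost occurrence through that rectangle; a standard range tree augmented with subtree minima answers each query in $O(\log w)$ time. Taking the minimum across the $O(\log\ell\log^*N)$ rectangles gives the globally leftmost occurrence of $P_\ell$, settling (iv) in one comparison.

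For the complexity, each existence test therefore costs $O(\log\ell\log^*N\cdot\log w(\log N+\log\ell\log^*N))=O(\log w\log\ell\log^*N\log N+\log w\log^2\ell(\log^*N)^2)$; multiplying by the $O(\log \ell_i)$ tests per factor and using $\log\ell_i\le \log N$ gives $O(\log w\log^3 N(\log^*N)^2)$ per factor, which summed over $z$ factors yields the second term of the claimed bound. Combined with the $O(Nf_{\mathcal{A}})$ preprocessing and the $O(w+f'_{\mathcal{A}})$ working-space accounting, this completes the theorem.
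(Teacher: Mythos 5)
Your proposal follows essentially the same route as the paper's proof: exponential-then-binary search on the factor length (the paper's $\mathit{Fst}(j,k)$ queries and Fact on $O(\log|f_i|)$ calls per factor), reduction to $O(\log\ell\log^*N)$ split points via the common sequence of $P$, and a two-dimensional \emph{range-minimum} query over the $(\mathcal{X},\mathcal{Y})$ grid with weight $m_e + |e.\mathrm{left}|$, which is exactly the paper's $e.\mathrm{min}$.

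One small inaccuracy: you claim a ``standard range tree augmented with subtree minima'' gives $O(\log w)$ query time while the space stays $O(w)$, but a standard range tree occupies $\Theta(w\log w)$ space (and without fractional cascading its query time is $\Theta(\log^2 w)$). The paper instead cites a linear-space 2D range-minimum structure with $O(\log^2 w)$ query time; since $\log w = O(\log N)$, the extra $\log w$ factor is absorbed into the $\log^3 N$ term, so your final bound is still correct once the data structure is replaced by the one the paper actually uses.
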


\begin{theorem}\label{theo:changedSLP}
(1) Given an $\mathcal{H}(f_{\mathcal{A}}, f'_{\mathcal{A}})$ for 
  a signature encoding $\mathcal{G} = (\Sigma, \mathcal{V}, \mathcal{D}, S)$ of size $w$ which generates $T$, 
  we can compute an SLP $\mathcal{S}$ of size $O(w \log |T|)$ generating $T$ in $O(w \log |T|)$ time.
(2) Let us conduct a single $\mathit{INSERT}$ or $\mathit{DELETE}$
    operation on the string $T$
    generated by the SLP of (1).
    Let $y$ be the length
    of the substring to be inserted or deleted,
    and let $T'$ be the resulting string.
    During the above operation on the string,
    we can update, in $O((y + \log |T'| \log^* M)(f_{\mathcal{A}} + \log |T'|) )$ time,
    the SLP of (1) to an SLP $\mathcal{S}'$ of size $O(w' \log |T'|)$
    which generates $T'$, 
    where $M$ is the maximum length of the dynamic text, 
    $w'$ is the size of updated $\mathcal{G}$ which generates $T'$.
\end{theorem}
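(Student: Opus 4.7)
For part (1), the plan is to convert the run-length ACFG $\mathcal{G}$ into an SLP by eliminating the run-length productions $e \to \hat{e}^d$ via binary powering. The terminal productions $e \to a$ and the binary productions $e \to e_\ell e_r$ of $\mathcal{G}$ are already in SLP form, so they can be copied over in constant time each. For every run-length production $e \to \hat{e}^d$, I would introduce auxiliary SLP variables $Y_0, Y_1, \ldots, Y_{\lfloor\log_2 d\rfloor}$ with $Y_0 \to \hat{e}$ and $Y_{i+1} \to Y_i Y_i$ (so $\val{Y_i} = \val{\hat{e}}^{2^i}$), and then combine a subset of the $Y_i$'s, guided by the binary expansion of $d$, into a chain of binary productions culminating in $e$. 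This produces $O(\log d) = O(\log |T|)$ SLP productions per run-length production and $O(1)$ per other production; in total $O(w \log |T|)$ productions, computable in $O(w \log |T|)$ time by one pass over $\mathcal{D}$.

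For part (2), I would equip every signature $e \in \mathcal{V}$ with a pointer to the private block of auxiliary SLP variables introduced for $e$ in part (1), so that all SLP productions attributable to $e$ can be inserted or removed atomically together with $e$ itself. By Lemma~\ref{lem:INSERT_DELETE} together with Lemma~\ref{lem:concatenate1}, a single $\mathit{INSERT}$ or $\mathit{DELETE}$ of a substring of length $y$ updates $\mathcal{H}$ in $O(f_{\mathcal{A}}(y + \log |T'| \log^* M))$ time, and only $O(y + \log |T'| \log^* M)$ signatures are newly created in, or removed from, $\mathcal{V}$. For each newly created signature I would install its private block of $O(\log |T'|)$ SLP productions exactly as in part (1); for each removed signature I would discard its private block. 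Refreshing the topmost chain of productions up to the new start symbol is covered by the same bookkeeping, since by Lemma~\ref{lem:concatenate1} only $O(\log |T'| \log^* M)$ ancestor signatures are touched. Summing up yields an update time of $O((y + \log |T'| \log^* M)(f_{\mathcal{A}} + \log |T'|))$, and the resulting SLP has size $O(w' \log |T'|)$ because each of the $w'$ signatures of the updated signature encoding owns $O(\log |T'|)$ SLP productions.

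The main obstacle I anticipate is ensuring that deletions do not leave behind orphan SLP variables that would spoil the $O(w' \log |T'|)$ size bound or break admissibility of the resulting SLP. Making each signature's expansion a private block owned solely by that signature, rather than sharing auxiliary variables across different run-length productions, avoids this cleanly at the cost of sacrificing some potential sharing. A secondary technicality is that the bound on the number of modified signatures must hold not only for the update of $\mathcal{H}$ but also in the stronger sense needed here, namely for the total count of signatures created and destroyed in $\mathcal{V}$; this follows from the proof of Lemma~\ref{lem:INSERT_DELETE} together with Lemma~\ref{lem:concatenate1}, which bounds the induced subtree of ancestors whose productions can possibly be affected.
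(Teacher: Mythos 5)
Your proposal is correct and follows essentially the same approach as the paper: part~(1) eliminates each run-length production $e \to \hat{e}^d$ by binary powering into $O(\log |T|)$ binary productions, and part~(2) invokes Lemma~\ref{lem:concatenate1} to bound the number of created/destroyed signatures by $O(y + \log |T'| \log^* M)$ and updates the SLP per signature exactly as in part~(1). Your explicit ``private block'' bookkeeping is a sensible way to make precise the signature-to-SLP-production correspondence that the paper's terser proof leaves implicit.
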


Theorems~\ref{theo:faster_LCP}-\ref{theo:ImproveDictionaryMatching}
are applications to compressed string processing (CSP),
where the task is to process a given compressed representation of string(s)
without explicit decompression.
\begin{theorem} \label{theo:faster_LCP}
      Given an SLP $\mathcal{S}$ of size $n$ generating a string of length $N$,
      we can construct, in $O(n \log \log n \log N \log^* N)$ time, 
      a data structure 
      which occupies $O(n \log N \log^* N)$ space 
      and supports $\LCPQ(X_i, X_j)$ and $\LCSQ(X_i, X_j)$ queries 
      for variables $X_i, X_j$ in $O(\log N)$ time.
The $\LCPQ(X_i, X_j)$ and $\LCSQ(X_i, X_j)$ query times can be improved 
to $O(1)$ using $O(n \log n \log N \log^* N)$ preprocessing time. 
\end{theorem}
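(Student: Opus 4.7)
First I would construct the signature encoding $\mathcal{G}$ of the string generated by the input SLP by invoking Theorem~\ref{theo:HConstructuionTheorem}(3) with the $\mathcal{H}(\log w, w)$ representation. This runs in $O(n \log\log n \log N \log^* N)$ time and $O(n \log^* N + w)$ space with $w = O(z \log N \log^* N) = O(n \log N \log^* N)$ (using $z \leq n$). During the same bottom-up pass I would record, for every SLP variable $X_i$, the signature $e_i = \id{\val{X_i}} \in \mathcal{V}$, adding only $O(n)$ extra space, so that the total preprocessing time and the $O(n \log N \log^* N)$ total space match the statement.

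Since $\LCPQ(X_i, X_j) = \LCEQ(\val{X_i}, \val{X_j}, 1, 1)$ and $\LCSQ(X_i, X_j) = \LCEQ(\val{X_i}^R, \val{X_j}^R, 1, 1)$, both queries reduce to the LCE operation on the signature DAG, which Lemma~\ref{lem:sub_operation_lemma} already supports in $O(\log N + \log \ell \log^* N)$ time starting from $e_i$ and $e_j$. The key obstacle is sharpening this bound to $O(\log N)$ uniformly in the answer $\ell$: the $\log \ell \log^* N$ summand arises because, at each of the $O(\log \ell)$ heights of the derivation trees, the LCE walk may need to scan up to $O(\log^* N)$ boundary signatures of the common sequence produced by $\mathit{XShrink}$ and $\mathit{XPow}$ (Lemmas~\ref{lem:common_sequence2} and~\ref{lem:ancestors}).

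To remove the $\log^* N$ factor I would precompute, for every signature $e \in \mathcal{V}$ and every height $t$ of its derivation tree, a fixed-length record summarising the $O(\log^* N)$ boundary signatures around $\uniq{\val{e}}$ at height $t$. Since these boundaries depend only on a constant-size window (Lemma~\ref{lem:CoinTossing}), each record fits in $O(1)$ machine words, and packed comparison of two records at the same height either certifies full agreement or locates the earliest divergence in $O(1)$ time. The tables can be populated while $\mathcal{G}$ is being built bottom-up without increasing the asymptotic preprocessing budget, and together they use $O(w) = O(n \log N \log^* N)$ extra space. With these in place the LCE walk pays $O(1)$ per height and $O(\log N)$ in total, giving the claimed $O(\log N)$ query time.

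For the $O(1)$-query refinement I would additionally install a constant-time LCA oracle and a weighted level-ancestor structure on the DAG viewed as a tree rooted at each queried signature, so that once the matched prefix of the two common sequences has been pinpointed the length of the LCP (or LCS) can be returned by a single look-up; constructing these auxiliary indexes on a structure of size $O(w)$ via $\mathcal{H}(\log w, w)$ costs $O(w \log w) = O(n \log n \log N \log^* N)$, which explains the weaker preprocessing bound. The hard part of the whole plan will be the boundary-record precomputation: one has to verify that $O(1)$ words truly encode enough of the parsing output at each height to detect the earliest divergent position despite the interaction between $\mathit{Eblock}$ and $\mathit{Epow}$, and that building these records can be amortised over the $O(w)$ work already performed during the bottom-up construction of $\mathcal{G}$.
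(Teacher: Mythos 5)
Your reduction to LCE on the signature DAG and your preprocessing via Theorem~\ref{theo:HConstructuionTheorem}(3) together with recording $\id{\val{X_i}}$ for each variable are both on track, and you correctly flag the $\log\ell\log^*N$ term in Lemma~\ref{lem:sub_operation_lemma} as the obstacle. However, the mechanism you propose to kill that term has a gap: you claim each per-height boundary record ``fits in $O(1)$ machine words'' because Lemma~\ref{lem:CoinTossing} makes parsing decisions depend on a constant-size window. But the record has to store the $O(\log^*N)$ boundary signatures themselves, each of which is an integer of up to $\Theta(\log M)$ bits, so a faithful per-height record costs $\Theta(\log^*N)$ words, not $O(1)$. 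Constant-time packed comparison at each of the $O(\log\ell)$ heights therefore does not follow, and the $O(\log N)$ bound is not established.

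The paper's actual argument (Lemma~\ref{lem:lcp_lcs_on_H}, due to Alstrup et al.) avoids this entirely by exploiting that $\LCPQ$ is an LCE query anchored at position $1$ in \emph{both} strings. When the common prefix $P$ starts at the very first character, there is no left boundary to carry through the levels: the relevant common sequence degenerates to $\hat{A}^{P}_{h^{P}}\,\xshrink{h^{P}}{P}\,R_{h^{P}-1}^{P}\hat{R}_{h^{P}-1}^{P}\cdots R_{0}^{P}\hat{R}_{0}^{P}$, whose run-length encoding has size $O(\log|P| + \log^*M)$ rather than the general $O(\log|P|\log^*M)$ of $\uniq{P}$. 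This immediately yields $O(\log|s_1|+\log|s_2|) = O(\log N)$ for $\LCPQ$ (and the symmetric statement for $\LCSQ$). Your uniform ``speed up LCE at arbitrary positions'' plan is strictly harder than needed and, as written, incorrect; the structural collapse at position $1$ is the missing key lemma.

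Your $O(1)$-query refinement also diverges from the paper and is overbuilt: the paper simply sorts the $n$ variables lexicographically (using the $O(\log N)$-time $\LCPQ$ and a comparison sort, $O(n\log n\log N)$ time), stores the sequence of adjacent lcp values, and answers $\LCPQ(X_i,X_j)$ by a constant-time range-minimum query over the interval between the ranks of $X_i$ and $X_j$; $\LCSQ$ is handled symmetrically on the reversed strings. A per-signature LCA/weighted-level-ancestor machinery is unnecessary and you would still need to argue that it returns the lcp length in $O(1)$, which is not obvious from what you wrote.
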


\begin{theorem}\label{theo:smaller_LCE}
  Given an SLP $\mathcal{S}$ of size $n$ generating a string $T$ of length $N$, 
  there is a data structure 
  which occupies $O(w + n)$ space and 
  supports queries $\LCEQ(X_i,X_j,a,b)$ for
  variables $X_i,X_j$, $1 \leq a \leq |X_i|$ and $1 \leq b
  \leq |X_j|$ in $O(\log N + \log \ell \log^* N)$ time, where $w = O(z \log N \log^* N)$.
  The data structure can be constructed in
  $O(n \log\log n \log N \log^* N)$ preprocessing time and 
  $O(n \log^* N + w)$
  working space, where $z \leq n$ is the size of the LZ77 factorization of $T$ and $\ell$ is the answer of LCE query.
\end{theorem}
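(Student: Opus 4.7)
The plan is to build the signature encoding of $T$ from the given SLP and piggyback it with an array that maps each SLP variable to the signature representing its derived string, and then answer queries by invoking Lemma~\ref{lem:sub_operation_lemma}.

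First, I would apply Theorem~\ref{theo:HConstructuionTheorem}(3) (with $M = N$, as we are in the static setting) to build $\mathcal{H}(\log w, w)$ for a signature encoding $\mathcal{G} = (\Sigma, \mathcal{V}, \mathcal{D}, S)$ of size $w$ generating $T$. This construction takes $O(n \log\log n \log N \log^* N)$ time and $O(n \log^* N + w)$ working space, matching the bounds in the statement; by Lemma~\ref{lem:upperbound_signature} we have $w = O(z \log N \log^* N)$. The natural implementation of that construction processes variables $X_1, \ldots, X_n$ in topological order and, for $X_i \to X_\ell X_r$, obtains $\id{\val{X_i}}$ by a CONCAT of the previously computed $\id{\val{X_\ell}}$ and $\id{\val{X_r}}$, while for $X_i \to a$ it takes $\id{\val{X_i}} = \sig{a}$. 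Concurrently I would store, in a plain array of size $n$, the correspondence $X_i \mapsto e_i := \id{\val{X_i}} \in \mathcal{V}$ (or equivalently the pointer to the corresponding node in the DAG component of $\mathcal{H}$). This adds only $O(n)$ space and $O(1)$ work per SLP variable, so the preprocessing bounds are unaffected. The final data structure consists of $\mathcal{H}(\log w, w)$ and the array, taking $O(w + n)$ space as claimed.

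To answer $\LCEQ(X_i, X_j, a, b)$, I would look up $e_i$ and $e_j$ in the array and invoke the LCE algorithm of Lemma~\ref{lem:sub_operation_lemma} on the pair of signatures $(e_i, e_j)$ with offsets $(a, b)$. Since $\val{e_i} = \val{X_i}$ and $\val{e_j} = \val{X_j}$ both have length at most $N$, and in the static regime we may take $M = N$, the running time is
\[
O\bigl(\log |\val{e_i}| + \log |\val{e_j}| + \log \ell \log^* M\bigr) \;=\; O(\log N + \log \ell \log^* N),
\]
where $\ell$ is the returned LCE value. This matches the query time in the statement.

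The main obstacle is verifying the bookkeeping claim that the construction of Theorem~\ref{theo:HConstructuionTheorem}(3) really produces a single signature $e_i \in \mathcal{V}$ with $\val{e_i} = \val{X_i}$ for every SLP variable $X_i$. The standard bottom-up realization via CONCAT does: each CONCAT of two encoded strings returns the root signature of the concatenated string and inserts every fresh assignment it creates into $\mathcal{D}$, which is exactly what is needed here. Once this is in place, every other ingredient -- space, preprocessing time, working space, and query time -- follows directly from Theorem~\ref{theo:HConstructuionTheorem}(3), Lemma~\ref{lem:upperbound_signature}, and Lemma~\ref{lem:sub_operation_lemma}, completing the proof.
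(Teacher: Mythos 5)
Your query logic is clean, but the preprocessing step relies on a false premise, and that premise is exactly what the paper's own proof is designed to avoid.

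You claim that the $O(n \log\log n \log N \log^* N)$-time construction of Theorem~\ref{theo:HConstructuionTheorem}(3) ``naturally'' processes SLP variables in topological order and produces $\id{\val{X_i}}$ for every $i$ via a CONCAT of $\id{\val{X_\ell}}$ and $\id{\val{X_r}}$. It does not. The fast construction (Appendix~\ref{sec:HConstruction3-2}) proceeds \emph{level by level}, maintaining compressed representations $\Lambda_t, \hat{\Lambda}_t$ of $\shrink{t}{\val{X_n}}$ and $\pow{t}{\val{X_n}}$ only, and at the end recovers the single signature $\id{\val{X_n}} = \id{T}$. The signatures $\id{\val{X_i}}$ for $i < n$ are never produced: the derivation tree of $\id{\val{X_i}}$ shares with that of $\id{T}$ only the common sequence $\uniq{\val{X_i}}$, while the ``top'' signatures ($\hat{A}^{X_i}_t, \hat{B}^{X_i}_t, A^{X_i}_t, B^{X_i}_t$) are occurrence-independent only for the specific leftmost anchoring and are not assignments that arise in the signature encoding of $T$. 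To obtain all of $\id{\val{X_1}}, \ldots, \id{\val{X_n}}$ you must either (a) use the variable-by-variable CONCAT realization, which is the $\mathcal{H}(f_{\mathcal{A}}, f'_{\mathcal{A}})$ variant with $f_{\mathcal{A}} = \log w$, costing $O(n \log w \log N \log^* N)$ time and overshooting the claimed preprocessing time by a $\log w / \log\log n$ factor; or (b) use the modified level-by-level algorithm of Lemma~\ref{lem:SLPSignatureEncoding}, which the paper explicitly charges $O(n \log N \log^* N)$ working space --- strictly more than the claimed $O(n \log^* N + w)$ whenever $z = o(n)$. Moreover, storing $\id{\val{X_i}}$ for every $i$ forces the grammar to contain $O(n \log N \log^* N)$ extra signatures, so the final index would occupy $\Theta(n \log N \log^* N)$ space rather than the claimed $O(w + n)$. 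Those inflated bounds are precisely the ones stated in Theorem~\ref{theo:faster_LCP}, not Theorem~\ref{theo:smaller_LCE}.

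The paper avoids per-variable signatures entirely. It builds only the signature encoding of $T = \val{X_n}$ via Theorem~\ref{theo:HConstructuionTheorem}(3), and for each SLP variable $X_i$ it stores just two integers: $|\val{X_i}|$ and the leftmost occurrence position $\ell_i$ of $X_i$ in the derivation tree of $X_n$ (which exists since every SLP variable reaches $X_n$). A query $\LCEQ(X_i, X_j, a, b)$ is then answered as $\min\bigl(\LCEQ(T,T,\ell_i+a-1,\ell_j+b-1),\ |\val{X_i}|-a+1,\ |\val{X_j}|-b+1\bigr)$, invoking Lemma~\ref{lem:sub_operation_lemma} only on the one global signature $\id{T}$. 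This keeps the index at $O(w + n)$ space and the working space at $O(n \log^* N + w)$. Your proposal would work if you replaced the per-variable signature array with this pair of integer arrays and performed the reduction to positions in $T$; as written, the claim that ``every other ingredient follows directly'' from Theorem~\ref{theo:HConstructuionTheorem}(3) does not hold.
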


Let $h$ be the height of the derivation tree of a given SLP $\mathcal{S}$.
Note that $h \geq \log N$.
Matsubara et al.~\cite{matsubara_tcs2009} showed 
an $O(nh(n + h \log N))$-time $O(n(n + \log N))$-space
algorithm to compute an $O(n \log N)$-size representation of 
all palindromes in the string.
Their algorithm uses
a data structure which supports in $O(h^2)$ time,
$\LCEQ$ queries of a special form $\LCEQ(X_i, X_j, 1, p_j)$~\cite{MasamichiCPM97}.
This data structure takes $O(n^2)$ space and can be constructed in 
$O(n^2 h)$ time~\cite{lifshits07:_proces_compr_texts}. 
Using Theorem~\ref{theo:smaller_LCE}, we obtain a faster algorithm,
as follows:
\begin{theorem}\label{theo:palindrome}
Given an SLP of size $n$ generating a string of length $N$,
we can compute an $O(n \log N)$-size representation
of all palindromes in the string 
in $O(n \log^2 N \log^* N)$ time and $O(n \log^* N + w)$ space.
\end{theorem}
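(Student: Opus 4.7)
The plan is to invoke Matsubara et al.'s algorithm~\cite{matsubara_tcs2009} verbatim, but swap out its LCE subroutine for the faster data structure of Theorem~\ref{theo:smaller_LCE}. Their algorithm treats the LCE data structure as a black box: aside from constructing it, the only work that depends on $h$ is the answering of LCE queries of the form $\LCEQ(X_i, X_j, 1, p_j)$, which is a special case of the queries supported by Theorem~\ref{theo:smaller_LCE}. Hence the obvious route is to reanalyze their time bound $O(nh(n+h\log N))=O(n^2h+nh^2\log N)$ by attributing $O(n^2h)$ to the construction of the old LCE structure and $O(nh^2\log N)$ to $O(n\log N)$ LCE queries at $O(h^2)$ each.

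With this accounting in hand, I would first build the data structure of Theorem~\ref{theo:smaller_LCE} on the input SLP, which costs $O(n\log\log n\log N\log^*N)$ time and $O(n\log^*N+w)$ working space. Next, I would run Matsubara et al.'s palindrome-enumeration procedure unchanged, but every time it issues $\LCEQ(X_i,X_j,1,p_j)$ I would route the query through Theorem~\ref{theo:smaller_LCE}, answered in $O(\log N+\log\ell\log^*N)=O(\log N\log^*N)$ time. Summing over the $O(n\log N)$ queries gives $O(n\log^2 N\log^*N)$ time for the query phase, which dominates the preprocessing and matches the target bound. The $O(n\log N)$-size palindrome representation they produce is unchanged.

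The main obstacle is verifying that Matsubara et al.'s algorithm really is of the ``black-box LCE'' form I described, i.e., that the factor of $h^2$ in their analysis is exactly the cost of the LCE oracle they invoke and not intrinsic to the rest of the computation. Concretely, one must check that (i) the number of LCE calls is $O(n\log N)$, independent of $h$, and (ii) all non-LCE work (scanning the $O(n\log N)$ candidate palindromic substrings induced by the SLP, maintaining their compact encoding, etc.) runs in $O(n\log N)$ time and $O(w+n\log^*N)$ space once the LCE oracle is external. This bookkeeping, together with the observation that $\log\ell\le\log N$, is what justifies the stated $O(n\log^2 N\log^*N)$ time and $O(n\log^*N+w)$ space bounds; the rest is a direct substitution of oracles.
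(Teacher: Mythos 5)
Your proposal is correct and takes essentially the same route as the paper: the paper likewise cites Matsubara et al.\ in the parameterized form $O(P(n,N) + E(n,N)\cdot n\log N)$ time and $O(n\log N + S(n,N))$ space, and then substitutes the preprocessing/query costs of Theorem~\ref{theo:smaller_LCE}. Your extra step of backing out the $O(n\log N)$ query count by decomposing the original $O(n^2h + nh^2\log N)$ bound into ``old LCE construction'' plus ``old per-query cost times number of queries'' is exactly the justification the paper leaves implicit.
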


A non-empty string $s$ is called a Lyndon word
if $s$ is the lexicographically smallest suffix of $s$.
The Lyndon factorization of a non-empty string $w$
is a sequence of pairs $(|f_i|, p_i)$
where each $f_i$ is a Lyndon word and $p_i$ is a positive integer
such that $w = f_1^{p_1} \cdots f_m^{p_m}$
and $f_{i-1}$ is lexicographically smaller than $f_i$
for all $1 \leq i < m$.
I et al.~\cite{INIBT13} proposed a Lyndon factorization algorithm
running in $O(nh(n + \log n \log N))$ time and $O(n^2)$ space.
Their algorithm use the LCE data structure on SLPs~\cite{DBLP:journals/iandc/IMSIBTNS15} 
which requires $O(n^2 h)$ preprocessing time, $O(n^2)$ working space,
and $O(h \log N)$ time for LCE queries.
We can obtain a faster algorithm using Theorem~\ref{theo:smaller_LCE}.
\begin{theorem}\label{theo:Lyndon}
  Given an SLP of size $n$ generating a string of length $N$,
  we can compute the Lyndon factorization of the string in 
  $O(n (n + \log n \log N \log^* N))$ time and $O(n^2 + z\log N \log^* N))$ space.
\end{theorem}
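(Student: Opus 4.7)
The plan is to follow the algorithmic framework of I et al.~\cite{INIBT13} for computing the Lyndon factorization of a string given by an SLP and simply replace their LCE data structure of~\cite{DBLP:journals/iandc/IMSIBTNS15} by the compressed one of Theorem~\ref{theo:smaller_LCE}. Inspecting~\cite{INIBT13}, their algorithm performs $O(n)$ outer iterations of SLP-traversal and candidate-maintenance work totalling $O(n^2)$ time, together with $O(n \log n)$ $\LCEQ$ queries on variables of the SLP; combined with the $O(n^2 h)$ preprocessing and $O(h \log N)$ per-query cost of their LCE structure this gives their claimed $O(nh(n + \log n \log N))$ time bound.

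By Theorem~\ref{theo:smaller_LCE}, we can in $O(n \log \log n \log N \log^* N)$ preprocessing time and $O(n \log^* N + w)$ working space build a data structure that answers each $\LCEQ(X_i, X_j, a, b)$ query in $O(\log N + \log \ell \log^* N) = O(\log N \log^* N)$ time in the worst case, where $w = O(z \log N \log^* N)$. Plugging this into I et al.'s algorithm in place of the old LCE structure, the $O(n \log n)$ LCE queries now contribute $O(n \log n \log N \log^* N)$ time in total, which dominates the new preprocessing. Adding the $O(n^2)$ SLP-traversal work that is inherent to their algorithm yields an overall running time of $O(n (n + \log n \log N \log^* N))$, and a working space of $O(n^2 + w) = O(n^2 + z \log N \log^* N)$, matching the claim.

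The main obstacle, which is minor, is to check that every $\LCEQ$ query requested by I et al.'s algorithm is expressed as two variables of the SLP together with two starting offsets, so that Theorem~\ref{theo:smaller_LCE} applies directly without any wrapper. Since their algorithm compares suffixes of variables obtained during the traversal of the derivation tree, this is immediate from inspection of~\cite{INIBT13}, and no further modification to their algorithm is required.
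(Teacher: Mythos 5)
Your proof is correct and follows the same strategy as the paper: plug the LCE data structure of Theorem~\ref{theo:smaller_LCE} into I~et~al.'s framework from~\cite{INIBT13} in place of the LCE structure of~\cite{DBLP:journals/iandc/IMSIBTNS15}. You are in fact more careful than the paper's own proof in one respect. The paper summarizes the running time of~\cite{INIBT13} as $O(P(n,N) + E(n,N)\cdot n\log n)$ in terms of the LCE preprocessing time $P$ and per-query time $E$, substitutes the new values from Theorem~\ref{theo:smaller_LCE}, and concludes a running time of $O(n\log n\log N\log^* N)$ --- which is \emph{tighter} than, and does not actually match, the $O(n(n+\log n\log N\log^* N))$ bound claimed in the theorem statement. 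The missing $O(n^2)$ term (the non-LCE traversal and candidate-maintenance work inherent in~\cite{INIBT13}) is dominated by $P(n,N)=O(n^2h)$ when the old LCE structure is used, but is no longer dominated once $P(n,N)$ drops to $O(n\log\log n\log N\log^* N)$. By keeping that $O(n^2)$ contribution explicit, your derivation arrives at exactly the bound stated in the theorem, whereas the paper's proof as written silently absorbs it. The space accounting $O(n^2+w)=O(n^2+z\log N\log^* N)$ is identical in both.
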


We can also solve \emph{the grammar compressed dictionary matching problem}~\cite{DBLP:journals/tcs/INIBT15}
with our data structures. 
We preprocess an input dictionary SLP (DSLP) $\langle \mathcal{S}, m\rangle$
with $n$ productions that represent $m$ patterns.
Given an uncompressed text $T$, the task is to output all occurrences of the patterns in $T$. 

\begin{theorem}\label{theo:ImproveDictionaryMatching}
Given a DSLP $\langle \mathcal{S}, m\rangle$ of size $n$ 
that represents a dictionary $\Pi_{\langle\mathcal{S},m \rangle}$ for $m$ patterns of total length $N$,
we can preprocess the DSLP in $O((n \log \log n + m \log m) \log N \log^* N)$ time and $O(n \log N \log^* N)$ space
so that, given any text $T$ in a streaming fashion,
we can detect all $\mathit{occ}$ occurrences of the patterns in $T$ in $O(|T|\log m \log N \log^* N + \mathit{occ})$ time.
\end{theorem}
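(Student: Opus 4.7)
The plan is to reduce grammar-compressed dictionary matching to matching over \emph{signatures}. By Lemma~\ref{lem:common_sequence2}, each pattern $P_j$ has a common sequence $\mathit{Uniq}(P_j)$ with $|\encpow{\mathit{Uniq}(P_j)}| = O(\log |P_j| \log^* M)$, and whenever $P_j$ occurs at position $i$ of a string $s$, the sequence $\mathit{Uniq}(P_j)$ occurs at position $i$ in the derivation tree of the signature encoding of $s$. Hence detecting all occurrences of the $m$ patterns in $T$ reduces to detecting occurrences of the $m$ precomputed short compressed common sequences in a signature encoding of $T$ that is built online.

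For the preprocessing, I would first apply Theorem~\ref{theo:HConstructuionTheorem}(3) to the input DSLP of size $n$ to obtain an $\mathcal{H}(\log w, w)$ for a signature encoding $\mathcal{G}$ that generates all $m$ patterns, in $O(n \log\log n \log N \log^* N)$ time and $O(n \log^* N + w)$ space with $w = O(n \log N \log^* N)$. Then, using Lemma~\ref{lem:ComputeShortCommonSequence}, I would extract $\encpow{\mathit{Uniq}(P_j)}$ for every pattern $P_j$. I would next organize these $m$ compressed common sequences into an Aho--Corasick-like automaton whose ``alphabet'' consists of the signature/run-length pairs occurring in them, storing the children of each node in a balanced BST so that transitions cost $O(\log m)$. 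Standard BFS computes failure and output links; the overall cost of this phase is $O(m \log m \log N \log^* N)$, which combined with the encoding step gives the stated preprocessing bound, while the space is $O(m \log N \log^* N) \subseteq O(n \log N \log^* N)$.

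For the streaming phase, I would maintain the signature encoding of $T[1..i]$ online, extending it one character at a time via $\mathit{INSERT}$ from Lemma~\ref{lem:INSERT_DELETE}; each character modifies only $O(\log N \log^* N)$ signatures along the rightmost spine of the derivation tree. As signatures on that spine become \emph{permanent}, meaning their right context of width $\Delta_L + \Delta_R$ is now fixed and will never be revised by future insertions, they are fed in order to the automaton; each transition or failure-link step costs $O(\log m)$ due to the BST child lookups, and output links enumerate the matched patterns in $O(\mathit{occ})$ total time across the stream.

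The main obstacle is the correctness-and-amortization argument: I must show that, for every occurrence of $P_j$ in $T$, the sequence of permanent signatures fed to the automaton contains $\mathit{Uniq}(P_j)$ contiguously, so the automaton necessarily reaches the corresponding accept state, and simultaneously that the cumulative number of transitions plus failure-link steps is $O(|T| \log N \log^* N)$, matching the streaming bound. Correctness follows from the local-consistency property underlying Lemma~\ref{lem:common_sequence2}: the permanent signatures coincide with those produced by batch-encoding $T$ once enough right context is available. The amortization uses Lemma~\ref{lem:concatenate1} to bound the total number of signatures ever placed on the right-spine by $O(|T| \log N \log^* N)$, plus a standard Aho--Corasick potential argument to charge failure-link walks against successful transitions. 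Reconciling the run-length levels of the signature encoding with the automaton alphabet (so that a long run in $T$ does not force a run of transitions proportional to its length) is the subtlest point, and is handled by treating each entry of $\encpow{\mathit{Uniq}(P_j)}$ as a single composite symbol on which a range-containment check replaces a literal equality test.
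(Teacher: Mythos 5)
Your plan is genuinely different from the paper's, and unfortunately it has a real gap at exactly the point you flag as the ``main obstacle.'' The common sequence $\mathit{Uniq}(P_j)$ is not a contiguous run of signatures at any single level of the derivation tree: by its definition in the proof of Lemma~\ref{lem:common_sequence2}, it has the staircase form $\hat{L}_{0}^{P}L_{0}^{P}\cdots\hat{L}_{h^{P}-1}^{P}L_{h^{P}-1}^{P}\mathit{XShrink}_{h^{P}}^{P}R_{h^{P}-1}^{P}\hat{R}_{h^{P}-1}^{P}\cdots R_{0}^{P}\hat{R}_{0}^{P}$, whose pieces live at heights $0,1,\dots,h^{P},\dots,1,0$ of the derivation tree. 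When you maintain the signature encoding of $T$ online and emit signatures as their right context becomes fixed, low-level signatures on the left part of an occurrence become permanent before high-level ones, and those interleave with permanent signatures coming from entirely different portions of the text. There is no ``in-order'' sequence of permanent signatures in which $\mathit{Uniq}(P_j)$ appears as a contiguous factor, so an Aho--Corasick automaton over such a stream will not necessarily reach an accept state at an occurrence. This is precisely the phenomenon that forces the ESP/signature-based indices in the literature (and the paper's own $\mathit{pOcc}_{\mathcal{G}}$ machinery) to use two-dimensional range queries keyed on left/right context rather than a single linear scan, and it is why the paper never builds a signature encoding of $T$ at all.

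The paper's actual proof takes a different route. It builds the signature encoding $\mathcal{G}$ of the patterns (not of $T$) via Lemma~\ref{lem:SLPSignatureEncoding}, and a compacted trie $\mathit{PTree}$ of the $m$ patterns (plus a reversed variant) using the $\LCPQ$ machinery of Theorem~\ref{theo:faster_LCP}. In the streaming phase it maintains a pair $(p,l)$ with $p$ a signature of some pattern and $\derive(p)[1..l]$ the longest suffix of $T[1..j]$ that is a prefix of some pattern; each new character of $T$ triggers a shrink-and-extend step resolved by binary search over the sorted patterns, where comparisons are $\LCEQ$ queries against $p$ (so $T$'s characters are consumed immediately and never encoded). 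This sidesteps the multi-level contiguity problem entirely: correctness follows from the classical ``longest recognized suffix'' invariant rather than from locally consistent parsing of $T$. If you want to salvage an automaton-style approach you would need a fundamentally different reduction; as written, the step ``for every occurrence of $P_j$ the stream contains $\mathit{Uniq}(P_j)$ contiguously'' is false, not merely unproven.
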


It was shown in~\cite{DBLP:journals/tcs/INIBT15} that we can 
construct in $O(n^4\log n)$ time a data structure of size $O(n^2\log N)$ 
which finds all occurrences of the patterns in $T$ in $O(|T|(h+m))$ time,
where $h$ is the height of
the derivation tree of DSLP $\langle \mathcal{S}, m \rangle$.
Note that our data structure of Theorem~\ref{theo:ImproveDictionaryMatching}
is always smaller, and runs faster when $h = \omega(\log m \log N \log^* N)$.


\vspace*{1pc}

\noindent \textbf{Acknowledgments.} We would like to thank Pawe{\l} Gawrychowski for drawing our attention to
the work by Alstrup et al.~\cite{LongAlstrup,DBLP:conf/soda/AlstrupBR00}
and for fruitful discussions.

\bibliographystyle{splncs03}
\bibliography{ref}

\clearpage
\appendix
\section{Appendix : Theorem~\ref{theo:HConstructuionTheorem}}\label{sec:Proof_HConstructuionTheorem}
\subsection{Proof of Theorem~\ref{theo:HConstructuionTheorem}~(1)}\label{sec:Proof_HConstructuionTheorem1}
\subsubsection{$\mathcal{H}(\log w, w)$ construction in $O(N)$ time and space.}
Our algorithm computes signatures level by level, i.e.,
constructs incrementally $\shrink{0}{T}, \pow{0}{T}$, $\ldots, \shrink{h}{T}, \pow{h}{T}$.
For each level, we determine signatures by sorting signature blocks (or run-length encoded signatures) to which we give signatures.
The following two lemmas describe the procedure.

\begin{lemma} \label{lem:signature_encblock}
Given $\encblock{\pow{t-1}{T}}$ for $0 < t \leq h$, 
we can compute $\shrink{t}{T}$ in $O((b-a)+|\pow{t-1}{T}|)$ time and space,
where $b$ is the maximum integer in $\pow{t-1}{T}$ 
and $a$ is the minimum integer in $\pow{t-1}{T}$.
\end{lemma}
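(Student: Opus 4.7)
The plan is to use a constant-pass radix (bucket) sort on the blocks of $\encblock{\pow{t-1}{T}}$ so that duplicate blocks are brought together and can be assigned a common signature in a single linear scan. Let $j = |\encblock{\pow{t-1}{T}}|$, and note $j \leq |\pow{t-1}{T}|/2$ because every block has length at least $2$, so any book-keeping that is linear in $j$ fits the target bound.

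First, I would scan the input and normalize each block into a constant-length record. Each block has length $\ell \in \{2,3,4\}$ and its entries lie in $[a, b]$, so I would represent it as a $5$-tuple $(\ell, x_1-a, x_2-a, x_3-a, x_4-a)$, using a designated sentinel (e.g.\ $-1$) for positions beyond $\ell$, together with the original block index. Two blocks are equal if and only if their tuples are equal, and this normalization step costs $O(j)$ time and space.

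Second, I would lexicographically radix-sort the tuples by five bucket-sort passes, one per coordinate, from least to most significant. Each pass uses $O(b-a+2)$ buckets (the $+2$ accounting for the sentinel and the length field) and runs in $O((b-a)+j)$ time, so the whole sort runs in $O((b-a)+|\pow{t-1}{T}|)$. Then, in one linear scan over the sorted sequence, I assign each distinct tuple a fresh signature, giving equal consecutive tuples the same signature. Finally, using the stored original positions, I write the assigned signatures into an output array at their correct indices, producing $\shrink{t}{T}$.

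The main technical point, and the only place the analysis is delicate, is that the cost must depend on $b-a$ rather than on the absolute magnitudes of the entries in $\pow{t-1}{T}$; subtracting $a$ before sorting reduces the effective universe to $\{0, \ldots, b-a\}$, so the bucket arrays have size $O(b-a+1)$. Every other step (normalization, signature assignment, writeout) is linear in $j$, so the total time and space stay within $O((b-a)+|\pow{t-1}{T}|)$.
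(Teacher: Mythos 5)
Your overall strategy---constant-pass bucket/radix sort of the blocks, viewed as constant-length tuples over the shifted universe $\{0,\dots,b-a\}$, followed by a linear scan to coalesce equal blocks---is the same as the paper's, and your writeup is actually more explicit about the sort implementation (the five stable passes over a length field plus four padded coordinates). The time and space analysis is correct.

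There is one step the paper proves that your argument silently assumes. When you say ``assign each distinct tuple a fresh signature,'' you are taking for granted that no block at level $t$ already has a signature assigned to it from some earlier level $t' < t$. This matters because $\shrink{t}{T} = \mathit{Sig}^{+}(\encblock{\pow{t-1}{T}})$ and $\mathit{Sig}$ must be globally well defined: if the same right-hand side appears at two levels, both occurrences must receive the \emph{same} signature, so handing out a fresh one would be wrong, and detecting the collision would require a dictionary lookup that the $O((b-a)+|\pow{t-1}{T}|)$ budget does not obviously allow. The paper closes this hole by observing that $\mathit{Sig}$ is an injection and the levels are processed in increasing order, so the signature sets making up $\shrink{t-1}{T}$ (and hence $\pow{t-1}{T}$) are pairwise disjoint across levels; therefore a block at level $t$ can never equal a block from an earlier level, and fresh signatures are always safe without any cross-level lookup. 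You should add some version of this disjointness argument; as written, the claim that assigning fresh signatures to distinct sorted tuples yields the correct $\shrink{t}{T}$ is unsupported.
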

\begin{proof}
Since we assign signatures to signature blocks and run-length signatures in the derivation tree of $S$ in the order they appear in the signature encoding.
$\pow{t-1}{T}[i] - a$ fits in an entry of a bucket of size $b-a$ for each element of $\pow{t-1}{T}[i]$ of $\pow{t-1}{T}$.
Also, the length of each block is at most four.
Hence we can sort all the blocks of $\encblock{\pow{t-1}{T}}$ by bucket sort in $O((b-a)+|\pow{t-1}{T}|)$ time and space. 
Since $\mathit{Sig}$ is an injection
and since we process the levels in increasing order,
for any two different levels $0 \leq t' < t \leq h$,
no elements of $\shrink{t-1}{T}$ appear in $\shrink{t'-1}{T}$,
and hence no elements of $\pow{t-1}{T}$ appear in $\pow{t'-1}{T}$.
Thus, we can determine a new signature for each block in $\encblock{\pow{t-1}{T}}$,
\emph{without} searching existing signatures in the lower levels.
This completes the proof.
\end{proof}

\begin{lemma} \label{lem:signature_encpow}
Given $\encpow{\shrink{t}{T}}$, we can compute $\pow{t}{T}$ in 
$O(x + (b-a) +|\encpow{\shrink{t}{T}}|)$ time and space, 
where $x$ is the maximum length of runs in $\encpow{\shrink{t}{T}}$,
$b$ is the maximum integer in $\pow{t-1}{T}$, 
and $a$ is the minimum integer in $\pow{t-1}{T}$.
\end{lemma}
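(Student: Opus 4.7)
The plan is to follow the same template as the proof of Lemma~\ref{lem:signature_encblock}, but with run-length groups in place of signature blocks. Each element of $\encpow{\shrink{t}{T}}$ is of the form $\hat{e}^d$ with $\hat{e} \in \shrink{t}{T}$ and $1 \leq d \leq x$, and the effect of $\mathit{Sig}^{+}$ is to replace each such pair by a signature, creating a fresh one for every distinct pair that does not already appear in the dictionary. So the task reduces to (i) identifying equal pairs efficiently and (ii) assigning signatures to them, then (iii) emitting the resulting sequence.

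First I would sweep once through $\encpow{\shrink{t}{T}}$ to collect the pairs $(\hat{e},d)$ together with their positions, in $O(|\encpow{\shrink{t}{T}}|)$ time. Then I would radix-sort these pairs: a stable bucket pass on the exponent $d$, using buckets indexed by $[1..x]$, followed by a stable bucket pass on the base signature $\hat{e}$, using buckets indexed by $\hat{e} - a$ over a range of size $b - a + 1$. This uses $O(x + (b-a) + |\encpow{\shrink{t}{T}}|)$ time and space. After sorting, equal pairs are contiguous, so one further linear scan isolates the distinct pairs in the order they first occurred; to each I would allot the next unused integer as its signature, exactly as in the previous lemma. As in that proof, the injectivity of $\mathit{Sig}$ combined with level-by-level processing ensures that none of these run-length pairs has been issued a signature at any earlier level, so no lookup against existing signatures is required. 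Finally, using the stored positions, I write the assigned signatures back into an output array of length $|\encpow{\shrink{t}{T}}|$, which is precisely $\pow{t}{T}$.

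The main obstacle I anticipate is justifying that the base component $\hat{e}$ of every pair fits into a bucket array of size $O(b-a)$, since $b$ and $a$ are specified to be the extremes of $\pow{t-1}{T}$ rather than of $\shrink{t}{T}$. This needs a short bookkeeping argument on the construction order: the new signatures minted when building $\shrink{t}{T}$ from $\encblock{\pow{t-1}{T}}$ (as in Lemma~\ref{lem:signature_encblock}) are allocated as consecutive integers immediately following the signatures of $\pow{t-1}{T}$, and their count is at most $|\pow{t-1}{T}| \leq b - a + 1$; consequently every $\hat{e}$ appearing as a base of a run in $\encpow{\shrink{t}{T}}$ lies in an interval of length $O(b-a)$, which can be translated into the bucket index $\hat{e} - a$. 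With this observation in hand, the two bucket passes fit within the stated bound and the lemma follows.
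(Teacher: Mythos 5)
Your plan is the same as the paper's: bucket-sort the run-length pairs, assign a fresh signature to each distinct pair, and write the result back, relying on the injectivity of $\mathit{Sig}$ and level-by-level processing to avoid lookups against signatures from earlier levels. Sorting by exponent first and then by base (LSD radix) rather than the paper's base-first, exponent-within-group order is an inconsequential variation; both are valid and both stay within the stated bounds.

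The one genuine gap is in the bookkeeping argument you add to justify the size of the base bucket. You write that the number of fresh signatures minted to build $\shrink{t}{T}$ is at most $|\pow{t-1}{T}| \leq b - a + 1$, but this last inequality points the wrong way. Because the distinct signatures appearing in $\pow{t-1}{T}$ were allocated as consecutive integers, $b - a + 1$ is exactly the number of \emph{distinct} signatures in $\pow{t-1}{T}$, and therefore $b - a + 1 \leq |\pow{t-1}{T}|$, not the reverse. When $\pow{t-1}{T}$ contains many repeated signatures, $b - a$ can be far smaller than $|\pow{t-1}{T}|$, hence also far smaller than the $\Theta(|\encblock{\pow{t-1}{T}}|)$ fresh signatures allotted at the $\mathit{Shrink}$ step. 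So the chain does not establish that the base coordinate of a run fits in a bucket of size $O(b-a)$. The observation that actually makes the bucket sort work is the simpler one: the integers in $\shrink{t}{T}$ themselves sit in a fresh consecutive interval whose length is the number of signatures issued at the $\mathit{Shrink}$ step, i.e., $O(|\shrink{t}{T}|) = O(|\pow{t-1}{T}|)$, a quantity that is already charged in Lemma~\ref{lem:signature_encblock} and telescopes to $O(N)$ over the levels; but tying that range to the extremes $a,b$ of $\pow{t-1}{T}$ specifically requires an inequality you do not have.
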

\begin{proof}
We first sort all the elements of $\encpow{\shrink{t}{T}}$ by bucket sort
in $O(b-a + |\encpow{\shrink{t}{T}}|)$ time and space, ignoring the powers of runs.
Then, for each integer $r$ appearing in $\shrink{t}{T}$,
we sort the runs of $r$'s by bucket sort with a bucket of size $x$.
This takes a total of $O(x + |\encpow{\shrink{t}{T}}|)$ time and space
for all integers appearing in $\shrink{t}{T}$.
The rest is the same as the proof of Lemma~\ref{lem:signature_encblock}.
\end{proof}

The next lemma shows how to construct $\mathcal{H}(\log w, w)$ from a sorted assignment set $\mathcal{D}$ of $\mathcal{G}$.
\begin{lemma} \label{lem:linear_construction}
Given a sorted assignment set $\mathcal{D}$ of $\mathcal{G}$, 
we can construct $\mathcal{H}(\log w, w)$ of $\mathcal{G}$ in $O(|\mathcal{V}|)$ time.
\end{lemma}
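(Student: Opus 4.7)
The construction has two parts corresponding to the two components of $\mathcal{H}$, and the whole plan is organized around exploiting the fact that $\mathcal{D}$ is given in sorted order so that neither part incurs a logarithmic factor.

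For the dictionary component $\mathcal{A}$, which is a balanced binary search tree keyed by the right-hand side $\mathit{xexpr}$ so as to support $\sig{\mathit{xexpr}}$ lookups in $O(\log w)$ time, I would use the standard bottom-up construction from a sorted sequence: recursively take the median of the current range as the root and build the left and right subtrees from the two halves. This produces a perfectly balanced BST of height $O(\log w)$ in $O(|\mathcal{V}|)$ time; no key comparisons are required beyond addressing by index, since the input is already sorted. Subsequent insertions and deletions of assignments are handled by the usual rebalancing in $O(\log w)$ time, consistent with the $f_{\mathcal{A}} = \log w$ guarantee required by $\mathcal{H}(\log w, w)$.

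For the DAG component $\mathcal{B}$, I would allocate an array $V[1..|\mathcal{V}|]$ of node records indexed by variable and then pass over $\mathcal{D}$ in order of increasing variable index; since variables are integers in $[1,|\mathcal{V}|]$, one bucket pass reorders the sorted input into variable-index order in $O(|\mathcal{V}|)$ time. For each assignment $e \rightarrow \mathit{xexpr}$ encountered, I install $O(1)$ out-edges from $v_e$ according to the form of $\mathit{xexpr}$ (none for $\mathit{xexpr} \in \Sigma$, two child pointers for $\mathit{xexpr} = e_\ell e_r$, or one exponent-labelled edge for $\mathit{xexpr} = \hat{e}^d$); simultaneously I push $e$ onto the head of the doubly-linked list $\mathit{parents}(\cdot)$ of each child while saving a back-pointer into that list node, so that an edge can later be removed in $O(1)$ time; and I set $|\val{e}|$ from the lengths already stored at the children. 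Because the chosen processing order guarantees that children are fully initialised before their parents are visited, each assignment contributes only $O(1)$ work, for a total of $O(|\mathcal{V}|)$.

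The only subtlety is cross-linking the two components so that the $O(1)$-time accesses to $\mathit{Assgn}(e)$, $|\val{e}|$ and $\mathit{parents}(e)$ required by Lemma~\ref{lem:basic_operation_time} are preserved: each BST node of $\mathcal{A}$ carries a pointer to its corresponding DAG node $v_e \in V$, and conversely each $v_e$ stores a pointer back to its BST entry. These cross-pointers can be installed during the two passes above without changing the asymptotic cost. I do not expect a real obstacle beyond this bookkeeping; the main point to be careful about is simply that linear-time balanced BST construction from a sorted sequence is indeed available, and that the DAG pass is done in a topological order so that lengths propagate upward in a single sweep.
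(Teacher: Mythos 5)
Your proposal is correct and follows essentially the same route as the paper's (very terse) proof, which simply asserts that both the balanced search tree for $\mathcal{A}$ and the DAG $\mathcal{B}$ can be built from the sorted $\mathcal{D}$ in linear time; you have supplied the standard justifications (median-based linear-time BST construction from sorted input, a single topological sweep to install edges, parent lists, and lengths, plus the cross-pointers).
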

\begin{proof}
Recall that $\mathcal{H}$ consists of $\mathcal{A}$ and DAG $\mathcal{B}$.
Clearly, given a sorted assignment set $\mathcal{D}$, we can construct $\mathcal{B}$ in linear time and space.
Also, we can construct, in linear time and space, a balanced search tree for $\mathcal{A}$ from $\mathcal{D}$.
Hence Lemma~\ref{lem:linear_construction} holds.
\end{proof}

We are ready to prove the theorem.
\begin{proof}
In the derivation tree of $\id{T}$, since the number of nodes in some level is halved when going up two levels higher, every node of 
Since the size of the derivation tree of $\id{T}$ is $O(N)$,
by Lemmas~\ref{lem:CoinTossing}, \ref{lem:signature_encblock}, and~\ref{lem:signature_encpow},
we can compute $\id{T}$ and a sorted assignment set $\mathcal{D}$ of $\mathcal{G}$ in $O(N)$ time and space.
Finally, by Lemma~\ref{lem:linear_construction}, we can get $\mathcal{H}(\log w, w)$ for $\mathcal{G}$ in $O(N)$ time.
\end{proof}

\subsubsection{$\mathcal{H}(f_{\mathcal{A}}, f'_{\mathcal{A}})$ construction in $O(N f_{\mathcal{A}})$ time and $O(f'_{\mathcal{A}} + w)$ working space.}
\begin{proof}
Note that we can naively compute $\id{T}$ for a given string $T$ in 
$O(N f_{\mathcal{A}})$ time and $O(N)$ working space.
In order to reduce the working space,
we consider factorizing $T$ into blocks of size $B$ and processing them incrementally:
Starting with the empty signature encoding $\mathcal{G}$,
we can compute $\id{T}$ in $O(\frac{N}{B}f_{\mathcal{A}}(\log N \log^* M + B))$ time 
and $O(w + B + f'_{\mathcal{A}})$ working space
by using $\mathit{INSERT}(T[(i-1)B+1..iB],(i-1)B+1)$ for $i = 1, \ldots, {\frac{N}{B}}$ in increasing order.
Hence our proof is finished by choosing $B = \log N \log^* M$.
\end{proof}

\subsection{Proof of Theorem~\ref{theo:HConstructuionTheorem}~(2)}
\begin{proof}
Consider $\mathcal{H}(f_{\mathcal{A}}, f'_{\mathcal{A}})$ for an empty signature encodings $\mathcal{G}$. 
If we can compute $\mathit{INSERT}(Y,i)$ operation in $O(f_{\mathcal{A}}\log N \log^* M)$ time, 
then Theorem~\ref{theo:HConstructuionTheorem}~(2) immediately holds 
by computing $\mathit{INSERT}(f_i,|f_1 \cdots f_{i-1}|+1)$ for $1 \leq i \leq z$ incrementally. 
By the proof of Lemma~\ref{lem:INSERT_DELETE}, 
we can compute $\mathit{INSERT}(Y,i)$ for a given $\uniq{Y}$ in $O(f_{\mathcal{A}}\log N \log^* M)$ time. 
We can compute each $\uniq{f_i}$ in $O(\log N \log^* M)$ time by Lemma~\ref{lem:ComputeShortCommonSequence} 
because $f_i$ occurs previously in $T$ when $|f_i| > 1$. 
Hence we get Theorem~\ref{theo:HConstructuionTheorem}~(2).
\end{proof}
Note that we can directly show Lemma~\ref{lem:upperbound_signature} from the above proof 
because the size of $\mathcal{G}$ increases $O(\log N \log^* M)$ by Lemma~\ref{lem:concatenate1}, 
every time we do $\mathit{INSERT}(f_i,|f_1 \cdots f_{i-1}|+1)$ for $1 \leq i \leq z$.

\subsection{Proof of Theorem~\ref{theo:HConstructuionTheorem}~(3)}
\subsubsection{$\mathcal{H}(f_{\mathcal{A}},f'_{\mathcal{A}})$ construction in $O(n f_{\mathcal{A}} \log N \log^* M)$ time and $O(f'_{\mathcal{A}} + w)$ working space.}
\begin{proof}
We can construct $\mathcal{H}(f_{\mathcal{A}},f'_{\mathcal{A}})$ by $O(n)$ $\mathit{INSERT}$ operations 
as the proof of Theorem~\ref{theo:HConstructuionTheorem}~(2).
\end{proof}

\subsubsection{$\mathcal{H}(\log w, w)$ construction in $O(n \log \log n \log N \log^* M)$ time and $O(n \log^* M + w)$ working space.}\label{sec:HConstruction3-2}
In this section, we sometimes abbreviate $\val{X}$ as $X$ for $X \in \mathcal{S}$.
For example, $\shrink{t}{X}$ and $\pow{t}{X}$ represents $\shrink{t}{\val{X}}$ and $\pow{t}{\val{X}}$ respectively. 

Our algorithm computes signatures level by level, i.e.,
constructs incrementally $\shrink{0}{X_n}, \pow{0}{X_n}$, $\ldots, \shrink{h}{X_n}, \pow{h}{X_n}$.
Like the algorithm described in Section~\ref{sec:Proof_HConstructuionTheorem1}, we can create signatures 
by sorting blocks of signatures or run-length encoded signatures in the same level.
The main difference is that we now utilize the structure of the SLP, 
which allows us to do the task efficiently in $O(n \log^* M + w)$ working space.
In particular, although $|\shrink{t}{X_n}|, |\pow{t}{X_n}| = O(N)$ for $0 \leq t \leq h$, 
they can be represented in $O(n \log^* M)$ space.

In so doing, we introduce some additional notations relating to $\xshrink{t}{P}$ and $\xpow{t}{P}$ in Definition~\ref{def:xshrink}.
By Lemma~\ref{lem:common_sequence2}, for any string $P = P_1P_2$ the following equation holds:
\begin{eqnarray*}
 \xshrink{t}{P} &=& \hat{y}^{P_1}_{t} \hat{z}^{(P_1,P_2)}_t \hat{y}^{P_2}_{t} \mbox{ for } 0 < t \leq h^{P}, \\
 \xpow{t}{P} &=& y^{P_1}_{t} z^{(P_1,P_2)}_t y^{P_2}_{t} \mbox{ for } 0 \leq t < h^{P},
\end{eqnarray*}
where we define $\hat{y}^{P}_{t}$ and $y^{P}_{t}$ for a string $P$ as follows:
\begin{eqnarray*}
\hat{y}^{P}_{t} &=& 
  \begin{cases}
  \xshrink{t}{P} &\mbox{ for } 0 <  t \leq h^{P},\\
  \varepsilon &\mbox{ for } t > h^{P},\\
  \end{cases} \\
y^{P}_{t} &=& 
  \begin{cases}
  \xpow{t}{P} &\mbox{ for } 0 \leq t < h^{P},\\
  \varepsilon &\mbox{ for } t \geq h^{P}.\\
  \end{cases}
\end{eqnarray*}
For any variable $X_i \rightarrow X_{\ell} X_{r}$,
we denote $\hat{z}^{X_i}_{t} = \hat{z}^{(\val{X_{\ell}},\val{X_{r}})}_{t}$ (for $0 < t \leq h^{\val{X_i}}$)
and $z^{X_i}_{t} = z^{(\val{X_{\ell}},\val{X_{r}})}_{t}$ (for $0 \leq t < h^{\val{X_i}}$).
Note that $|z^{X_i}_{t}|, |\hat{z}^{X_i}_{t}| = O(\log^* N)$ by Lemma~\ref{lem:ancestors}.
We can use $\hat{z}_{t}^{X_1}, \ldots, \hat{z}_{t}^{X_n}$ (resp. $z_{t}^{X_1}, \ldots, z_{t}^{X_n}$) 
as a compressed representation of $\xshrink{t}{X_n}$ (resp. $\xpow{t}{X_n}$) based on the SLP:
Intuitively, $\hat{z}_{t}^{X_n}$ (resp. $z_{t}^{X_n}$) covers the middle part of $\xshrink{t}{X_n}$ (resp. $\xpow{t}{X_n}$) and
the remaining part is recovered by investigating the left/right child recursively (see also Figure.~\ref{fig:LeftRightAccessFact}).
Hence, with the DAG structure of the SLP, $\xshrink{t}{X_n}$ and $\xpow{t}{X_n}$ can 
be represented in $O(n \log^* M)$ space.
\begin{figure}[t]
\begin{center}
  \includegraphics[scale=0.75]{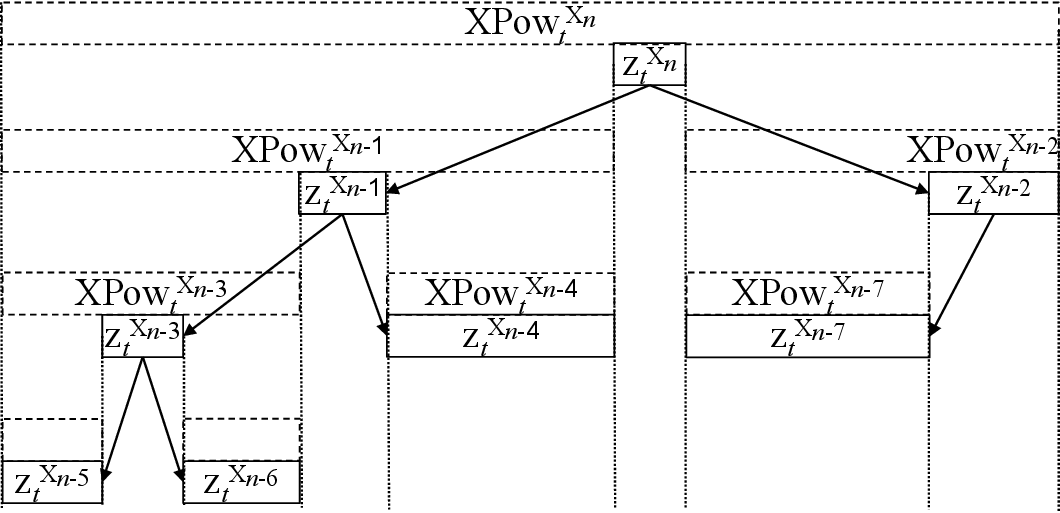}
  \caption{
  $\xpow{t}{X_n}$ can be represented by $z^{X_1}_{t}, \ldots, z^{X_n}_{t}$.  
  } 
  \label{fig:LeftRightAccessFact}
\end{center}
\end{figure}

In addition, we define $\hat{A}^{P}_{t}$, $\hat{B}^{P}_{t}$, $A^{P}_t$ and $B^{P}_t$ as follows:
For $0 < t \leq h^{P}$, $\hat{A}^{P}_t$ (resp. $\hat{B}^{P}_t$) is a prefix (resp. suffix) of $\shrink{t}{P}$ 
which consists of signatures of $A^{P}_{t-1}L^{P}_{t-1}$ (resp. $R^{P}_{t-1}B^{P}_{t-1}$); and
for $0 \leq t < h^{P}$, $A^{P}_t$ (resp. $B^{P}_t$) is a prefix (resp. suffix) of $\pow{t}{P}$ 
which consists of signatures of $\hat{A}^{P}_{t}\hat{L}^{P}_{t}$ (resp. $\hat{R}^{P}_{t}\hat{B}^{P}_{t}$).
By the definition,
$\shrink{t}{P} = \hat{A}^{P}_t\xshrink{t}{P}\hat{B}^{P}_t$ for $0 \leq t \leq h^{P}$, and
$\pow{t}{P} = A^{P}_t\xpow{t}{P}B^{P}_t$ for $0 \leq t < h^{P}$.
See Figure~\ref{fig:AXShrinkB} for the illustration.
\begin{figure}[t]
\begin{center}
  \includegraphics[scale=0.75]{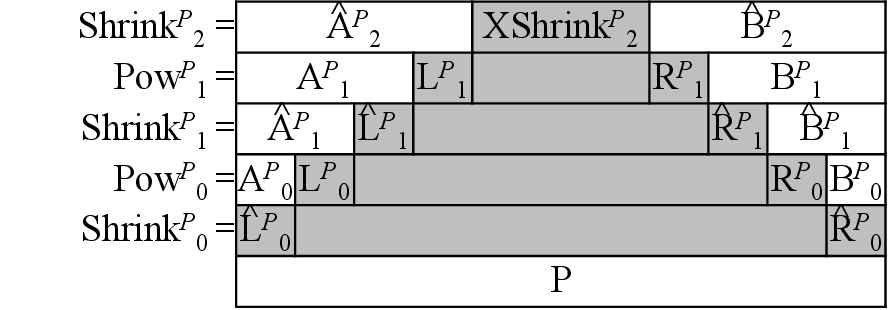}
  \caption{
  An abstract image of $\shrink{t}{P}$ and $\pow{t}{P}$ for a string $P$. 
  For $0 \leq t < h^{P}$, $A^{P}_{t}L^{P}_{t}$ (resp. $R^{P}_{t}B^{P}_{t}$) 
  is encoded into $\hat{A}^{P}_{t+1}$ (resp. $\hat{B}^{P}_{t+1}$). 
  Similarly, for $0 < t < h^{P}$, $\hat{A}^{P}_{t}\hat{L}^{P}_{t}$ (resp. $\hat{R}^{P}_{t}\hat{B}^{P}_{t}$) 
  is encoded into $A^{P}_{t}$ (resp. $B^{P}_{t}$). 
  } 
  \label{fig:AXShrinkB}
\end{center}
\end{figure}

Since $\shrink{t}{X_n} = \hat{A}_{t}^{X_n} \xshrink{t}{X_n} \hat{B}_{t}^{X_n}$ for $0 < t \leq h^{X_n}$,
we use $\hat{\Lambda}_{t} = (\hat{z}_{t}^{X_1}, \ldots, \hat{z}_{t}^{X_n}, \hat{A}^{X_n}_{t}, \hat{B}^{X_n}_{t})$ 
as a compressed representation of $\shrink{t}{X_n}$ of size $O(n \log^* M)$.
Similarly, for $0 \leq t < h^{X_n}$,
we use $\Lambda_{t} = (z_{t}^{X_1}, \ldots, z_{t}^{X_n}, A^{X_n}_{t}, B^{X_n}_{t})$ 
as a compressed representation of $\pow{t}{X_n}$ of size $O(n \log^* M)$.

Our algorithm computes incrementally $\Lambda_{0}, \hat{\Lambda}_{1}, \ldots, \hat{\Lambda}_{h^{X_n}}$.
Note that, given $\hat{\Lambda}_{h^{X_n}}$,
we can easily get $\pow{h^{X_n}}{X_n}$ of size $O(\log^* M)$ in $O(n \log^* M)$ time,
and then $\id{\val{X_n}}$ in $O(\log^* M)$ time from $\pow{h^{X_n}}{X_n}$.
Hence, in the following three lemmas, we show how to compute $\Lambda_{0}, \hat{\Lambda}_{1}, \ldots, \hat{\Lambda}_{h^{X_n}}$.

\begin{lemma}\label{lem:Lambda_0}
Given an SLP of size $n$, we can compute $\Lambda_{0}$ in $O(n \log \log n \log^* M)$ time and $O(n \log^*M)$ space.
\end{lemma}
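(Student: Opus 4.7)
The plan is to compute the components of $\Lambda_0$ by exploiting the recursive structure of the SLP, combining short boundary information at each variable with a batched signature-assignment step.

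First, in an $O(n)$-time bottom-up pass I would precompute a constant-size descriptor of $\val{X_i}$ for every variable $X_i$: its first and last characters, the lengths of its initial and final maximal runs of same characters, and $|\val{X_i}|$. For $X_i \rightarrow X_{\ell} X_r$, this descriptor is obtained in $O(1)$ from those of $X_{\ell}$ and $X_r$, handling the merge case in which the final character of $\val{X_{\ell}}$ equals the initial character of $\val{X_r}$.

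Next, for each variable $X_i \rightarrow X_{\ell} X_r$ I would compute $z_0^{X_i}$ by inspecting the boundary between $\val{X_{\ell}}$ and $\val{X_r}$. By Lemma~\ref{lem:ancestors} applied at height $0$, $|z_0^{X_i}| = O(\log^* M)$, and $z_0^{X_i}$ depends only on an $O(\log^* M)$-run suffix of $\val{X_{\ell}}$ and an $O(\log^* M)$-run prefix of $\val{X_r}$. These can be extracted by short walks through the SLP guided by the precomputed descriptors, which allow us to skip in $O(1)$ over any interior child whose string is a single character run. Applying $\mathit{Sig}^+ \circ \encpow$ to this boundary region, with the run at the junction merged when the adjacent signatures coincide, yields $z_0^{X_i}$. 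The components $A_0^{X_n}$ and $B_0^{X_n}$ are single signatures assigned to the initial and final runs of $\val{X_n}$, read off directly from the descriptor at $X_n$.

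The main obstacle is to keep the signature-assignment cost within $O(n \log \log n \log^* M)$. The run-length pairs $(\hat{e}, d)$ arising across all $z_0^{X_i}$ total $O(n \log^* M)$, and doing individual insertions or lookups through the balanced BST of $\mathcal{H}(\log w, w)$ would cost $\Theta(\log n)$ each, overshooting the target by a factor of $\log n / \log \log n$. I would therefore collect all candidate pairs first, radix-sort them with a deterministic integer sort in $O(n \log^* M \log \log n)$ time over bucket arrays of width $O(n)$, then scan the sorted list to identify duplicates and to allocate fresh signatures to distinct pairs in constant time each, and finally feed the resulting sorted assignment set into Lemma~\ref{lem:linear_construction} to install them into $\mathcal{H}(\log w, w)$ in linear additional time. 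The working space is dominated by the intermediate boundary regions and the constructed signature dictionary, giving the claimed $O(n \log^* M)$ bound.
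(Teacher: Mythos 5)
Your plan stores only a constant-size descriptor per variable and reconstructs the $O(\log^* M)$-run boundary of each child's value on demand by walking the SLP, skipping single-run subtrees. The gap is that a variable whose string has only a few runs can nonetheless have a deep derivation subtree, and the descriptor says nothing about the runs strictly between the first and the last. Concretely, let $\val{X} = a^{k}\,b\,c^{k}$ be built so that at each level the right child is a pure $c$-run while the left child has the shape $a^{k}\,b\,c^{j}$ with $j$ shrinking geometrically. Every such left child has at least three runs, so its descriptor reveals only its leading $a$-run and trailing $c$-run, not the isolated $b$, and no node on the left spine is a single run, so the single-run skip never fires; the walk must descend $\Theta(\log k)$ levels to expose $b$. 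With $\log k = \Theta(n)$ a single boundary extraction costs $\Omega(\log N) = \Omega(n)$ rather than $O(\log^* M)$, and the total can reach $\Omega(n\log N)$, overshooting $O(n\log\log n\log^* M)$. The paper's proof is designed precisely to avoid this: for every $X_i$ it maintains either the full run-length encoding $\encpow{\xshrink{0}{X_i}}$ (when it has at most $\Delta_L+\Delta_R+9$ runs) or only $\encpow{\hat{L}_{0}^{X_i}}$ and $\encpow{\hat{R}_{0}^{X_i}}$ (when it is longer), and updates these $O(\log^* M)$-size records bottom-up from the children's records in $O(\log^* M)$ time per variable, so no descent is ever required.

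A secondary issue is the sorting step. The run-length pairs $(\hat e,d)$ that need fresh signatures have exponents $d$ as large as $N$, and $N$ can be $2^{\Theta(n)}$; radix sort over bucket arrays of width $O(n)$ would then need $\Theta(\log_n N)$, i.e.\ up to $\Theta(n/\log n)$, passes and blow the time bound. The $O(n\log^* M\log\log n)$ figure you quote is exactly that of Han's deterministic integer sort, which is what the paper invokes for this very reason; so the bound you write is right, but the stated mechanism (radix sort with $O(n)$-width buckets) is not the one that achieves it.
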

\begin{proof}
We first compute, for all variables $X_i$,
$\encpow{\xshrink{0}{X_i}}$ if $|\encpow{\xshrink{0}{X_i}}| \leq \Delta_{L} + \Delta_{R} + 9$,
otherwise $\encpow{\hat{L}_{0}^{X_i}}$ and $\encpow{\hat{R}_{0}^{X_i}}$.
The information can be computed in $O(n \log^*M)$ time and space in a bottom-up manner, i.e., by processing variables in increasing order.
For $X_i \rightarrow X_{\ell} X_{r}$, if both $|\encpow{\xshrink{0}{X_{\ell}}}|$ and $|\encpow{\xshrink{0}{X_{r}}}|$ are no greater than $\Delta_{L} + \Delta_{R} + 9$,
we can compute $\encpow{\xshrink{0}{X_i}}$ in $O(\log^* M)$ time by naively concatenating $\encpow{\xshrink{0}{X_{\ell}}}$ and $\encpow{\xshrink{0}{X_{r}}}$.
Otherwise $|\encpow{\xshrink{0}{X_i}}| > \Delta_{L} + \Delta_{R} + 9$ must hold, and 
$\encpow{\hat{L}_{0}^{X_i}}$ and $\encpow{\hat{R}_{0}^{X_i}}$ can be computed in $O(1)$ time from the information for $X_{\ell}$ and $X_{r}$.

The run-length encoded signatures represented by $z_{0}^{X_i}$ can be obtained by using the above information for $X_{\ell}$ and $X_r$ in $O(\log^* M)$ time:
$z_{0}^{X_i}$ is created over run-length encoded signatures
$\encpow{\xshrink{0}{X_{\ell}}}$ (or $\encpow{\hat{R}_{0}^{X_{\ell}}}$) followed by $\encpow{\xshrink{0}{X_r}}$ (or $\encpow{\hat{R}_{0}^{X_r}}$).
Also, by definition $A_{0}^{X_n}$ and $B_{0}^{X_n}$ represents $\encpow{\hat{L}_{0}^{X_n}}$ and $\encpow{\hat{R}_{0}^{X_n}}$, respectively.

Hence, we can compute in $O(n \log^* M)$ time $O(n \log^*M)$ run-length encoded signatures to which we give signatures.
We determine signatures by sorting the run-length encoded signatures as Lemma~\ref{lem:signature_encpow}.
However, in contrast to Lemma~\ref{lem:signature_encpow},
we do not use bucket sort for sorting the powers of runs
because the maximum length of runs could be as large as $N$ and we cannot afford $O(N)$ space for buckets.
Instead, we use the sorting algorithm of Han~\cite{SuperSort} which sorts $x$ integers in $O(x \log\log x)$ time and $O(x)$ space.
Hence, we can compute $\Lambda_{0}$ in $O(n \log \log n \log^* M)$ time and $O(n \log^*M)$ space.
%
%
%
\end{proof}

\begin{lemma}\label{lem:Lambda_t}
Given $\hat{\Lambda}_{t}$, we can compute $\Lambda_{t}$ in $O(n \log \log n \log^*M)$ time and $O(n \log^*M)$ space.
\end{lemma}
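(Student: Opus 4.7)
The plan is to mirror the structure of the proof of Lemma~\ref{lem:Lambda_0}, but now working at height $t$ with the shrink sequence already given in compressed form by $\hat{\Lambda}_t$. The key observation is that $\pow{t}{X_n} = \mathit{Sig}^{+}(\encpow{\shrink{t}{X_n}})$ is obtained from $\shrink{t}{X_n}$ by a purely local operation (run-length encoding followed by signature assignment), so the SLP-based compressed representation we used at level $0$ carries over to level $t$: we only need to process maximal runs that straddle the junctions induced by $\hat{z}_t^{X_i}$, and those junctions are precisely the new information supplied by $\hat{\Lambda}_t$.

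First, I would traverse the variables $X_1,\dots,X_n$ in increasing order and, for each $X_i$, compute either $\encpow{\xshrink{t}{X_i}}$ (when $|\encpow{\xshrink{t}{X_i}}|\le \Delta_L+\Delta_R+9$) or the pair $\encpow{\hat{L}_t^{X_i}},\encpow{\hat{R}_t^{X_i}}$ (otherwise). For $X_i\to X_\ell X_r$ with $t\le h^{X_i}$, the decomposition $\xshrink{t}{X_i}=\hat{y}_t^{X_\ell}\,\hat{z}_t^{X_i}\,\hat{y}_t^{X_r}$ lets me build this information in $O(\log^* M)$ time from the analogous data for the children and the explicit sequence $\hat{z}_t^{X_i}$ stored in $\hat{\Lambda}_t$: concatenate at most three short run-length encoded pieces, collapse equal neighbouring signatures at the two seams, and either keep the whole result or truncate to its first and last maximal runs according to the length threshold. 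When $t>h^{X_i}$, $\hat{y}_t^{X_i}=\varepsilon$, so nothing needs to be stored. Simultaneously, at each $X_i\to X_\ell X_r$ I record the run-length encoded sequence $\rho_t^{X_i}$ obtained by placing $\hat{z}_t^{X_i}$ between the suffix-run information coming from $X_\ell$ and the prefix-run information coming from $X_r$; by construction, $z_t^{X_i}$ is exactly the sequence of signatures that $\mathit{Sig}^{+}$ will assign to $\rho_t^{X_i}$, and $|\rho_t^{X_i}|=O(\log^* M)$ by Lemma~\ref{lem:ancestors}.

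With all the $\rho_t^{X_i}$ collected (total length $O(n\log^* M)$), I assign signatures by sorting them exactly as in Lemma~\ref{lem:Lambda_0}: bucket-sort on the signature components (all lying in the range fixed by $\hat{\Lambda}_t$) and use Han's integer sorting algorithm on the run lengths, which cost $O(n\log^* M\,\log\log n)$ time and $O(n\log^* M)$ space in total. Because $\mathit{Sig}$ is an injection and each new power signature is fresh relative to the signatures already used in lower levels and in $\hat{\Lambda}_t$, consistent new signatures can be allocated immediately in this sorted order; plugging these back yields $z_t^{X_i}$ for every $X_i$. Finally, $A_t^{X_n}$ and $B_t^{X_n}$ are read off in the same manner from $\hat{A}_t^{X_n}$ and $\hat{B}_t^{X_n}$, together with the suffix/prefix run information at the leftmost and rightmost spine of the SLP. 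This produces $\Lambda_t$ within the stated bounds.

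The main obstacle is the seam bookkeeping at each internal variable: one must correctly propagate whether a maximal run spans an entire $\hat{y}_t^{X_\ell}$ or $\hat{y}_t^{X_r}$ (forcing further merging when the neighbouring sibling also begins/ends with the same signature), distinguish the case $|\encpow{\xshrink{t}{X_i}}|\le\Delta_L+\Delta_R+9$ from the case where only the boundary runs are kept, and be consistent with the ``$\hat{y}_t^{X_i}=\varepsilon$ for $t>h^{X_i}$'' convention so that tall children do not contaminate the seams of shallow ones. Once these case distinctions are handled with constant-sized constant-time gadgets, the remainder is a direct re-use of the sorting/assignment pipeline from Lemma~\ref{lem:Lambda_0}, yielding the claimed $O(n\log\log n\log^* M)$ time and $O(n\log^* M)$ space bounds.
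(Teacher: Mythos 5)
Your proposal follows essentially the same approach as the paper's proof: a bottom-up pass over variables using the decomposition $\xshrink{t}{X_i}=\hat{y}_t^{X_\ell}\hat{z}_t^{X_i}\hat{y}_t^{X_r}$ to compute $\encpow{\xshrink{t}{X_i}}$ or its boundary runs $\encpow{\hat{L}_t^{X_i}},\encpow{\hat{R}_t^{X_i}}$ in $O(\log^* M)$ time per variable, then bucket-sorting the signature components and applying Han's integer sorting to the run lengths to assign fresh signatures, and finally extracting $A_t^{X_n},B_t^{X_n}$ from $\hat{A}_t^{X_n}\hat{L}_t^{X_n}$ and $\hat{R}_t^{X_n}\hat{B}_t^{X_n}$. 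The seam-bookkeeping concerns you flag at the end are exactly the $O(\log^* M)$-local constant-time case analysis the paper implicitly relies on, so your argument matches the intended proof.
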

\begin{proof}
The computation process is similar to that of Lemma~\ref{lem:Lambda_0},
except that we also use the information in $\hat{\Lambda}_{t}$.

We first compute, for all variables $X_i$,
$\encpow{\xshrink{t}{X_i}}$ if $|\encpow{\xshrink{t}{X_i}}| \leq \Delta_{L} + \Delta_{R} + 9$,
otherwise $\encpow{\hat{L}_{t}^{X_i}}$ and $\encpow{\hat{R}_{t}^{X_i}}$.
The information can be computed in $O(n \log^*M)$ time and space in a bottom-up manner, i.e., by processing variables in increasing order.
For $X_i \rightarrow X_{\ell} X_{r}$, if both $|\encpow{\xshrink{t}{X_{\ell}}}|$ and $|\encpow{\xshrink{t}{X_{r}}}|$ are no greater than $\Delta_{L} + \Delta_{R} + 9$,
we can compute $\encpow{\xshrink{0}{X_i}}$ in $O(\log^* M)$ time 
by naively concatenating $\encpow{\xshrink{t}{X_{\ell}}}$, $\encpow{\hat{z}_{t}^{X_i}}$ and $\encpow{\xshrink{t}{X_{r}}}$.
Otherwise $|\encpow{\xshrink{t}{X_i}}| > \Delta_{L} + \Delta_{R} + 9$ must hold, and 
$\encpow{\hat{L}_{0}^{X_i}}$ and $\encpow{\hat{R}_{0}^{X_i}}$ can be computed in $O(1)$ time from $\encpow{\hat{z}_{t}^{X_i}}$ and the information for $X_{\ell}$ and $X_{r}$.

The run-length encoded signatures represented by $z_{t}^{X_i}$ can be obtained in $O(\log^* M)$ time
by using $\hat{z}_{t}^{X_i}$ and the above information for $X_{\ell}$ and $X_r$:
$z_{t}^{X_i}$ is created over run-length encoded signatures that are obtained by concatenating 
$\encpow{\xshrink{0}{X_{\ell}}}$ (or $\encpow{\hat{R}_{0}^{X_{\ell}}}$), $z_{t}^{X_i}$ and $\encpow{\xshrink{0}{X_r}}$ (or $\encpow{\hat{R}_{0}^{X_r}}$).
Also, $A_{t}^{X_n}$ and $B_{t}^{X_n}$ represents $\hat{A}_{t}^{X_n} \hat{L}_{t}^{X_n}$ and $\hat{R}_{t}^{X_n} \hat{B}_{t}^{X_n}$, respectively.

Hence, we can compute in $O(n \log^* M)$ time $O(n \log^*M)$ run-length encoded signatures to which we give signatures.
We determine signatures in $O(n \log \log n \log^* M)$ time by sorting the run-length encoded signatures as Lemma~\ref{lem:Lambda_t}.
\end{proof}

\begin{lemma}\label{lem:hat_Lambda_t}
Given $\Lambda_{t}$, we can compute $\hat{\Lambda}_{t+1}$ in $O(n \log^*M)$ time and $O(n \log^*M)$ space.
\end{lemma}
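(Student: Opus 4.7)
The plan is to process the variables $X_1, \ldots, X_n$ of the SLP bottom-up and, for each $X_i$, produce $\hat{z}_{t+1}^{X_i}$ together with short boundary data for $\xshrink{t+1}{X_i}$, in direct analogy with Lemma~\ref{lem:Lambda_t}. The step from $\Lambda_t$ to $\hat{\Lambda}_{t+1}$ corresponds to applying $\mathit{Eblock}$ to $\pow{t}{X_n}$. Because of the identity $\xshrink{t+1}{X_i} = \xshrink{t+1}{X_\ell}\,\hat{z}_{t+1}^{X_i}\,\xshrink{t+1}{X_r}$ for $X_i \to X_\ell X_r$, only those blocks created by $\mathit{Eblock}$ that straddle the boundary between $\val{X_\ell}$ and $\val{X_r}$, and that are not already produced inside $X_\ell$ or $X_r$, contribute to $\hat{z}_{t+1}^{X_i}$; by the locality guarantee of Lemma~\ref{lem:CoinTossing} these blocks lie in a window of length $O(\log^* M)$ around $z_t^{X_i}$.

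Concretely, I maintain bottom-up, for each variable $X_i$, a prefix and a suffix of $\xpow{t}{X_i}$ of length $\Delta = \Delta_L + \Delta_R + O(1)$ (or, when $|\xpow{t}{X_i}| \leq 2\Delta$, the whole of $\xpow{t}{X_i}$). For $X_i \to X_\ell X_r$ these are computed in $O(\log^* M)$ time from the analogous data at $X_\ell$ and $X_r$ by concatenating with $z_t^{X_i}$ and truncating. I then form the window $W_i$ obtained by concatenating the stored suffix of $\xpow{t}{X_\ell}$, the sequence $z_t^{X_i}$, and the stored prefix of $\xpow{t}{X_r}$, and apply Lemma~\ref{lem:CoinTossing} to $W_i$ to locate its block boundaries in $O(\log^* M)$ time. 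By Definition~\ref{def:xshrink} the blocks emerging from the interior of $W_i$ that are not already assigned inside $X_\ell$ or $X_r$ are exactly those whose signatures constitute $\hat{z}_{t+1}^{X_i}$. For the root $X_n$ I additionally apply the same parsing to $A_t^{X_n}$ and $B_t^{X_n}$ to obtain $\hat{A}_{t+1}^{X_n}$ and $\hat{B}_{t+1}^{X_n}$.

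After the bottom-up sweep, the collected blocks total $O(n \log^* M)$, each of length between 2 and 4 over signatures occurring in $\pow{t}{X_n}$. The crucial fact that gets us below the $O(n \log\log n \log^* M)$ bound of Lemma~\ref{lem:Lambda_t} is that $\mathit{Eblock}$ creates no runs, so no large power entries need to be sorted; moreover, all signatures assigned during the pow step at level $t$ were produced consecutively by Lemma~\ref{lem:Lambda_t}, so the entries of the blocks lie in an integer interval of length $O(n \log^* M)$. I can therefore deduplicate the blocks and assign fresh consecutive signatures to them using multi-key bucket sort on at most $4$ coordinates in $O(n \log^* M)$ total time and space, and register the resulting productions in $\mathcal{H}$. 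Replacing each block by its assigned signature recovers $\hat{z}_{t+1}^{X_i}$, $\hat{A}_{t+1}^{X_n}$ and $\hat{B}_{t+1}^{X_n}$, completing $\hat{\Lambda}_{t+1}$.

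The main obstacle is the boundary bookkeeping: when $|\xpow{t}{X_\ell}|$ or $|\xpow{t}{X_r}|$ is smaller than $\Delta$, the locally consistent parsing at $X_i$ needs context that reaches deeper into the subtrees of the children. I handle this in the style of Lemma~\ref{lem:Lambda_t}, by storing the whole of $\xpow{t}{X_i}$ for every $X_i$ with $|\xpow{t}{X_i}| \leq 2\Delta$ and propagating it upward during concatenation; since $\Delta = O(\log^* M)$, each such concatenation and truncation still costs $O(\log^* M)$, so the total time and working space remain $O(n \log^* M)$, as claimed.
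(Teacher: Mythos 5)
Your proof follows the paper's approach essentially verbatim: propagate $O(\log^* M)$-length boundary prefixes and suffixes of $\xpow{t}{X_i}$ bottom-up over the SLP (the paper's $u_t^{X_i}$ and $v_t^{X_i}$), form the short window $v_t^{X_\ell}\, z_t^{X_i}\, u_t^{X_r}$ around each $z_t^{X_i}$, parse it with the precomputed table of Lemma~\ref{lem:CoinTossing}, extract the blocks that form $\hat z_{t+1}^{X_i}$, handle $A_t^{X_n}$ and $B_t^{X_n}$ at the root, and bucket-sort the resulting $O(n\log^* M)$ blocks into fresh signatures. The one point you flag as ``the main obstacle''---a nonempty $\xpow{t}{X_\ell}$ or $\xpow{t}{X_r}$ shorter than $\Delta$---never actually occurs: by the choice of $h^P$ in Definition~\ref{def:xshrink}, any nonempty $\xpow{t}{X}$ already has length greater than $\Delta_L+\Delta_R+7$ (the paper states the analogous fact $|z_t^X|>\Delta_L+\Delta_R+9$ whenever $z_t^X\neq\varepsilon$ and uses it to give the clean propagation rule $u_t^{X_i}=u_t^{X_\ell}$ when $z_t^{X_\ell}\neq\varepsilon$, else the prefix of $z_t^{X_i}$), so your extra bookkeeping is harmless but superfluous.
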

\begin{proof}
In order to compute $\hat{z}_{t+1}^{X_i}$ for a variable $X_i \rightarrow X_{\ell} X_{r}$,
we need a signature sequence on which $\hat{z}_{t+1}^{X_i}$ is created,
as well as its context, i.e., $\Delta_{L}$ signatures to the left and $\Delta_{R}$ to the right.
To be precise, the needed signature sequence is $v_{t}^{X_{\ell}} z_{t}^{X_i} u_{t}^{X_{r}}$,
where $u_{t}^{X_j}$ (resp. $v_{t}^{X_j}$) denotes a prefix (resp. suffix) of $y_{t}^{X_j}$ of length $\Delta_{L} + \Delta_{R} + 4$ for any variable $X_j$
(see also Figure~\ref{fig:hat_z_construction}).
Also, we need $A_{t} u_{t}^{X_n}$ and $v_{t}^{X_n} B_{t}$ to create $\hat{A}_{t+1}^{X_n}$ and $\hat{B}_{t+1}^{X_n}$, respectively.

Note that by Definition~\ref{def:xshrink}, $|z_{t}^{X}| > \Delta_{L} + \Delta_{R} + 9$ if $z_{t}^{X} \neq \varepsilon$.
Then, we can compute $u_{t}^{X_i}$ for all variables $X_i$ in $O(n \log^*M)$ time and space
by processing variables in increasing order on the basis of the following fact:
$u_{t}^{X_i} = u_{t}^{X_{\ell}}$ if $z_{t}^{X_{\ell}} \neq \varepsilon$,
otherwise $u_{t}^{X_i}$ is the prefix of $z_{t}^{X_i}$ of length $\Delta_{L} + \Delta_{R} + 4$.
Similarly $v_{t}^{X_i}$ for all variables $X_i$ can be computed in $O(n \log^*M)$ time and space.

Using $u_{t}^{X_i}$ and $v_{t}^{X_i}$ for all variables $X_i$,
we can obtain $O(n \log^*M)$ blocks of signatures to which we give signatures.
We determine signatures by sorting the blocks by bucket sort as Lemma~\ref{lem:signature_encblock}
in $O(n \log^*M)$ time.

Hence, we can compute $\hat{\Lambda}_{t+1}$ in $O(n \log^*M)$ time and $O(n \log^*M)$ space.
\end{proof}

\begin{figure}[t]
\begin{center}
  \includegraphics[scale=0.5]{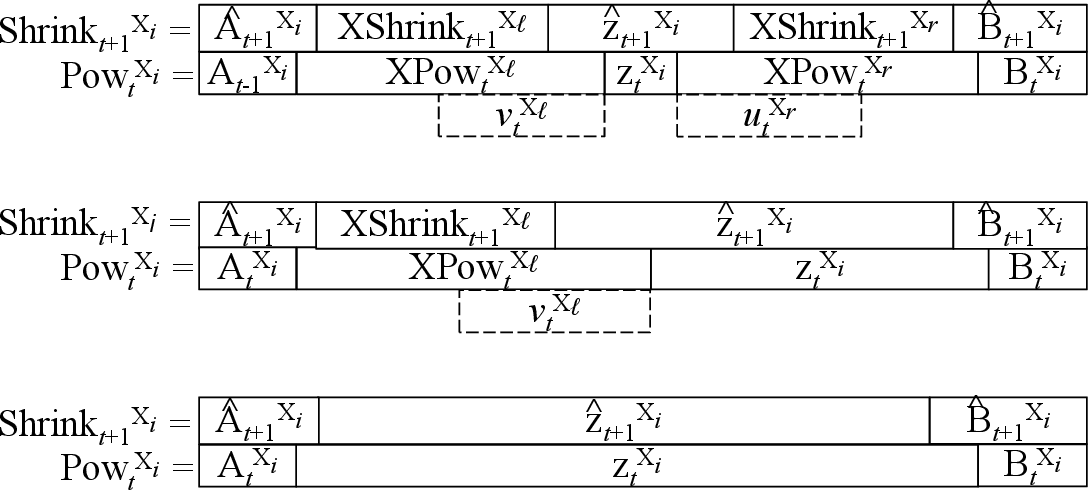}
  \caption{
  Abstract images of the needed signature sequence $v_{t}^{X_{\ell}} z_{t}^{X_i} u_{t}^{X_r}$ ($v_{t}^{X_{\ell}}$ and $u_{t}^{X_r}$ are not shown when they are empty)
  for computing $\hat{z}^{X_i}_{t+1}$ in three situations:
  Top for $0 \leq t < h^{X_{\ell}}, h^{X_{r}}$; middle for $h^{X_r} \leq t < h^{X_{\ell}}$; and bottom for $h^{X_{\ell}}, h^{X_{r}} \leq t < h^{X_i}$.
  } 
  \label{fig:hat_z_construction}
\end{center}
\end{figure}

We are ready to prove the theorem.
\begin{proof}
Using Lemmas~\ref{lem:Lambda_0},~\ref{lem:Lambda_t} and~\ref{lem:hat_Lambda_t},
we can get $\hat{\Lambda}_{h^{X_n}}$ in $O(n \log \log n \log N \log^*M)$ time
by computing $\Lambda_{0}, \hat{\Lambda}_{1}, \ldots, \hat{\Lambda}_{h^{X_n}}$ incrementally.
Note that during the computation we only have to keep $\Lambda_{t}$ (or $\hat{\Lambda}_{t}$) for the current $t$ and the assignments of $\mathcal{G}$.
Hence the working space is $O(n \log^* M + w)$.
By processing $\hat{\Lambda}_{h^{X_n}}$ in $O(n \log^* M)$ time,
we can get a sorted assignment set $\mathcal{D}$ of $\mathcal{G}$ of size $O(w)$.
Finally, we process $\mathcal{G}$ in $O(w)$ time and space to get $\mathcal{H}(\log w, w)$ by Lemma~\ref{lem:linear_construction}.
\end{proof}

\section{Appendix: Applications}\label{sec:appendix_applications}
\subsection{Proof of Theorem~\ref{theo:lzfac}}

For integers $j, k$ with $1 \leq j \leq j+k-1 \leq N$,
let $\mathit{Fst}(j,k)$ be the function which returns the minimum integer $i$ 
such that $i < j$ and $T[i..i+k-1] = T[j..j+k-1]$, if it exists. 
Our algorithm is based on the following fact:
\begin{fact}\label{fact:lz77}
Let $f_1, \ldots, f_z$ be the LZ77-Factorization of a string $T$. 
Given $f_1, \ldots, f_{i-1}$, we can compute $f_{i}$ with $O(\log |f_i|)$ calls of $\mathit{Fst}(j,k)$
(by doubling the value of $k$, followed by a binary search), 
where $j = |f_1 \cdots f_{i-1}|+1$.
\end{fact}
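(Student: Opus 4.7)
The plan is to reduce the computation of a single factor $f_i$ to determining the maximal length $k^\star$ such that $\mathit{Fst}(j, k^\star)$ is defined, where $j = |f_1 \cdots f_{i-1}|+1$, and then to bound the number of $\mathit{Fst}$ calls needed to find this $k^\star$. I will first handle the special case where $T[j]$ does not occur in $T[1..j-1]$: this corresponds to $\mathit{Fst}(j, 1)$ being undefined, in which case by the definition of LZ77 we have $f_i = T[j]$ and the computation is complete after a single call. From now on I assume $\mathit{Fst}(j,1)$ is defined, so $k^\star \geq 1$ and $f_i = T[j..j+k^\star-1]$.

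The key structural observation is monotonicity: if $\mathit{Fst}(j, k)$ is defined, then $\mathit{Fst}(j, k')$ is defined for every $1 \leq k' \leq k$, because any previous occurrence of $T[j..j+k-1]$ gives rise to a previous occurrence of the prefix $T[j..j+k'-1]$. Hence the predicate ``$\mathit{Fst}(j,k)$ is defined'' is a downward-closed property in $k$, and $k^\star = |f_i|$ is precisely the unique threshold separating the two regimes.

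Given this monotonicity, I will apply the standard doubling-then-binary-search technique. In the doubling phase, I evaluate $\mathit{Fst}(j, 2^0), \mathit{Fst}(j, 2^1), \mathit{Fst}(j, 2^2), \ldots$ and stop at the smallest $\ell \geq 1$ for which $\mathit{Fst}(j, 2^\ell)$ is undefined (taking care to cap $2^\ell$ at $N - j + 1$). By monotonicity and minimality of $\ell$, we have $2^{\ell-1} \leq k^\star < 2^\ell$, so $\ell \leq \lfloor \log_2 k^\star \rfloor + 1$, and the doubling phase uses at most $\ell + 1 = O(\log |f_i|)$ calls. In the binary-search phase I locate $k^\star$ in the interval $[2^{\ell-1}, 2^\ell - 1]$ by classical binary search on the same monotone predicate, which requires $O(\log(2^\ell - 2^{\ell-1})) = O(\ell) = O(\log |f_i|)$ additional calls to $\mathit{Fst}$. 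Once $k^\star$ is known, $f_i = T[j..j+k^\star-1]$ (and its source position is given by $\mathit{Fst}(j, k^\star)$, if needed).

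Summing the two phases yields a total of $O(\log |f_i|)$ calls to $\mathit{Fst}$, as claimed. There is no real obstacle in this argument; the only thing to be careful about is the boundary case in which $j + k - 1$ exceeds $N$ during doubling, which is handled by truncating the query length to $N - j + 1$ (this does not affect the asymptotic bound since $|f_i| \leq N - j + 1$).
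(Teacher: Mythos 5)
Your proof is correct and follows exactly the approach the paper sketches in its parenthetical (exponential doubling to bracket $|f_i|$, followed by a binary search), with the key monotonicity observation -- that the set of $k$ for which $\mathit{Fst}(j,k)$ is defined is downward-closed -- made explicit. The paper itself states this as a Fact with only the parenthetical hint as justification, so your write-up simply fills in the details of the intended argument.
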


We explain how to support queries $\mathit{Fst}(j,k)$ using the signature encoding.
We define $e.{\rm min} = \min \mathit{vOcc}(e,S) + |e.{\rm left}|$ for a signature $e \in \mathcal{V}$ with
$e \rightarrow e_{\ell}e_{r}$ or $e \rightarrow \hat{e}^{k}$.
We also define $\mathit{FstOcc}(P,i)$ for a string $P$ and an integer $i$ as follows:
\[
  \mathit{FstOcc}(P,i) = \min \{ e.{\rm min} \mid (e, i) \in \mathit{pOcc}_{\mathcal{G}}(P,i) \}
\]
Then $\mathit{Fst}(j,k)$ can be represented by $\mathit{FstOcc}(P,i)$ as follows:
\begin{eqnarray*}
\mathit{Fst}(j,k) &=& \min \{ \mathit{FstOcc}(T[j..j+k-1],i) - i \mid i \in \{ 1,\ldots, k-1 \} \\
&=& \min \{ \mathit{FstOcc}(T[j..j+k-1],i) - i \mid i \in \mathcal{P} \},
\end{eqnarray*}
where $\mathcal{P}$ is the set of integers in Lemma~\ref{lem:pattern_occurrence_lemma1} with $P = T[j..j+k-1]$.

Recall that in Section~\ref{subsec:StaticIndexSigDic} 
we considered the two-dimensional orthogonal range reporting problem to enumerate $\mathit{pOcc}_{\mathcal{G}}(P,i)$.
Note that $\mathit{FstOcc}(P,i)$ can be obtained by taking $(e, i) \in \mathit{pOcc}_{\mathcal{G}}(P,i)$ with $e.{\rm min}$ minimum.
In order to compute $\mathit{FstOcc}(P,i)$ efficiently instead of enumerating all elements in $\mathit{pOcc}_{\mathcal{G}}(P,i)$,
we give every point corresponding to $e$ the weight $e.{\rm min}$ and
use the next data structure to compute a point with the minimum weight in a given rectangle.
\begin{lemma}[\cite{DBLP:journals/comgeo/AgarwalAG0Y13}]\label{lem:minimum_weight_report}
Consider $n$ weighted points on a two-dimensional plane.
There exists a data structure which supports the query to return a point with the minimum weight in a given rectangle 
in $O(\log^2 n)$ time, occupies $O(n)$ space, and requires $O(n \log n)$ time to construct.
\end{lemma}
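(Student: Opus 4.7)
The plan is to combine a primary range-tree decomposition on the $x$-coordinates with a shared secondary range-minimum structure on the $y$-coordinates, and then compress storage so that the total is linear rather than $O(n\log n)$.

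First, I would sort the $n$ points by $x$-coordinate and build a balanced binary search tree $T$ of height $O(\log n)$ whose leaves are the points in $x$-order. Any query rectangle $[x_1,x_2]\times[y_1,y_2]$ has its $x$-interval $[x_1,x_2]$ covered by $O(\log n)$ canonical subtrees of $T$, obtainable in $O(\log n)$ time by walking from the predecessors of $x_1$ and $x_2$ up to their lowest common ancestor. At each such canonical node $v$ the residual subproblem is to find a minimum-weight point among those in $v$'s subtree whose $y$-coordinate lies in $[y_1,y_2]$; the final answer to the original query is the best of these $O(\log n)$ residuals.

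To answer each residual in $O(\log n)$ time I would (conceptually) attach to every node $v$ of $T$ a balanced BST on the $y$-coordinates of the points in $v$'s subtree, augmented at every internal node with the minimum weight in its subtree, so that a 1D range-minimum query takes $O(\log n)$. Used naively this already gives query time $O(\log^2 n)$ and, with the sort, construction time $O(n\log n)$, but total space is $O(n\log n)$. To cut the space to $O(n)$ I would replace the layered collection by a single \emph{persistent} 1D range-minimum structure, built by sweeping the points in $x$-order and producing one version per prefix, and implement each canonical-node subquery as a constant number of accesses against two versions bracketing the $x$-range of $v$. The key observation for the space analysis is that only one aggregated value, the subtree minimum, needs to be propagated per node, so long chains of unchanged subtrees can be shared across consecutive versions, allowing each of the $n$ insertions to contribute only $O(1)$ amortized new persistent nodes.

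The main obstacle is this last compression step: the range-tree plus secondary-BST template yields the $O(\log^2 n)$ query time almost for free, but shaving the space from $O(n\log n)$ to $O(n)$ without paying in query time is where the real work lies, and it is precisely the content of the construction of Agarwal et al.\ that we cite as a black box. Once linear space is in hand, the $O(n\log n)$ construction bound follows from the initial sort together with $n$ persistent updates each doing $O(\log n)$ amortized work.
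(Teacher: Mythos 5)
The paper offers no proof of this lemma at all: it is invoked verbatim from Agarwal et al.\ as a black box, so there is no argument in the paper to compare yours against. That said, the sketch you propose does not, in fact, establish the bound, and the gap is not merely "the citation does the hard part" but a concrete technical error in the compression step.

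The persistence device you invoke --- sweep in $x$, maintain a persistent secondary structure keyed on $y$, and answer a canonical node's subquery by "a constant number of accesses against two versions bracketing the $x$-range" --- is the Sarnak--Tarjan / Chazelle functional approach, and it fundamentally requires the aggregate to be \emph{subtractive} (or the query to be one-sided in $x$). For counting or semigroup sums over a group you can take version $x_2$ and cancel version $x_1-1$; for reporting you can use version differences because inclusion is monotone. Minimum is not subtractive: knowing the minimum weight among points with $x \le x_2$, $y \in [y_1,y_2]$ and the minimum among those with $x \le x_1-1$, $y \in [y_1,y_2]$ tells you nothing about the minimum restricted to $x \in [x_1,x_2]$, since the global witness may lie to the left of $x_1$. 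So the step "implement each canonical-node subquery as a constant number of accesses against two versions" simply does not compute the residual range minimum. A secondary concern is your $O(1)$ amortized persistent-node claim: even in Sarnak--Tarjan red-black trees this relies on the node payload being unchanged along rebalancing paths, whereas subtree-minimum augmentation forces writes along the entire root-to-leaf path whenever a new global or local minimum is inserted, so the amortized copy count reverts to $\Theta(\log n)$ per update and the space to $\Theta(n \log n)$. Getting $O(n)$ space with $O(\log^2 n)$ query for 2D range \emph{minimum} genuinely needs the different machinery of Agarwal et al.; you correctly flag that this is where the cited result does the work, but the route you sketch toward it would not go through even if one were willing to fill in details.
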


Using Lemma~\ref{lem:minimum_weight_report}, we get the following lemma.
\begin{lemma}\label{lem:prevlemma}
Given a signature encoding $\mathcal{G} = (\Sigma, \mathcal{V}, \mathcal{D}, S)$ of size $w$ which generates $T$,
we can construct a data structure of $O(w)$ space in $O(w \log w \log N \log^* N)$ time
to support queries $\mathit{Fst}(j,k)$ in $O(\log w \log k \log^* N (\log N + \log k \log^* N))$ time.
\end{lemma}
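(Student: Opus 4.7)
The plan is to follow the enumeration strategy from Section~\ref{subsec:StaticIndexSigDic} that is used to compute $\mathit{pOcc}_{\mathcal{G}}(P,i)$, but to replace the bare two-dimensional range reporting structure with the weighted variant from Lemma~\ref{lem:minimum_weight_report}. The key observation is that $\mathit{FstOcc}(P,i)$ is nothing but the smallest value of $e.{\rm min}$ among the points $(e.{\rm left}^R, e.{\rm right})$ that fall inside the rectangle $(x_1^{(P,i)}, x_2^{(P,i)}, y_1^{(P,i)}, y_2^{(P,i)})$, so a single minimum-weight query per split position $i \in \mathcal{P}$ suffices; the final answer is then $\mathit{Fst}(j,k) = \min_{i \in \mathcal{P}} (\mathit{FstOcc}(P,i) - i)$ via the formula stated just before the lemma.

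For preprocessing, I would first compute $\min \mathit{vOcc}(e, S)$ for every signature $e$ with a production of the form $e \to e_\ell e_r$ or $e \to \hat{e}^d$. This is done by a top-down traversal of the DAG in topological order: starting from $\min \mathit{vOcc}(S,S) = 1$, for each edge from parent $e'$ to child $e$ we relax the candidate using the offset of $e$ within $e'$ (either $0$ or $|\val{e_\ell}|$). Since the DAG has $O(w)$ edges and the offsets are available in $O(1)$ time via $\mathit{Assgn}$ and the stored lengths (Lemma~\ref{lem:basic_operation_time}), this pass costs $O(w)$ time, after which $e.{\rm min}$ is obtained by adding $|e.{\rm left}|$. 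Next, I would lexicographically sort $\mathcal{X} = \{e.{\rm left}^R : e \in \mathcal{V}\}$ and $\mathcal{Y} = \{e.{\rm right} : e \in \mathcal{V}\}$; each pairwise comparison reduces to one $\LCEQ$ query by Lemma~\ref{lem:sub_operation_lemma}, which takes $O(\log N \log^* N)$ time since the compared strings have length at most $N$ (in the static setting $M = N$). Comparison-based sorting thus costs $O(w \log w \log N \log^* N)$. Finally, the $O(w)$ weighted points are loaded into the structure of Lemma~\ref{lem:minimum_weight_report} in an additional $O(w \log w)$ time and $O(w)$ space, so the overall preprocessing matches the claim.

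For a query $\mathit{Fst}(j,k)$, I would feed $S$, $j$, and $k$ into Lemma~\ref{lem:ComputeShortCommonSequence} to obtain $\encpow{\uniq{P}}$ for $P = T[j..j+k-1]$ in $O(\log N + \log k \log^* N)$ time, from which the split-position set $\mathcal{P}$ of Lemma~\ref{lem:pattern_occurrence_lemma1}, with $|\mathcal{P}| = O(\log k \log^* N)$, is read off directly. For each $i \in \mathcal{P}$, I would compute the rectangle by Lemma~\ref{lem:ComputePatternRange} in $O(\log w (\log N + \log k \log^* N))$ time and then issue one minimum-weight rectangle query in $O(\log^2 w)$ time to obtain $\mathit{FstOcc}(P,i)$. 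Taking the minimum of $\mathit{FstOcc}(P,i) - i$ across $i \in \mathcal{P}$ yields $\mathit{Fst}(j,k)$; since $\log^2 w = O(\log w \log N)$, the per-query cost adds up to the claimed $O(\log w \log k \log^* N (\log N + \log k \log^* N))$.

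The step that most needs care is the correctness of the weighting rather than its efficiency. One must verify that a primary occurrence $(e, |e.{\rm left}| - i + 1) \in \mathit{pOcc}_{\mathcal{G}}(P, i)$ witnesses the occurrence of $P$ in $T$ at position $\min \mathit{vOcc}(e, S) + |e.{\rm left}| - i = e.{\rm min} - i$, justifying the uniform subtraction of $i$ after the weighted query, and that the leftmost occurrence of $P$ of length $>1$ in $T$ is indeed realised by some primary occurrence visible to this procedure. The degenerate case $|P| = 1$ lies outside the scope of primary occurrences and is handled separately by precomputing, in $O(w)$ additional time during preprocessing, the first occurrence of each character of $\Sigma$ in $T$.
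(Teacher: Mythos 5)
Your proposal is correct and follows essentially the same approach as the paper's proof: precompute $e.{\rm min}$ by a linear-time DAG pass, sort $\mathcal{X}$ and $\mathcal{Y}$ via $\LCEQ$-based comparisons, load the weighted points into the structure of Lemma~\ref{lem:minimum_weight_report}, and at query time reduce $\mathit{Fst}(j,k)$ to $O(\log k \log^* N)$ minimum-weight rectangle queries via Lemmas~\ref{lem:ComputeShortCommonSequence}, \ref{lem:pattern_occurrence_lemma1} and~\ref{lem:ComputePatternRange}, with the same time accounting. Your added remarks on the correctness of the weighting $e.{\rm min} - i$ and the degenerate $|P|=1$ case are sound details that the paper leaves implicit.
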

\begin{proof}
For construction, we first compute $e.{\rm min}$ in $O(w)$ time using the DAG of $\mathcal{G}$.
Next, we prepare the plane defined by the two ordered sets $\mathcal{X}$ and $\mathcal{Y}$ in Section~\ref{subsec:StaticIndexSigDic}.
This can be done in $O(w \log w \log N \log^* N)$ time by sorting elements in $\mathcal{X}$ (and $\mathcal{Y}$)
by $\LCEQ$ algorithm (Lemma~\ref{lem:sub_operation_lemma}) and a standard comparison-based sorting.
Finally we build the data structure of Lemma~\ref{lem:minimum_weight_report} in $O(w \log w)$ time.

To support a query $\mathit{Fst}(j,k)$, we first compute $\encpow{\uniq{P}}$
with $P = T[j..j+k-1]$ in $O(\log N + \log k \log^* N)$ time by Lemma~\ref{lem:ComputeShortCommonSequence},
and then get $\mathcal{P}$ in Lemma~\ref{lem:pattern_occurrence_lemma1}.
Since $|\mathcal{P}| = O(\log k \log^* M)$ by Lemma~\ref{lem:common_sequence2},
$\mathit{Fst}(j,k) = \min \{ \mathit{FstOcc}(P,i) - i \mid i \in \mathcal{P} \}$
can be computed by answering $\mathit{FstOcc}$ $O(\log k \log^* M)$ times.
For each computation of $\mathit{FstOcc}(P,i)$,
we spend $O(\log w (\log N + \log k \log^* N))$ time to compute $x_1^{(P,j)}, x_2^{(P,j)}, y_1^{(P,j)}$ and $y_2^{(P,j)}$ by Lemma~\ref{lem:ComputePatternRange},
and $O(\log^2 w)$ time to compute a point with the minimum weight in the rectangle $(x_1^{(P,j)}, x_2^{(P,j)}, y_1^{(P,j)}, y_2^{(P,j)})$.
Hence it takes $O(\log k \log^* M (\log w (\log N + \log k \log^* N) + \log^2 w)) = O(\log w \log k \log^* N (\log N + \log k \log^* N))$ time in total.
\end{proof}

We are ready to prove Theorem~\ref{theo:lzfac} holds.
\begin{proof}[Proof of Theorem~\ref{theo:lzfac}]
We first compute the signature encoding of $T$ in $O(|T| f_{\mathcal{A}})$ time 
and $O(f'_{\mathcal{A}} + w)$ working space by the algorithm of Theorem~\ref{theo:HConstructuionTheorem}~(1).
Using a data structure $\mathcal{H}(f_{\mathcal{A}},f'_{\mathcal{A}})$ achieving
$f_{\mathcal{A}} = O\left(\min \left\{ \frac{\log \log M \log \log w}{\log \log \log M}, \sqrt{\frac{\log w}{\log\log w}} \right\} \right)$ time 
and $f'_{\mathcal{A}} = O(w)$ space, the working space becomes $O(w)$ space.
Next we compute the $z$ factors of the LZ77-Factorization of $T$ incrementally 
by using Fact~\ref{fact:lz77} and Lemma~\ref{lem:prevlemma} in $O(z \log w \log^3 N (\log^* N)^2)$ time.
Therefore the statement holds.
\end{proof}

\subsection{Proof of Theorem~\ref{theo:changedSLP}}

\subsubsection{Proof of Theorem~\ref{theo:changedSLP}~(1)}
\begin{proof}
For any signature $e \in \mathcal{V}$ such that $e \rightarrow e_{\ell}e_r$, 
we can easily translate $e$ to a production of SLPs because the assignment is a pair of signatures, 
like the right-hand side of the production rules of SLPs.
For any signature $e \in \mathcal{V}$ such that $e \rightarrow \hat{e}^k$, 
we can translate $e$ to at most $2 \log k$ production rules of SLPs:
We create $t = \lfloor \log k \rfloor$ variables which represent
$\hat{e}^{2^1}, \hat{e}^{2^2}, \ldots, \hat{e}^{2^t}$ and concatenating them
according to the binary representation of $k$ to make up $k$ $\hat{e}$'s.
Therefore we can compute $\mathcal{S}$ in $O(w \log |T|)$ time.
\end{proof}

\subsubsection{Proof of Theorem~\ref{theo:changedSLP}~(2)}
\begin{proof}
Note that the number of created or removed signatures in $\mathcal{V}$ 
is bounded by $O(y + \log |T'| \log^* M)$ by Lemma~\ref{lem:concatenate1}. 
For each of the removed signatures, we remove the corresponding production from $\mathcal{S}$. 
For each of created signatures, we create the corresponding production and add it to $\mathcal{S}$ as in the proof of (1). 
Therefore Theorem~\ref{theo:changedSLP} holds. 
\qed
\end{proof}

\subsection{Proof of Theorem~\ref{theo:faster_LCP}}

We use the following known result.
\begin{lemma}[\cite{LongAlstrup}]\label{lem:lcp_lcs_on_H}
Using the DAG for a signature encoding $\mathcal{G} = (\Sigma, \mathcal{V}, \mathcal{D}, S)$,
we can support 
\begin{itemize}
 \item $\mathit{LCP}(s_1, s_2)$ in $O(\log |s_1| + \log |s_2|)$ time,
 \item $\mathit{LCS}(s_1, s_2)$ in $O((\log |s_1| + \log |s_2|) \log^* M)$ time
\end{itemize}
where $\id{s_1}, \id{s_2} \in \mathcal{V}$.
\end{lemma}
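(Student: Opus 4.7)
The plan is to traverse the derivation trees of $e_1 = \id{s_1}$ and $e_2 = \id{s_2}$ along their leftmost (resp.\ rightmost) paths in a coordinated fashion, exploiting the canonicity of signature encodings: two signatures derive the same string if and only if they are identical. This canonicity implies that shared prefixes (resp.\ suffixes) of $s_1$ and $s_2$ are reflected in structurally matched portions of the two derivation trees, which lets us localize the first mismatch without invoking the full common-sequence machinery of Lemma~\ref{lem:common_sequence2}.

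For $\LCP(s_1,s_2)$, I would maintain a pair of current nodes on the left spines of the two derivation trees together with the length of the confirmed common prefix so far. Starting from the roots $e_1$ and $e_2$, at each step the algorithm inspects the leftmost children of the two current nodes (generalized appropriately for run-length productions $e\to\hat{e}^{d}$, where the leftmost child is $\hat{e}$). If the two leftmost children have equal signatures, the corresponding block is confirmed common and both current nodes advance to their right siblings on the spine. If the leftmost children have distinct signatures but equal string lengths, we descend into them to localize the mismatch. If they have distinct lengths, we descend into the longer leftmost child to realign the comparison. Because signatures are determined uniquely by their values, each level is processed in $O(1)$ time, and since each spine has height $O(\log|s_i|)$, the total running time is $O(\log|s_1|+\log|s_2|)$.

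For $\LCS(s_1,s_2)$, I would run the symmetric algorithm along the right spines. The extra $\log^* M$ factor reflects the asymmetry of the locally consistent parsing of Lemma~\ref{lem:CoinTossing}: the position of a block boundary is determined by $\Delta_L = \log^* M + 6$ signatures to its left and only $\Delta_R = 4$ signatures to its right. Consequently, when $s_1$ and $s_2$ share only a suffix, the rightmost signatures at each level of the two trees are not individually canonical; canonicity holds only up to a window of $O(\log^* M)$ signatures at each level (this is precisely the role played by $R_t^P$ and $\hat{R}_t^P$ in Definition~\ref{def:xshrink}). Examining this window at each of the $O(\log|s_1|+\log|s_2|)$ levels yields the claimed $\log^* M$ overhead.

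The principal technical obstacle is the $\LCP$ case where the two leftmost children have different string lengths, since the shorter signature need not correspond to any subtree of the longer side's derivation tree. The resolution is to descend into the longer leftmost child and continue the coordinated traversal: by canonicity, any genuine continuation of the common prefix must eventually manifest as an equality of signatures somewhere on the left spine, so after $O(\log|s_1|+\log|s_2|)$ such descents we either confirm the full common prefix or isolate the first mismatch. The run-length productions $e\to\hat{e}^{d}$ add only $O(1)$ overhead per level because such a node contributes at most one distinguished child at each end of its expansion.
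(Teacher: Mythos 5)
Your high-level plan --- a coordinated traversal of the two derivation trees along their left (for $\mathit{LCP}$) or right (for $\mathit{LCS}$) spines --- is essentially what the paper does: its proof runs the $\LCEQ$ algorithm of Lemma~\ref{lem:sub_operation_lemma} with starting positions $(1,1)$ (respectively, its reverse variant). Your explanation of the $\log^* M$ asymmetry between $\mathit{LCP}$ and $\mathit{LCS}$, rooted in $\Delta_L = \Theta(\log^* M)$ versus $\Delta_R = O(1)$ in the locally consistent parsing, is correct and is precisely the reason the two stated bounds differ.

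There is, however, a real gap in the time accounting. You justify ``each level is processed in $O(1)$ time'' solely by canonicity of signatures, but canonicity only guarantees that blocks lying wholly inside $P$ and sufficiently far from its right endpoint get identical signatures in both trees; near the mismatch boundary the two parsings diverge, and the traversal must probe a window of signatures at every level of both trees. The aggregate size of those windows is what controls the running time, and bounding it is exactly the job of the common-sequence machinery you propose to bypass. The paper's argument is that, for the prefix-aligned case, a common signature sequence $v = \hat{A}^{P}_{h^{P}}\,\xshrink{h^{P}}{P}\,R^{P}_{h^{P}-1}\hat{R}^{P}_{h^{P}-1}\cdots R^{P}_0\hat{R}^{P}_0$ occurs at position $1$ in both $\id{s_1}$ and $\id{s_2}$, and the crucial observation is that $|\encpow{v}| = O(\log|P| + \log^* M)$ --- a full $\log^* M$ factor smaller than for a general substring occurrence, because the left tail $\hat{L}^{P}_t L^{P}_t$ of $\uniq{P}$, each term of which has $\mathit{Epow}$ size $\Theta(\log^* M)$, collapses into the single aligned prefix $\hat{A}^{P}_{h^{P}}$. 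Combined with Lemma~\ref{lem:ancestors}, which bounds the number of traversed nodes by $O(\log|s_1| + \log|s_2| + |\encpow{v}|)$, this gives $O(\log|s_1| + \log|s_2|)$. Your proposal as written has not ruled out the general-$\LCEQ$ cost of $O(\log|s_1| + \log|s_2| + \log|P|\log^* M)$, which is weaker by a $\log^* M$ factor; the missing step is exactly this collapse of the left tail. The $\mathit{LCS}$ case is symmetric, and there the wide tail cannot be collapsed, which is why the $\log^* M$ factor genuinely survives.
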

\begin{proof}

We compute $\mathit{LCP}(s_1, s_2)$ by $\mathit{LCE}(s_1,s_2,1,1)$,  
namely, we use the algorithm of Lemma~\ref{lem:sub_operation_lemma}. 
Let $P$ denote the longest common prefix of $s_1$ and $s_2$.
We use the notation $\hat{A}^{P}$ defined in Section~\ref{sec:HConstruction3-2}.
There exists a signature sequence $v = \hat{A}^{P}_{h^{P}}\xshrink{h^{P}}{P}R_{h^{P}-1}^{P}\hat{R}_{h^{P}-1}^{P} \cdots R_{0}^{P}\hat{R}_{0}^{P}$ 
that occurs at position $1$ in $\id{s_1}$ and $\id{s_2}$ by a similar argument of Lemma~\ref{lem:common_sequence2}.
Since $|\encpow{v}| = O(\log |P| + \log^* M)$, 
we can compute $\mathit{LCP}(s_1, s_2)$ in $O(\log |s_1| + \log |s_2|)$ time. 
Similarly, we can compute $\mathit{LCS}(s_1, s_2)$ in $O((\log |s_1| + \log |s_2|) \log^* M)$ time.
More detailed proofs can be found at~\cite{LongAlstrup}.
\qed
\end{proof}

To use Lemma~\ref{lem:lcp_lcs_on_H} for $\id{\val{X_1}}, \ldots, \id{\val{X_n}}$, we show the following lemma. 
\begin{lemma}\label{lem:SLPSignatureEncoding}
Given an SLP $\mathcal{S}$, 
we can compute $\id{\val{X_1}}, \ldots, \id{\val{X_n}}$ in \\
$O(n \log\log n \log N \log^* N)$ time and $O(n \log N \log^* N)$ space. 
\end{lemma}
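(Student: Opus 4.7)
The plan is to extend the level-by-level algorithm of Appendix~\ref{sec:HConstruction3-2} so that the signatures $\id{\val{X_i}}$ for every SLP variable $X_i$ are extracted within the same asymptotic time and space as Theorem~\ref{theo:HConstructuionTheorem}~(3). That construction processes levels $t = 0, 1, \ldots, h^{\val{X_n}}$ and, at level $t$, maintains the compressed forms $\Lambda_t = (z^{X_1}_t, \ldots, z^{X_n}_t, A^{X_n}_t, B^{X_n}_t)$ of $\pow{t}{X_n}$ and $\hat{\Lambda}_t = (\hat{z}^{X_1}_t, \ldots, \hat{z}^{X_n}_t, \hat{A}^{X_n}_t, \hat{B}^{X_n}_t)$ of $\shrink{t}{X_n}$. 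Because signature encoding is deterministic and locally consistent (Lemma~\ref{lem:CoinTossing}), for every SLP variable $X_i$ the central pieces $\hat{z}^{X_i}_t$ and $z^{X_i}_t$ coincide with those produced when $\val{X_i}$ is treated as a standalone string; what is missing are the boundary factors $\hat{A}^{X_i}_t, A^{X_i}_t, \hat{B}^{X_i}_t, B^{X_i}_t$ relative to $X_i$ being the outermost context.

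I would augment the algorithm to maintain these boundary factors for every SLP variable $X_i$. Starting from $\hat{A}^{X_i}_0 = \hat{B}^{X_i}_0 = \varepsilon$, at each level $A^{X_i}_t$ is obtained by applying $\encpow{\cdot}$ and $\sigp{\cdot}$ to $\hat{A}^{X_i}_t \hat{L}^{X_i}_t$, while $\hat{A}^{X_i}_{t+1}$ is obtained by applying $\encblock{\cdot}$ and $\sigp{\cdot}$ to $A^{X_i}_t L^{X_i}_t$; symmetric calculations yield $B^{X_i}_t$ and $\hat{B}^{X_i}_{t+1}$ from $\hat{R}^{X_i}_t \hat{B}^{X_i}_t$ and $R^{X_i}_t B^{X_i}_t$. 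The required $\hat{L}^{X_i}_t, L^{X_i}_t, \hat{R}^{X_i}_t, R^{X_i}_t$, each of length $O(\log^* N)$, are either already stored in the small-case data of $X_i$ from Lemmas~\ref{lem:Lambda_0}--\ref{lem:hat_Lambda_t}, or else can be read off in $O(\log^* N)$ time from the leftmost/rightmost constantly-many entries of the recursively compressed $\xshrink{t}{X_i}, \xpow{t}{X_i}$ by descending the SLP DAG. The crucial point for the running time is that the at most $O(n \log^* N)$ new signatures introduced per level by these boundary updates can be batched into the same Han-sort pass used by the main algorithm, so the per-level cost remains $O(n \log\log n \log^* N)$ and the total stays at $O(n \log\log n \log N \log^* N)$. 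The extra working space is $O(n \log^* N)$, which is absorbed by the $O(n \log N \log^* N)$ bound.

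After completing each level $t$, I would test, for every SLP variable $X_i$ whose standalone signature encoding has not yet collapsed, whether $|A^{X_i}_t \xpow{t}{X_i} B^{X_i}_t| = 1$; if so, that single signature is recorded as $\id{\val{X_i}}$. For variables with $|\encpow{\xshrink{t}{X_i}}| \leq \Delta_L + \Delta_R + 9$ (i.e.\ $t = h^{\val{X_i}}$), the $O(\log \log^* N)$ remaining Shrink/Pow steps are executed explicitly on the short sequence $A^{X_i}_t \xpow{t}{X_i} B^{X_i}_t$, with any new signatures again batched into the current level's sort. The main obstacle I expect is showing correctness of the boundary-factor recurrences simultaneously across all SLP variables: namely, that the $\hat{A}^{X_i}_t, A^{X_i}_t, \hat{B}^{X_i}_t, B^{X_i}_t$ defined when $X_i$ is the outermost string propagate by exactly the $\encblock{\cdot}$ and $\encpow{\cdot}$ rules above applied to boundary sequences determined by $X_i$ alone, and that the resulting signatures agree with those already assigned in $\mathcal{G}$ whenever the same expression recurs. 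This follows from determinism of signature encoding together with Lemma~\ref{lem:CoinTossing}, but some care is needed for short variables where $\xshrink{0}{X_i}$ is already small from the outset and $\id{\val{X_i}}$ must be recovered directly from the (initially trivial) boundary factors by an explicit short Shrink/Pow computation.
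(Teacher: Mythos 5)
Your proposal is correct and takes essentially the same route as the paper's own (very terse) proof: the paper also extends the level-by-level construction of Theorem~\ref{theo:HConstructuionTheorem}~(3) by maintaining the boundary factors $A_t^X$, $\hat{A}_t^X$, $B_t^X$, $\hat{B}_t^X$ for \emph{every} SLP variable $X$ rather than only for $X_n$, noting that each of these has size $O(\log^* M)$ per level so the total extra space is $O(n\log N\log^* N)$. Your filled-in details (initialization $\hat{A}_0^{X_i}=\hat{B}_0^{X_i}=\varepsilon$, the $\mathit{Eblock}$/$\mathit{Epow}$ recurrences for the boundary factors, batching the new boundary signatures into the per-level Han sort, and finishing each $X_i$ with an explicit short run once $|\encpow{\xshrink{h^{\val{X_i}}}{X_i}}|\le\Delta_L+\Delta_R+9$) are all consistent with how the paper's construction works.
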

\begin{proof}
Recall that the algorithm of Theorem~\ref{theo:HConstructuionTheorem}~(3) 
computes $\id{\val{X_n}}$ in $O(n \log\log n \log N \log^* N)$ time.
We can modify the algorithm to compute $\id{\val{X_1}}, \ldots, \id{\val{X_n}}$ without changing the time complexity:
We just compute $A_{t}^{X}$, $\hat{A}_{t}^{X}$, $B_{t}^{X}$ and $\hat{B}_{t}^{X}$ for ``all'' $X \in \mathcal{S}$, not only for $X_n$.
Since the total size is $O(n \log N \log^* N)$, Lemma~\ref{lem:SLPSignatureEncoding} holds.
\end{proof}

We are ready to prove Theorem~\ref{theo:faster_LCP}.
\begin{proof}
The first result immediately follows from Lemma~\ref{lem:lcp_lcs_on_H} and~\ref{lem:SLPSignatureEncoding}. 
To speed-up query times for $\LCPQ$ and $\LCSQ$,
We sort variables in lexicographical order 
in $O(n \log n \log N)$ time by $\LCPQ$ query and a standard comparison-based sorting.
Constant-time $\LCPQ$ queries are then possible by 
using a constant-time RMQ data structure~\citex{DBLP:journals/jal/BenderFPSS05}
on the sequence of the lcp values.
$\LCSQ$ queries can be supported similarly. \qed
\end{proof}

\subsection{Proof of Theorem~\ref{theo:smaller_LCE}}
\begin{proof}
We can compute $\mathcal{H}(\log w, w)$
for a signature encoding $\mathcal{G} = (\Sigma, \mathcal{V}, \mathcal{D}, S)$ of size $w$ 
representing $T$ in $O(n \log \log n \log N \log^* N)$ time 
and $O(n \log^* M + w)$ working space using Theorem~\ref{theo:HConstructuionTheorem}, 
where $w = O(z \log N \log^* N)$.
Notice that 
each variable of the SLP appears at least once in the 
derivation tree of $T_{n}$ of the last variable $X_n$ representing the string $T$.
Hence, if we store an occurrence of each variable $X_i$ in $\mathcal{T}_n$
and $|\val{X_i}|$, we can reduce any LCE query on two variables 
to an LCE query on two positions of $\val{X_n} = T$.
In so doing, for all $1 \leq i \leq n$  
we first compute $|\val{X_i}|$ 
and then compute the leftmost occurrence $\ell_i$ of $X_i$ in $\mathcal{T}_n$,
spending $O(n)$ total time and space.
By Lemma~\ref{lem:sub_operation_lemma}, 
each LCE query can be supported in $O(\log N + \log \ell \log^* N)$ time.
Since $z \leq n$~\citex{rytter03:_applic_lempel_ziv},
the total preprocessing time is $O(n \log \log n \log N \log^* N)$
and working space is $O(n \log^* M + w)$.
\qed
\end{proof}
\subsection{Proof of Theorem~\ref{theo:palindrome}}

\begin{proof}
For a given SLP of size $n$ representing a string of length $N$,
let $P(n,N)$, $S(n, N)$, and $E(n,N)$ be
the preprocessing time and space requirement for an $\LCEQ$ data structure on
SLP variables, and each $\LCEQ$ query time, respectively.

Matsubara et al.~\cite{matsubara_tcs2009} showed that 
we can compute an $O(n \log N)$-size representation
of all palindromes in the string
in $O(P(n,N) + E(n,N) \cdot n \log N)$ time and $O(n \log N + S(n, N))$ space.
Hence, using Theorem~\ref{theo:smaller_LCE},
we can find all palindromes in the string in 
$O(n \log \log n \log N \log^* N + n \log^2 N \log^* N) = O(n \log^2 N \log^* N)$ time
and $O(n \log^* N + w)$ space.
\qed
\end{proof}

\subsection{Proof of Theorem~\ref{theo:Lyndon}}

\begin{proof}
It is shown in~\cite{INIBT13} that 
we can compute the Lyndon factorization of the string 
in $O(P(n,N) + E(n,N) \cdot n \log n)$ time using $O(n^2+S(n,N))$ space.
Hence, using Theorem~\ref{theo:smaller_LCE},
we can compute the Lyndon factorization of the string in 
$O(n \log \log n \log N \log^* N + n \log n \log N \log^* N) = O(n \log n \log N \log^* N)$ time.
We remark that since $m \leq n$ due to~\cite{INIBT13},
the output size $m$ is omitted in the total time complexity.
\qed
\end{proof}

\subsection{Proof of Theorem \ref{theo:ImproveDictionaryMatching}}

\begin{proof}
In the preprocessing phase, 
we construct an $\mathcal{H}(\log w', w')$
for a signature encoding $\mathcal{G} = (\Sigma, \mathcal{V}, \mathcal{D}, S)$ of size $w'$ 
such that \\
$\id{\val{X_{1}}}, \ldots ,\id{\val{X_n}} \in \mathcal{V}$ 
using Lemma~\ref{lem:SLPSignatureEncoding}, spending
$O(n \log \log n \log N \log^* M)$ time, where $w' = O(n \log N \log^* M)$. 
Next we construct a compacted trie of size $O(m)$ that represents the $m$ patterns, which we denote by \emph{$\mathit{PTree}$ (pattern tree)}.
Formally, each non-root node of $\mathit{PTree}$ represents either a pattern or 
the longest common prefix of some pair of patterns.
$\mathit{PTree}$ can be built by using $\LCPQ$ of Theorem~\ref{theo:faster_LCP} in $O(m \log m \log N)$ time.
We let each node have its string depth, and the pointer to its deepest ancestor node that represents a pattern if such exists.
Further, we augment $\mathit{PTree}$ with a data structure for level ancestor queries so that 
we can locate any prefix of any pattern, designated by a pattern and length, in $\mathit{PTree}$ in $O(\log m)$ time
by locating the string depth by binary search on the path from the root to the node representing the pattern.
Supposing that we know the longest prefix of $T[i..|T|]$ that is also a prefix of one of the patterns, which we call the \emph{max-prefix for $i$},
$\mathit{PTree}$ allows us to output $\mathit{occ}_i$ patterns occurring at position $i$ in $O(\log m + \mathit{occ}_i)$ time.
Hence, the pattern matching problem reduces to computing the max-prefix for every position.

In the pattern matching phase, our algorithm processes $T$ in a streaming fashion, i.e.,
each character is processed in increasing order and discarded before taking the next character.
Just before processing $T[j+1]$, 
the algorithm maintains a pair of signature $p$ and integer $l$
such that $\derive(p)[1..l]$ is the longest suffix of $T[1..j]$ that is also a prefix of one of the patterns.
When $T[j+1]$ comes, we search for the smallest position $i \in \{j-l+1, \dots, j+1\}$ such that there is a pattern whose prefix is $T[i..j+1]$.
For each $i \in \{j-l+1, \dots, j+1\}$ in increasing order,
we check if there exists a pattern whose prefix is $T[i..j+1]$ by binary search on a sorted list of $m$ patterns.
Since $T[i..j] = \derive(p)[i-j+l..l]$, $\LCEQ$ with $p$ can be used for comparing a pattern prefix and $T[i..j+1]$ (except for the last character $T[j+1]$),
and hence, the binary search is conducted in $O(\log m \log N \log^* M)$ time.
For each $i$, if there is no pattern whose prefix is $T[i..j+1]$,
we actually have computed the max-prefix for $i$, and then we output the occurrences of patterns at $i$.
The time complexity is dominated by the binary search, which takes place $O(|T|)$ times in total.
Therefore, the algorithm runs in $O(|T|\log m \log N \log^* N + \mathit{occ})$ time.

By the way, one might want to know occurrences of patterns as soon as they appear
as Aho-Corasick automata do it by reporting the occurrences of the patterns by their ending positions.
Our algorithm described above can be modified to support it without changing the time and space complexities.
In the preprocessing phase, we additionally compute \emph{$\mathit{RPTree}$ (reversed pattern tree)},
which is analogue to $\mathit{PTree}$ but defined on the reversed strings of the patterns,
i.e., $\mathit{RPTree}$ is the compacted trie of size $O(m)$ that represents the reversed strings of the $m$ patterns.
Let $T[i..j]$ be the longest suffix of $T[1..j]$ that is also a prefix of one of the patterns.
A suffix $T[i'..j]$ of $T[i..j]$ is called the \emph{max-suffix for $j$} iff
it is the longest suffix of $T[i..j]$ that is also a suffix of one of the patterns.
Supposing that we know the max-suffix for $j$,
$\mathit{RPTree}$ allows us to output $\mathit{eocc}_j$ patterns occurring with ending position $j$ in $O(\log m + \mathit{eocc}_j)$ time.
Given a pair of signature $p$ and integer $l$ such that $T[i..j] = \derive(p)[1..l]$,
the max-suffix for $j$ can be computed in $O(\log m \log N \log^* N)$ time by binary search on a list of $m$ patterns sorted by their ``reversed'' strings
since each comparison can be done by ``leftward'' $\LCEQ$ with $p$.
Except that we compute the max-suffix for every position and output the patterns ending at each position,
everything else is the same as the previous algorithm, and hence, the time and space complexities are not changed.
\end{proof}

\clearpage
\bibliographystylex{splncs03}
\bibliographyx{ref}
\end{document}